\theoremstyle{plain}
\newtheorem{thm}{Theorem}
\newtheorem{lem}[thm]{Lemma}
\newtheorem{prop}[thm]{Proposition}
\theoremstyle{remark}
\theoremstyle{definition}
\newtheorem{defi}{Definition}
\newcommand{\sket}[1]{{\ensuremath{\lvert#1\rangle}}}
\newcommand{\lket}[1]{{\ensuremath{\left\lvert#1\right\rangle}}}
\newcommand{\ket}[1]{\mathchoice{\lket{#1}}{\sket{#1}}{\sket{#1}}{\sket{#1}}}
\newcommand{\sbra}[1]{{\ensuremath{\langle#1\rvert}}}
\newcommand{\lbra}[1]{{\ensuremath{\left\langle#1\right\rvert}}}
\newcommand{\bra}[1]{\mathchoice{\lbra{#1}}{\sbra{#1}}{\sbra{#1}}{\sbra{#1}}}
\newcommand{\ident}{\id}
\DeclareMathOperator{\tr}{\mathrm{Tr}}
\newcommand{\demi}{\frac{1}{2}}
\newcommand{\C}{\mathbb{C}}
\newcommand{\HH}{\mathcal{H}}
\newcommand{\cS}{\mathcal{S}}
\newcommand{\cE}{\mathcal{E}}
\renewcommand{\epsilon}{\ensuremath\varepsilon}
\renewcommand{\phi}{\ensuremath{\varphi}}
\DeclareMathOperator{\Tr}{Tr}
\DeclareMathOperator{\T}{T}
\DeclareMathOperator{\supp}{supp}
\DeclareMathOperator{\id}{id}
\newcommand{\yb}{Y^{\frac{1-\alpha}{2\alpha}}}
\newcommand{\aaa}{\frac{1-\alpha}{2\alpha}}
\newcommand{\onehalf}{\frac{1}{2}}
\newcommand{\Dnew}{\widetilde{D}}
\newcommand{\Hnew}{\widetilde{H}}
\newcommand{\Dold}{D}
\newcommand{\Hold}{H}
\begin{document}

\title{\LARGE On quantum R\'enyi entropies: a new generalization and some properties}

\author{Martin M\"uller-Lennert}
\affiliation{Department of Mathematics, ETH Zurich, 8092 Z\"urich, Switzerland}
\author{Fr\'ed\'eric Dupuis}
\affiliation{Department of Computer Science, Aarhus University, 8200 Aarhus, Denmark}
\author{Oleg~Szehr}
\affiliation{Department of Mathematics, Technische Universität M\"unchen, 85748 Garching, Germany}
\author{Serge Fehr}
\affiliation{CWI (Centrum Wiskunde \& Informatica), 1090 Amsterdam, The Netherlands}
\author{Marco Tomamichel}
\affiliation{Centre for Quantum Technologies, National University of Singapore, Singapore 117543, Singapore}


\begin{abstract}
The R\'enyi entropies constitute a family of information measures that generalizes the well-known Shannon entropy, inheriting many of its properties. They appear in the form of unconditional and conditional entropies, relative entropies or mutual information, and have found many applications in information theory and beyond. Various generalizations of R\'enyi entropies to the quantum setting have been proposed, most prominently Petz's quasi-entropies and Renner's conditional min-, max- and collision entropy. However, these quantum extensions are incompatible and thus unsatisfactory.
We propose a new quantum generalization of the family of R\'enyi entropies that contains the von Neumann entropy, min-entropy, collision entropy and the max-entropy as special cases, thus encompassing most quantum entropies in use today. We show several natural properties for this definition, including data-processing inequalities, a duality relation, and an entropic uncertainty relation.   
\end{abstract}

\maketitle

\section{Introduction}

The Shannon entropy~\cite{shannon48} and related measures, like mutual information and relative entropy (also known as Kullback-Leibler divergence), capture many operational quantities in information and communication theory. However, in non-asymptotic or non-ergodic settings, where the law of large numbers does not readily apply, 
other entropy measures typically take over, for example the min-, the max-, or the collision entropy. 
The R\'enyi entropies~\cite{renyi61} nicely unify these different and isolated measures:  there is one (parameterized) entropy measure, the R{\'e}nyi divergence, from which the other measures can be naturally derived. 
Not only is this appealing from a theoretical perspective, but the R\'enyi entropies also have found various applications and are widely used as a technical tool in information theory.

Most of the above mentioned information measures have been generalized to the quantum setting. Most notably, the von Neumann entropy, Renner's (conditional) min- and max-entropies~\cite{renner05}, and a family of R\'enyi relative entropies derived from Petz's quasi-entropy~\cite{petz86} are well-studied and have found various applications. 
Nevertheless, the situation in the quantum setting is much less satisfactory in that these generalizations are (partly) incompatible with each other. For instance, whereas the classical conditional min-entropy can be naturally derived from the R{\'e}nyi divergence, this does not hold for their quantum counterparts. 

To this end, we propose a new quantum generalization of the family of R\'enyi entropies. (To the best of our knowledge, this generalization was first discussed in~\cite{mytutorial12} by one of the present authors.) Specifically, we propose a new definition for the quantum R{\'e}nyi divergence $\Dnew_{\alpha}(\rho\|\sigma)$ for $\alpha \in [\demi,1) \cup (1,\infty)$. From our new definition, we can naturally derive a new notion of conditional R\'enyi entropy $\Hnew_{\alpha}(A|B)_{\rho}$. This new notion contains the various isolated (conditional) entropy measures as special cases. 
Thus, as in the classical case, we have one entropy measure from which most entropies in use today can be naturally derived as special or limiting cases.

We believe that our quantum R\'enyi entropies constitutes a powerful generalization of the classical R\'enyi entropies that will find significant application in quantum information theory. This is supported by the fact that these entropies have several natural properties, which we briefly discuss here. 

\begin{itemize}
\item {\em Data processing}: The divergence $\Dnew_\alpha(\rho\|\sigma)$ can only decrease when acting on $\rho$ and $\sigma$ (by a completely positive trace preserving map). Similarly, the conditional entropy $\Hnew_{\alpha}(A|B)_{\rho}$ can only increase when acting on $B$. 

\item {\em Monotonicity in $\alpha$}: The divergence $\Dnew_{\alpha}(\rho\|\sigma)$ is monotonically increasing, and, correspondingly, $\Hnew_{\alpha}(A|B)_{\rho}$ is monotonically decreasing in $\alpha$. 

\item {\em Duality}: For any pure state $\rho_{ABC}$, and for $\alpha$ and $\beta$ with $\frac{1}{\alpha} + \frac{1}{\beta} = 2$, the conditional entropy satisfies $\Hnew_{\alpha}(A|B)_{\rho} = -\Hnew_{\alpha}(A|C)_{\rho}$. 
\end{itemize}

We give proofs that our quantum R\'enyi entropies satisfy these properties; for the data processing property, our proof only applies to $\alpha \in (1,2]$, but it does hold for arbitrary $\alpha \geq \frac12$ (see below).

\paragraph*{Related work:} This paper is an update to~\cite{lennert13a}, which introduces these new entropies, but which has several of the properties of the new entropies stated as conjectures.
The main contribution of this update is that we have resolved the monotonicity in $\alpha$ and the duality for the conditional entropy. Inspired by~\cite{lennert13a}, and concurrent to our work our conjectures were approached in two further independent contributions: Frank and Lieb~\cite{frank13} prove data-processing for arbitrary $\alpha\geq\frac 12$, and Beigi~\cite{beigi13new} proves monotonicity in $\alpha$, duality for the conditional entropy, and data-processing for $\alpha > 1$. As such, all conjectures of~\cite{lennert13a} have now been resolved.

We also point out that after the completion of~\cite{martinthesis} and concurrently with~\cite{lennert13a}, Wilde, Winter and Yang~\cite{wilde13} employed the same notion of quantum R\'enyi divergence under the name ``sandwiched quantum R\'enyi relative entropy'', and they independently achieved some of the results presented in~\cite{lennert13a}. 
In addition, their work provides a first application of the R\'enyi divergence to solve an important open problem in quantum information theory.
Most recently, Mosonyi and Ogawa~\cite{MO13} have found an operational interpretation of the quantum R\'enyi divergence for $\alpha > 1$ as a generalized cutoff rate in the strong converse problem of hypothesis testing.

\paragraph*{Outline:} We first motivate our definition of quantum R{\'e}nyi divergence in Section~\ref{sec:axiomatic} and then discuss its properties in Section~\ref{sec:overview-prop}. In Section~\ref{sec:overview-cond} we consider the resulting notion of conditional R\'enyi entropies and examine their properties. All proofs are deferred to Section~\ref{sec:proofs}.

\section{An Axiomatic Approach to Quantum R\'enyi Entropies}
\label{sec:axiomatic}

\subsubsection{Quantum R\'enyi Entropies}

Alfr\'ed R\'enyi, in his seminal 1961 paper~\cite{renyi61}, based on previous work by Feinstein and Fadeev, 
investigated an axiomatic approach to derive the Shannon entropy~\cite{shannon48}. He found that five natural requirements for functionals on a probability space single out the Shannon entropy, and by relaxing one of these requirements, he found a family of entropies now named after him. The requirements can be readily generalized to the quantum setting.
For this purpose, let us denote by $\cS$ the set of sub-normalized quantum states, i.e.\ $\rho \in \cS$ is positive semi-definite (denoted $\rho \geq 0$) and has trace $\tr[\rho] \in (0, 1]$.  For our definitions, we follow the convention that for any function $f$ diverging at $0$ we set $f(0) = 0$. In particular, for $\rho \geq 0$, $\log \rho$ and $\rho^{-1}$ are only evaluated on their support.

We are interested in a functional $H(\cdot): \cS \to \mathbb{R}$ satisfying the following properties:
\begin{enumerate}
\item \textbf{Continuity:} $H(\rho)$ is continuous in $\rho \in \cS$.
\item \textbf{Unitary invariance:} $H(\rho) = H(U\rho U^{\dagger})$ for any unitary $U$.
\item \textbf{Normalization:} $H(\frac12) = \log 2$.
\item \textbf{Additivity:} $H(\rho \otimes \tau) = H(\rho) + H(\tau)$ for all $\rho,\tau \in \cS$.
\item \textbf{Arithmetic Mean:} $H( \rho \oplus \tau ) = \frac{\tr[\rho]}{\tr[\rho+\tau]} \cdot H( \rho) + \frac{\tr[\tau]}{\tr[\rho+\tau]} \cdot H( \tau)$ for $\rho,\tau \geq 0$ with $\tr[\rho+\tau] \leq 1$.
\end{enumerate}

Indeed, the von Neumann entropy $H(\rho) := -\tr\big[\rho \log\rho\big]/\tr[\rho]$ satisfies (i)-(v). On the other hand, following R\'enyi's argument~\cite[Thm.\ 1]{renyi61}, we find that (i)-(iv) enforce $H(\lambda) = \log \frac{1}{\lambda}$ for any $\lambda \in (0,1]$, the function thus evaluates what Shannon called the \emph{surprisal} of an event occurring with probability $\lambda$. Particularly, the normalization~(iii) enforces that the logarithm is taken with regards to a particular basis that we leave unspecified here. (Moreover, throughout this paper, $\exp$ is the inverse of $\log$.) Property~(v) then ensures that the arithmetic mean of the surprisal is considered.
We thus find that the Shannon entropy is the unique functional satisfying the classical specializations of (i)-(v) and its unique quantum generalization with unitary invariance~(ii) is the von Neumann entropy.

However, there is no a priori reason why one should only consider the arithmetic mean of the surprisal. R\'enyi thus replaced (v) with a different requirement, namely
\begin{enumerate}
\item[(v')] \textbf{General Mean:} There exists a continuous and strictly monotonic function $g$ such that, for $\rho,\tau \geq 0$ with $\tr[\rho+\tau] \leq 1$,
\begin{align*}
  H(\rho \oplus \tau) = g^{-1} \bigg( \frac{\tr[\rho]}{\tr[\rho+\tau]} \cdot g\big( H( \rho) \big) + \frac{\tr[\tau]}{\tr[\rho+\tau]} \cdot g \big( H( \tau) \big)\bigg) .
\end{align*}
\end{enumerate}
and shows~\cite[Thm.\ 2]{renyi61} that the \emph{R\'enyi entropy} of order $\alpha$ satisfies (i)-(iv) and (v') with $g_{\alpha}(x) = \exp\bigl( (1-\alpha) x\bigr)$. In the quantum setting, the \emph{R\'enyi entropy} of order $\alpha \in (0, 1) \cup (1, \infty)$ is 
given  as  
\begin{align}
  \Hold_{\alpha}(\rho) := \frac{1}{1-\alpha} \log \frac{ \tr[\rho^{\alpha}]}{\tr[\rho]} \,. \label{eq:renyi}
\end{align}

The following observations are obvious from the form of the function $g_{\alpha}(x) = \exp\big((1-\alpha)x\big)$ used to evaluate the average. The larger $\alpha$ is the more weight will be put on contributions with small surprisal. For $\alpha > 1$ contributions with less surprisal are preferred and for $\alpha < 1$ the opposite is true. From this follows that the entropies are monotonically decreasing for increasing~$\alpha$. We have $H_1(\rho) := \lim_{\alpha\nearrow 1} \Hold_{\alpha}(\rho) = \lim_{\alpha\searrow 1} \Hold_{\alpha}(\rho) = H(\rho)$ as a continuous extension. We can also extend the definition to the limit $\alpha\to\infty$, where we obtain the
 \emph{min-entropy}
\begin{align}
  H_{\min}(\rho) := \lim_{\alpha\to \infty} \Hold_{\alpha}(\rho) = - \log \| \rho \|  \label{eq:min},
\end{align}
where $\|\cdot\|$ denotes the operator norm. It is easy to verify that the min-entropy indeed satisfies (i)--(iv); however, the mean
property (v') must be generalized to allow for the relation $H_{\min}(\rho \oplus \sigma) = \min \{ H_{\min}(\rho),\, H_{\min}(\sigma) \}$~[45]. The Hartley entropy $H_0(\rho) := \lim_{\alpha \to 0} \Hold_{\alpha}(\rho)$ is sometimes defined but does not satisfy our stringent continuity condition~(i) as it jumps when the rank of $\rho$ changes. (More precisely, we expect $\lim_{\epsilon \searrow 0} H(\rho \oplus \epsilon) = H(\rho)$ for $\rho > 0$.)

\subsubsection{Quantum Divergences}

R\'enyi then applies this axiomatic approach to divergences or relative entropies, i.e.\ functionals $D(\cdot\|\cdot)$ that map a pair of operators $\rho,\sigma \geq 0$ with $\rho \neq 0$, $\sigma \gg \rho$ onto the real line. Here, $\sigma \gg \rho$ denotes the fact that $\sigma$ dominates $\rho$, i.e.\ that the kernel of $\sigma$ is contained in the kernel of $\rho$.
Again, the six axioms naturally translate to the quantum setting as follows:
\begin{enumerate}
\item[(I)] \textbf{Continuity:} $D(\rho\|\sigma)$ is continuous in $\rho,\sigma \geq 0$, wherever $\rho \neq 0$ and $\sigma \gg \rho$.
\item[(II)] \textbf{Unitary invariance:} $D(\rho\|\sigma) = D(U\rho U^{\dagger}\|U \sigma U^{\dagger})$ for any unitary $U$.
\item[(III)] \textbf{Normalization:} $D(1\|\frac12) = \log 2$.
\item[(IV)] \textbf{Order:} If $\rho \geq \sigma$ (i.e.\ if $\rho - \sigma \geq 0$ is positive semi-definite), then $D(\rho\|\sigma) \geq 0$. And, if $\rho \leq \sigma$, then $D(\rho\|\sigma) \leq 0$.
\item[(V)] \textbf{Additivity:} $D(\rho \otimes \tau\|\sigma \otimes \omega) = D(\rho\|\sigma) + D(\tau\|\omega)$ for all $\rho,\sigma,\tau,\omega \geq 0$ such that $\sigma \gg \rho, \omega \gg \tau$.
\item[(VI)] \textbf{General Mean:} There exists a continuous and strictly monotonic function $g$ such that, for all $\rho,\sigma,\tau,\omega \geq 0$ with $\rho \neq 0$, $\tau \neq 0$, $\sigma \gg \rho$, $\omega \gg \tau$, $\tr[\rho+\tau] \leq 1$ and $\tr[\sigma+\omega] \leq 1$,
\begin{align*}
  D(\rho \oplus \tau \| \sigma \oplus \omega) = g^{-1} \bigg( \frac{\tr[\rho]}{\tr[\rho+\tau]} \cdot g\big( D( \rho \|\sigma) \big) + \frac{\tr[\tau]}{\tr[\rho+\tau]} \cdot g \big( D( \tau \| \omega) \big)\bigg) .
\end{align*}
\end{enumerate}

Again, R\'enyi~\cite[Thm.~3]{renyi61} first shows that (I)--(V) imply $D(\lambda\|\mu) = \log \frac{\lambda}{\mu}$ for two scalars $\lambda, \mu > 0$, a quantity that is often referred to as the log-likelihood ratio. He then considers general continuous and strictly monotonic functions to define a mean in~(VI), as long as they are compatible with (I)--(V). 
Assuming for the moment that $\rho$ and $\sigma$ commute, R\'enyi shows that Properties~(I)--(VI) are satisfied only if $g$ is either linear or exponential \cite[Thm.~3]{renyi61}. The former leads to the so called \emph{Kullback-Leibler divergence} while the latter yields the \emph{R\'enyi divergence for $\alpha \in (0, 1) \cup (1, \infty)$}, which are respectively given as
\begin{align}
  D_1(\rho\|\sigma) =  \frac{\tr\big[\rho (\log \rho - \log \sigma) \big]}{\tr[\rho]} \qquad \textrm{and} \qquad 
  \Dold_{\alpha}(\rho\|\sigma) = \frac{1}{\alpha-1} \log \frac{ \tr[ \rho^{\alpha} \sigma^{1-\alpha} ] }{\tr [ \rho ]}   \label{eq:relative} \,. 
\end{align}

Note that values of $\alpha \leq 0$ are excluded only due to the continuity requirement, (I).
In the following, we are concerned with generalizing these divergences to non-commuting operators. We emphasize at this point that the axiomatic approach presented above does not uniquely determine a single quantum generalization of the R\'enyi divergence of order $\alpha$. In particular, the ordering of operators $\rho$ and $\sigma$ in $\Dold_{\alpha}(\rho\|\sigma)$ is not uniquely determined by the axioms, although Property (IV) enforces some constraints. Furthermore, the following observation is crucial. For commuting operators, R\'enyi's axioms imply, via the explicit expressions, that the divergences satisfy a data-processing inequality, i.e.\ the functional $D(\cdot\|\cdot)$ is contractive under application of trace-preserving completely positive maps to both arguments~(see, e.g.,~\cite{csiszar95}). It remains open whether this implication also holds for the non-commutative case. Instead, we find the following observation useful, which relates joint convexity~[46] resp.\ concavity and the data-processing inequality. It establishes that for functionals satisfying our axioms these properties are equivalent.

\begin{prop}
\label{pr:dp-jc}
  Let $D(\cdot\|\cdot)$ be a functional satisfying (I)--(VI) and let  $g$ be as in (VI). Then, the following two statements are equivalent.
  \begin{enumerate}
    \item[(1)] The functional $g(D(\cdot\|\cdot))$ is jointly convex on normalized states if $g$ is monotonically increasing or jointly concave on normalized states if $g$ is monotonically decreasing.
    \item[(2)] $D(\cdot\|\cdot)$ satisfies the data-processing inequality.
  \end{enumerate}
\end{prop}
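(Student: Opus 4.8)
The plan is to prove both implications by exhibiting the right completely positive trace-preserving (CPTP) maps and exploiting the mean property~(VI) on block-diagonal operators. First I would record a scaling invariance: writing a positive scalar $c$ as a one-dimensional state and combining additivity~(V) with the scalar formula $D(c\|c) = \log(c/c) = 0$ yields $D(c\rho\|c\sigma) = D(\rho\|\sigma)$. This lets me insert sub-normalized summands into~(VI) while reading off the divergences of their normalized versions, which is essential for matching the normalization convention in statement~(1).

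For the implication $(2)\Rightarrow(1)$, I would fix normalized pairs $(\rho_i,\sigma_i)$ with $\sigma_i \gg \rho_i$ and weights $p_i$, and form the classically-correlated states $\hat\rho = \sum_i p_i\,\rho_i \otimes \proj{i}$ and $\hat\sigma = \sum_i p_i\,\sigma_i \otimes \proj{i}$ on $\HH \otimes \HH'$, which in the block decomposition are exactly the direct sums $\bigoplus_i p_i\rho_i$ and $\bigoplus_i p_i\sigma_i$. These are normalized and satisfy $\hat\sigma \gg \hat\rho$. Iterating the mean property~(VI) and using scaling invariance gives $g\bigl(D(\hat\rho\|\hat\sigma)\bigr) = \sum_i p_i\, g\bigl(D(\rho_i\|\sigma_i)\bigr)$, the right-hand side of the desired inequality. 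The partial trace over $\HH'$ is a CPTP map sending $\hat\rho \mapsto \bar\rho := \sum_i p_i\rho_i$ and $\hat\sigma \mapsto \bar\sigma$, so data processing gives $D(\bar\rho\|\bar\sigma) \le D(\hat\rho\|\hat\sigma)$. Applying $g$ and invoking its monotonicity turns this into $g\bigl(D(\bar\rho\|\bar\sigma)\bigr) \le \sum_i p_i\, g\bigl(D(\rho_i\|\sigma_i)\bigr)$ for increasing $g$ (joint convexity), and into the reversed inequality for decreasing $g$ (joint concavity).

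The converse $(1)\Rightarrow(2)$ is where the real work lies. By Stinespring dilation every CPTP map factors as an isometric embedding---append an ancilla in a fixed state, then conjugate by a unitary---followed by a partial trace $\tr_B$; additivity~(V) and unitary invariance~(II) already render $D$ invariant under the first two steps, so it suffices to establish monotonicity under $\tr_B$. Here I would use the twirl: averaging over the $d_B^2$ discrete Heisenberg--Weyl unitaries $W_j$ on $B$ realizes the completely depolarizing channel, so $\frac{1}{d_B^2}\sum_j (\id_A \otimes W_j)\,\rho_{AB}\,(\id_A \otimes W_j)^{\dagger} = \rho_A \otimes \id_B/d_B$, and likewise for $\sigma_{AB}$. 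Feeding this uniform convex combination into joint convexity of $g\circ D$ (increasing $g$) and collapsing each term by unitary invariance yields $g\bigl(D(\rho_A \otimes \id_B/d_B \,\|\, \sigma_A \otimes \id_B/d_B)\bigr) \le g\bigl(D(\rho_{AB}\|\sigma_{AB})\bigr)$; monotonicity of $g$ together with additivity, which strips off the maximally mixed factor, then gives $D(\rho_A\|\sigma_A) \le D(\rho_{AB}\|\sigma_{AB})$. The jointly-concave/decreasing case runs identically with the two inequalities each reversed once, producing the same conclusion.

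I expect the main obstacle to be the bookkeeping around the domination condition $\sigma \gg \rho$ and the normalization constraint, since the convexity hypothesis in~(1) is assumed only on normalized states. I must verify that direct sums, partial traces and twirls all preserve domination---which they do, because the kernel of a sum of positive operators is the intersection of the individual kernels---and that every operator fed into the convexity inequality is genuinely a normalized state. A secondary point to check is that finite joint convexity over the $d_B^2$ twirl terms follows from the two-argument definition by induction, and that the depolarizing identity holds exactly for this finite unitary group, so that no continuity or compactness argument is required.
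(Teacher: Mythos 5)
Your proof is correct, and it is worth separating the two directions when comparing it to the paper. For $(2)\Rightarrow(1)$ you follow essentially the same route the paper takes: form block-diagonal states $\bigoplus_i p_i \rho_i$ and $\bigoplus_i p_i \sigma_i$, apply data processing to the CPTP map that erases the block structure (the partial trace over the classical register), and decompose the divergence of the direct sum using the mean property (VI) together with the scale invariance $D(c\rho\|c\sigma)=D(\rho\|\sigma)$, which you get from (V) plus the scalar formula and the paper gets from (V) and (IV); the paper does this with two blocks and you with finitely many, an immaterial difference. The genuine difference is in $(1)\Rightarrow(2)$: the paper does not prove this direction itself, but instead invokes the standard Uhlmann--Lindblad chain of results and points to Wolf's lecture notes (Thm.\ 5.16) for a proof, whereas you supply a self-contained argument: Stinespring reduces the data-processing inequality to monotonicity under the partial trace (appending an ancilla is absorbed by (V) and (IV), unitaries by (II)), and monotonicity under the partial trace follows from joint convexity by twirling over the $d_B^2$ Heisenberg--Weyl unitaries, which exactly implement the depolarizing channel on $B$, followed by additivity to strip off the maximally mixed factor. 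This is precisely the ``standard argument'' the paper outsources, so your proposal is a strict superset of the paper's own proof; what it buys is a fully self-contained proposition, at the cost of the bookkeeping you correctly identify --- extending two-term convexity to $d_B^2$ terms by induction, checking that direct sums, twirls and partial traces preserve the domination condition $\sigma \gg \rho$, and passing from normalized to general states via the additive trace correction, which is unaffected by trace-preserving maps.
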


We point out that the axioms enforce that $g(D(\cdot\|\cdot))$ is unitarily invariant and that $g(D(\rho \otimes \tau \| \sigma \otimes \tau)) = g(D(\rho\|\tau))$. The implication $(1) \!\implies\! (2)$ then follows by a now standard argument originating from the study of the relative entropy: It was shown by Uhlmann~\cite{UhlEndl,UhlRel} that convexity implies monotonicity under the partial trace. Lindblad discusses the concept of relative entropy in~\cite{LinCom2} and shows that it is monotone under the partial trace. In~\cite{LinCom} he
uses the Stinespring representation theorem~\cite{Stine} to show that this implies the data-processing inequality for quantum channels. Similarly, joint concavity implies contractivity of $-g(D(\cdot\|\cdot))$  and, thus, $D(\cdot\|\cdot)$. See also~\cite{MBRRev} for a review of the topic and \cite[Thm. 5.16]{wolf-ln} for a proof of the implication $(1) \!\implies\! (2)$.

We provide a proof that the converse is also true, namely that $(2) \!\implies\! (1)$ for functionals satisfying the above axioms.

\begin{proof}[Proof of $(2) \!\implies\! (1)$]
   
      Consider normalized operators $\rho, \sigma, \tau, \omega \geq 0$ and $\lambda \in [0, 1]$. Then, due to data-processing
   \begin{align*}
   D\big( \lambda \rho + (1-\lambda) \tau \big\| \lambda \sigma + (1-\lambda)\omega \big)
   \leq D\big( \lambda \rho \oplus (1-\lambda) \tau \big\| \lambda \sigma \oplus (1-\lambda)\omega \big)
   \end{align*}
   If $g$ is increasing, we find that
  \begin{align*}
     &g\big(D\big( \lambda \rho + (1-\lambda) \tau \big\| \lambda \sigma + (1-\lambda)\omega \big)\big)\\
   &\qquad \leq g\big(D\big( \lambda \rho \oplus (1-\lambda) \tau \big\| \lambda \sigma \oplus (1-\lambda)\omega\big) \big)  \\
   &\qquad = \lambda {g}\big(D(\lambda \rho\|\lambda \sigma)\big) + (1-\lambda) {g}\big(D( (1-\lambda) \tau \| (1-\lambda)\omega )\big) \\
   &\qquad = \lambda {g}\big(D(\rho\|\sigma)\big) + (1-\lambda) {g}\big(D(\tau \| \omega)\big) \,,
  \end{align*}   
  where we used property~(VI) for the first equality and (V) and (IV) for the last. It follows that $g(D(\cdot\|\cdot)$ is jointly convex. An analogous argument yields joint concavity if $g$ is decreasing.
\end{proof}

The Kullback-Leibler divergence can readily be extended to the non-commuting quantum setting where it is usually called \emph{relative entropy}.

\begin{defi}[Quantum Relative Entropy]
  Let $\rho, \sigma \geq 0$ with $\rho \neq 0$. The \emph{quantum relative entropy} of $\rho$ and $\sigma$ is defined as
  \begin{align*}
    \Dold_1(\rho\|\sigma) := \begin{cases}  \frac{1}{\tr[\rho]} \tr\big[\rho (\log \rho - \log \sigma) \big] & \textrm{if}\ \sigma \gg \rho \\ \infty & \textrm{if}\ \sigma \not\gg \rho \end{cases} \quad.
  \end{align*}
\end{defi}
We refer to \cite{MBRRev} for a review of properties of this quantity. In particular, the definition in~\eqref{eq:relative} satisfies properties (I)--(VI) for non-commuting $\rho$ and $\sigma$. 

For the R\'enyi relative entropy~\eqref{eq:relative}, we note again that the ordering of the operators $\rho$ and $\sigma$ is relevant and \emph{not unique.} This has been noted, for example, by Ogawa and Hayashi~\cite{ogawa04}.
The expression $\Dold_{\alpha}(\cdot\|\cdot)$ with the ordering as in~\eqref{eq:relative}, for general non-commuting $\rho$ and $\sigma$, has been considered as a quantum generalization of the relative R\'enyi entropy. It was studied for example in \cite{MD09,HMO08,koenig09b,tomamichel08,ogawa00} and the term ``generalized R\'{e}nyi relative entropy'' occurs for example in \cite{MD09}. $\Dold_{\alpha}(\cdot\|\cdot)$ is related to Petz's quasi-entropies~\cite{petz86}, i.e.\ $g(\Dold_{\alpha}(\cdot\|\cdot))$ is the quasi-entropy corresponding to the function $t\mapsto t^{\alpha}$, and it satisfies our axioms as well as data-processing in the range $\alpha \in (0, 1) \cup (1, 2)$. The latter follows from the operator concavity resp.\ convexity of $t \mapsto t^{\alpha}$ in this range.

$\Dold_{\alpha}(\cdot\|\cdot)$ has proven to be a useful tool in some derivations (see, e.g.,~\cite{koenig09b,tomamichel08,ogawa00}) and it also plays a prominent role in quantum hypothesis testing (see, e.g.,~\cite{audenaert12} and references therein). In particular, the Chernoff and Hoeffding distances (cf.~\cite[Thm.~1.1 and Thm.~1.3]{audenaert12}) can be seen as optimizations of a function $\alpha \mapsto f(\alpha, \Dold_{\alpha})$ over a range of $\alpha$.
In~\cite{MH11}, Mosonyi and Hiai obtain an operational interpretation for $\Dold_{\alpha}(\cdot\|\cdot)$ as a generalized cutoff rate for quantum hypothesis testing.
However, for $\alpha > 1$, it has found little operational significance in quantum information theory so far. The min-, max- and collision entropies occurring in various operational scenarios in quantum information theory (see~\cite{mythesis,renner05} and references therein for an overview) are not specializations of $\Dold_{\alpha}(\cdot\|\cdot)$ for  any value of $\alpha$~\cite[Sec.~2]{tomamichel08} and~\cite[App.~B.2]{mythesis}.
Moreover, for $\alpha > 2$, it is easy to verify that the quantity $g(\Dold_{\alpha}(\cdot\|\cdot))$ is neither convex nor concave in the second argument and thus does not satisfy the data-processing inequality according to Proposition~\ref{pr:dp-jc}.

In the following, we consider an alternative generalization of the R\'enyi divergence.

\section{Overview of Results}
\label{sec:overview}

\subsection{A New Quantum R\'enyi Divergence}
\label{sec:overview-prop}

In this work, we propose a different non-commutative generalization of the R\'enyi divergence. For better readability, all proofs are deferred to Section~\ref{sec:proofs-axioms}.

\begin{defi}[Quantum R\'enyi Divergence]\label{df:renyi}
  Let $\rho,\sigma \geq 0$ with $\rho \neq 0$. Then, for any $\alpha \in (0, 1) \cup (1, \infty)$, the \emph{order-$\alpha$ R\'enyi divergence} of $\rho$ and $\sigma$ is defined as
  \begin{align*}
    \Dnew_{\alpha}(\rho\|\sigma) := \begin{cases} \frac{1}{\alpha-1} \log \bigg( \frac{1}{\tr[\rho]} \tr\Big[ \big( \sigma^{\frac{1-\alpha}{2\alpha}} \rho \sigma^{\frac{1-\alpha}{2\alpha}}\big)^{\alpha} \Big] \bigg)& \textrm{if}\ \rho \not\perp \sigma \land ( \sigma \gg \rho \lor \alpha < 1 ) \\ \infty & \textrm{else} \end{cases}\quad .
  \end{align*}
\end{defi}
  Here, $\rho \perp \sigma$ denotes the condition that $\rho$ and $\sigma$ are orthogonal (in particular $0 \perp \rho$ for all $\rho \geq 0$). Note that $\sigma \gg \rho$ implies $\rho \not\perp \sigma$, but non-orthogonality is in fact sufficient to make sure the quantity is finite when $\alpha < 1$.

Let us first verify that this definition indeed satisfies~(I)--(VI). 
\begin{thm} \label{thm:axioms}
  Definition~\ref{df:renyi} satisfies Properties~(I)--(VI) for $\alpha \in [\frac12, 1) \cup (1, \infty)$.
\end{thm}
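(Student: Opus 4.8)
The plan is to work throughout with the unnormalized quantity $Q_\alpha(\rho\|\sigma) := \tr\big[\big(\sigma^{\frac{1-\alpha}{2\alpha}}\rho\,\sigma^{\frac{1-\alpha}{2\alpha}}\big)^{\alpha}\big]$, so that $\Dnew_\alpha(\rho\|\sigma) = \frac{1}{\alpha-1}\log\big(Q_\alpha(\rho\|\sigma)/\tr[\rho]\big)$ wherever the quantity is finite. The single most useful fact is the eigenvalue identity $Q_\alpha(\rho\|\sigma) = \tr\big[\big(\rho^{1/2}\sigma^{\frac{1-\alpha}{\alpha}}\rho^{1/2}\big)^{\alpha}\big]$, which holds because $\sigma^s\rho\,\sigma^s = M\mdag M$ and $\rho^{1/2}\sigma^{2s}\rho^{1/2}=MM\mdag$ for $M=\rho^{1/2}\sigma^{s}$ with $s=\frac{1-\alpha}{2\alpha}$, and $M\mdag M$, $MM\mdag$ share the same nonzero spectrum. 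With these in hand, several axioms are immediate. Unitary invariance~(II) follows from $f(U\sigma U\mdag)=Uf(\sigma)U\mdag$ and cyclicity of the trace; normalization~(III) is the one-dimensional computation $Q_\alpha(1\|\tfrac12)=(\tfrac12)^{1-\alpha}$, whence $\Dnew_\alpha(1\|\tfrac12)=\log 2$; and additivity~(V) follows since $(\sigma\otimes\omega)^s=\sigma^s\otimes\omega^s$ makes $Q_\alpha$ multiplicative over tensor products, $Q_\alpha(\rho\otimes\tau\|\sigma\otimes\omega)=Q_\alpha(\rho\|\sigma)\,Q_\alpha(\tau\|\omega)$, turning the logarithm into a sum.

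For the mean property~(VI) the same block structure gives additivity of $Q_\alpha$ over direct sums, $Q_\alpha(\rho\oplus\tau\|\sigma\oplus\omega)=Q_\alpha(\rho\|\sigma)+Q_\alpha(\tau\|\omega)$, since $\sigma\oplus\omega$ raised to any power stays block diagonal. The correct averaging function is $g_\alpha(x)=\exp\big((\alpha-1)x\big)$, which is continuous and strictly monotonic (increasing for $\alpha>1$, decreasing for $\alpha<1$) and satisfies $g_\alpha(\Dnew_\alpha(\rho\|\sigma))=Q_\alpha(\rho\|\sigma)/\tr[\rho]$. Applying $g_\alpha$ to the claimed identity then reduces~(VI) to the elementary observation that $\frac{Q_\alpha(\rho\|\sigma)+Q_\alpha(\tau\|\omega)}{\tr[\rho]+\tr[\tau]}$ equals $\frac{\tr[\rho]}{\tr[\rho]+\tr[\tau]}\frac{Q_\alpha(\rho\|\sigma)}{\tr[\rho]}+\frac{\tr[\tau]}{\tr[\rho]+\tr[\tau]}\frac{Q_\alpha(\tau\|\omega)}{\tr[\tau]}$, i.e.\ to the direct-sum additivity just established.

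The heart of the argument, and the step that forces the restriction $\alpha\geq\demi$, is the order property~(IV). Here I would show that the second argument is monotone: using the eigenvalue form with exponent $r:=\frac{1-\alpha}{\alpha}$, the map $\sigma\mapsto\sigma^{r}$ is operator monotone when $r\in(0,1]$ and operator anti-monotone when $r\in[-1,0)$; the congruence $X\mapsto\rho^{1/2}X\rho^{1/2}$ preserves operator order; and $\tr\big[(\cdot)^{\alpha}\big]$ is monotone under operator order by Weyl's eigenvalue monotonicity together with $t\mapsto t^\alpha$ increasing. Chaining these, $\sigma\mapsto Q_\alpha(\rho\|\sigma)$ is increasing for $\alpha\in[\demi,1)$ and decreasing for $\alpha>1$. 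Comparing against the reference value $Q_\alpha(\rho\|\rho)=\tr[\rho]$ then gives, for $\rho\geq\sigma$, that $Q_\alpha(\rho\|\sigma)\leq\tr[\rho]$ when $\alpha<1$ and $Q_\alpha(\rho\|\sigma)\geq\tr[\rho]$ when $\alpha>1$; in either case the sign of $\frac{1}{\alpha-1}$ makes $\Dnew_\alpha(\rho\|\sigma)\geq0$, and the case $\rho\leq\sigma$ is symmetric. The decisive point is that operator (anti-)monotonicity of $t\mapsto t^{r}$ requires $|r|\leq1$: for $\alpha>1$ one has $r\in(-1,0)$ automatically, but for $\alpha<1$ the condition $r=\frac{1-\alpha}{\alpha}\leq1$ is exactly $\alpha\geq\demi$. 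This is the main obstacle—without $\alpha\geq\demi$ the key monotonicity tool is unavailable. (When the finiteness hypotheses of Definition~\ref{df:renyi} fail, e.g.\ $\sigma\not\gg\rho$ with $\alpha>1$ under $\rho\geq\sigma$, the divergence is $+\infty$ and the inequality holds trivially.)

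Finally, continuity~(I) is the routine but technical step: where $\sigma$ is invertible, the functional calculus $\sigma\mapsto\sigma^s$, matrix multiplication, the power $(\cdot)^\alpha$, and the trace are all continuous, and the argument of the logarithm stays strictly positive as long as $\rho\neq0$, so $\Dnew_\alpha$ is continuous there. The remaining care concerns the approach to singular $\sigma$, handled by restricting to $\supp(\sigma)$ and invoking the convention $f(0)=0$, using that for $\alpha<1$ the nonnegative exponent $s>0$ keeps $\sigma^{s}$ continuous up to the boundary while non-orthogonality $\rho\not\perp\sigma$ keeps the trace positive.
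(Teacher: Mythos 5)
Your handling of (II)--(VI) is correct and is essentially the paper's own proof: the same identity between the nonzero spectra of $\sigma^{\frac{1-\alpha}{2\alpha}}\rho\,\sigma^{\frac{1-\alpha}{2\alpha}}$ and $\rho^{\demi}\sigma^{\frac{1-\alpha}{\alpha}}\rho^{\demi}$, multiplicativity of the trace term over tensor products for (V), additivity over direct sums combined with $g_\alpha(x)=\exp\bigl((\alpha-1)x\bigr)$ for (VI), and, for (IV), the same chain of operator (anti-)monotonicity of $t\mapsto t^{\frac{1-\alpha}{\alpha}}$, congruence by $\rho^{\demi}$, and monotonicity of $\tr\bigl[(\cdot)^\alpha\bigr]$, which correctly identifies $\frac{1-\alpha}{\alpha}\le 1$, i.e.\ $\alpha\ge\demi$, as the binding constraint. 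One small caveat for (IV) when $\alpha>1$: anti-monotonicity of negative powers fails in general for generalized inverses (e.g.\ $\id\ge\proj{0}$, but taking the power $-\demi$ of both sides reverses nothing), so you should first dispose of the case $\sigma\not\gg\rho$, where $\Dnew_\alpha(\rho\|\sigma)=\infty$, and then observe that $\rho\ge\sigma$ together with $\sigma\gg\rho$ forces $\supp\rho=\supp\sigma$, so that all powers can be taken on the common support where the argument is legitimate; the paper makes the same implicit step.

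The genuine gap is continuity (I) for $\alpha>1$. For $\alpha<1$ your argument is fine: the exponent $\frac{1-\alpha}{2\alpha}$ is positive, $\sigma\mapsto\sigma^{\frac{1-\alpha}{2\alpha}}$ is continuous on all positive semi-definite operators, and non-orthogonality keeps the argument of the logarithm positive on the domain of the axiom. But for $\alpha>1$ the exponent is negative and the definition invokes the generalized inverse, which is \emph{not} a continuous function of $\sigma$: it jumps when the rank of $\sigma$ drops, and along sequences $\sigma_n\to\sigma$ with shrinking support the factor $\sigma_n^{\frac{1-\alpha}{2\alpha}}$ blows up, so whether the trace term converges depends delicately on the interplay with $\rho_n$. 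Your phrase ``handled by restricting to $\supp(\sigma)$ and invoking the convention $f(0)=0$'' restates Definition~\ref{df:renyi}; it does not show that this extension is continuous, which is exactly what axiom (I) demands at rank-deficient $\sigma$. This is precisely the content of the paper's Lemma~\ref{prop:dalphalimittoll}: decomposing with respect to $\supp Y\oplus\ker Y$, it shows that $\Dnew_{\alpha}(X\|Y)$ as defined coincides with $\lim_{\xi\searrow 0}\Dnew_{\alpha}(X\|Y+\xi\id)$, and moreover that when $\supp Y\nsupseteq\supp X$ the regularized quantity diverges to $+\infty$ (driven by the prefactor $\xi^{1-\alpha}$), consistently with the value $\infty$ assigned by the definition. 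Without an argument of this kind --- a proof that the generalized-inverse definition is the continuous extension of the restriction to $\sigma>0$ --- the continuity claim for $\alpha>1$ is unsupported, and that is the one part of Theorem~\ref{thm:axioms} that requires real technical work.
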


In the following subsections we discuss several properties of $\Dnew_{\alpha}(\cdot\|\cdot)$.

\subsubsection{First properties}

The notion of \emph{divergence} requires that the quantity be positive definite, which \emph{for commuting $\rho$ and $\sigma$} is well-known~\cite{csiszar95}. We can show the following statement.

\begin{thm}
  \label{thm:positive}
  Let $\rho, \sigma \geq 0$, $\rho \neq 0$ and $\tr[\rho] \geq \tr[\sigma]$. Then,
  we have $\Dnew_{\alpha}(\rho\|\sigma) \geq 0$. Furthermore if $\rho = \sigma$ then $\Dnew_{\alpha}(\rho\|\sigma) = 0$.
\end{thm}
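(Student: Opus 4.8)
The plan is to prove the slightly stronger normalization-type bound
$\Dnew_\alpha(\rho\|\sigma)\ge\log\frac{\tr[\rho]}{\tr[\sigma]}$, from which the claim $\Dnew_\alpha(\rho\|\sigma)\ge0$ follows immediately under the hypothesis $\tr[\rho]\ge\tr[\sigma]$. When the quantity is infinite (the ``else'' branch of Definition~\ref{df:renyi}) there is nothing to prove, so I restrict to the finite case. Writing $s=\frac{1-\alpha}{2\alpha}$ and $A=\sigma^{s}\rho\sigma^{s}\ge0$, the divergence is $\Dnew_\alpha(\rho\|\sigma)=\frac{1}{\alpha-1}\log\big(\tr[A^\alpha]/\tr[\rho]\big)$, so after taking logarithms and clearing the sign of $\frac{1}{\alpha-1}$ the target becomes a single Schatten-type estimate on $\tr[A^\alpha]$: I must show $\tr[A^\alpha]\ge\tr[\rho]^\alpha\,\tr[\sigma]^{1-\alpha}$ when $\alpha>1$, and $\tr[A^\alpha]\le\tr[\rho]^\alpha\,\tr[\sigma]^{1-\alpha}$ when $\alpha<1$.

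The engine is H\"older's inequality for the trace. The key observation is the identity $\tr[A\,\sigma^{-2s}]=\tr[\sigma^{s}\rho\sigma^{-s}]=\tr[\rho\,P_\sigma]$, where $P_\sigma$ projects onto the support of $\sigma$ and $-2s=\frac{\alpha-1}{\alpha}$; note $\tr[\rho P_\sigma]\le\tr[\rho]$, with equality whenever $\sigma\gg\rho$ (which is forced for $\alpha>1$ in the finite case). With conjugate exponents $p=\alpha$ and $q=\frac{\alpha}{\alpha-1}$ (so $\frac1p+\frac1q=1$ and $\sigma^{-2s\cdot q}=\sigma$), I would apply the trace H\"older inequality to $\tr[A\,\sigma^{-2s}]$. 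For $\alpha>1$ both exponents exceed $1$, so the ordinary inequality gives $\tr[\rho]=\tr[A\sigma^{-2s}]\le\tr[A^\alpha]^{1/\alpha}\,\tr[\sigma]^{(\alpha-1)/\alpha}$; for $\alpha<1$ one has $0<p<1$ and $q<0$, and the corresponding reverse H\"older inequality gives $\tr[\rho P_\sigma]\ge\tr[A^\alpha]^{1/\alpha}\,\tr[\sigma]^{(\alpha-1)/\alpha}$. Raising to the power $\alpha$ and using $\tr[\rho P_\sigma]\le\tr[\rho]$ in the second case, both cases yield the desired bound on $\tr[A^\alpha]$, hence $\Dnew_\alpha(\rho\|\sigma)\ge\log\frac{\tr[\rho]}{\tr[\sigma]}\ge0$.

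The equality statement is a direct computation: if $\rho=\sigma$ then $A=\rho^{2s+1}$ with $(2s+1)\alpha=\big(\tfrac{1-\alpha}{\alpha}+1\big)\alpha=1$, so $\tr[A^\alpha]=\tr[\rho]$ and $\Dnew_\alpha(\rho\|\rho)=\frac{1}{\alpha-1}\log1=0$. I expect the only genuine obstacle to be the $\alpha<1$ branch: I must invoke the reverse H\"older inequality in its noncommutative (trace) form and verify that it applies to the positive semi-definite operators $A$ and $\sigma^{-2s}$ after restricting to the support of $\sigma$ (on which $\sigma^{-2s}$ is positive definite), since here $\sigma$ need not dominate $\rho$. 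The $\alpha>1$ branch, by contrast, is just the standard trace H\"older inequality, and the care needed there is only bookkeeping of the domination condition $\sigma\gg\rho$ that guarantees finiteness together with the identity $\tr[\rho P_\sigma]=\tr[\rho]$.
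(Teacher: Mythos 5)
Your proof is correct, but it takes a genuinely different route from the paper's. The paper first splits off the normalization term $\log\frac{\tr[\rho]}{\tr[\sigma]}$ and then handles normalized states by reducing to the \emph{commutative} case: it applies the pinching map $\mathcal{E}_\sigma$ in the eigenbasis of $\sigma$ (Proposition~\ref{prop:pinching}, proved via the pinching inequality for Schatten norms when $\alpha>1$ and operator concavity of $t\mapsto t^\alpha$ when $\alpha<1$) and then inherits positivity from the classical R\'enyi divergence; the equality case $\rho=\sigma$ is obtained from the already-established Order axiom (IV) rather than by computation. You instead prove the quantitative bound $\Dnew_\alpha(\rho\|\sigma)\geq\log\frac{\tr[\rho]}{\tr[\sigma]}$ head-on via trace H\"older for $\alpha>1$ and reverse trace H\"older for $\alpha<1$, which is essentially the noncommutative transcription of the standard classical proof. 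Your support bookkeeping is right: for $\alpha>1$ finiteness forces $\sigma\gg\rho$ so $\tr[\rho P_\sigma]=\tr[\rho]$, while for $\alpha<1$ the estimate $\tr[\rho P_\sigma]\leq\tr[\rho]$ goes in the helpful direction. One remark that closes the only loose end you flag: the reverse H\"older step you worry about is exactly the paper's Lemma~\ref{lem:extended-norm-duality} instantiated at the feasible point $Z=\sigma/\tr[\sigma]$ (for $p=\alpha<1$ one gets $\|A\|_\alpha\leq\tr[A\,\sigma^{\frac{\alpha-1}{\alpha}}]\,\tr[\sigma]^{\frac{1-\alpha}{\alpha}}$, and the $\sup$ form gives the matching bound for $\alpha>1$), so no external noncommutative reverse-H\"older reference is needed and your argument can be completed entirely with in-paper machinery. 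As for what each approach buys: the paper's pinching argument reuses a proposition of independent interest and delegates the inequality to the known classical result, while yours is self-contained, yields the sharper normalization-aware bound explicitly, and avoids invoking classical positivity as a black box.
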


In fact, the latter statement can easily be extended to ``$\Dnew_{\alpha}(\rho\|\sigma) = 0$ 
  if and only if $\rho = \sigma$ for any $\rho,\sigma \geq 0$ with $\tr[\rho] \geq \tr[\sigma]$'' by using the data-processing inequality by Frank and Lieb~\cite{frank13}. If $\rho \neq \sigma$, there exists a measurement $\mathcal{M}$ that allows to distinguish between the two states and thus, for normalized $\rho$ and $\sigma$, we have
  \begin{align*}
    \Dnew_{\alpha}(\rho\|\sigma) \geq \Dnew_{\alpha}\big(\mathcal{M}(\rho)\|\mathcal{M}(\sigma)\big) > 0 ,
  \end{align*}
  where the latter inequality follows from the positive definiteness of the classical R\'enyi divergence~\cite{csiszar95}. This extends to $\rho$, $\sigma$ with general trace, as long as the condition $\tr[\rho] \geq \tr[\sigma]$ is satisfied.
  A similar extension was also shown by Beigi~\cite{beigi13new} and previously by Wilde~\emph{et al.}~\cite{wilde13} for $\alpha \in (1, 2]$.

It is also interesting to compare $\Dold_{\alpha}(\rho\|\sigma)$ to $\Dnew_{\alpha}(\rho\|\sigma)$ for fixed $\rho$ and $\sigma$. Wilde \emph{et al.}~\cite{wilde13} and Datta and Leditzky~\cite{datta13} observed that, as an immediate consequence of the Araki-Lieb-Thirring trace inequality~\cite{ariki90,liebthirring}, we have $\Dnew_{\alpha}(\rho \| \sigma) \leq \Dold_{\alpha}(\rho \| \sigma)$.

We also show the following natural property:
\begin{prop}\label{thm:bigger-sigma}
    Let $\rho$ with $\rho \neq 0$ and let $\sigma' \geq \sigma \geq 0$. Then, for $\alpha \in [\frac12, 1) \cup (1, \infty)$, we have $\Dnew_{\alpha}(\rho \| \sigma') \leq \Dnew_{\alpha}(\rho \| \sigma)$.
\end{prop}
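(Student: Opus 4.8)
The plan is to reduce $\Dnew_\alpha(\rho\|\sigma)$ to a form in which the dependence on $\sigma$ sits ``in the middle,'' where operator monotonicity applies cleanly, and then to keep track of the sign of the prefactor $\frac{1}{\alpha-1}$ in each regime of $\alpha$. Writing $s=\frac{1-\alpha}{2\alpha}$, the core quantity is $Q_\alpha(\rho\|\sigma):=\tr\big[(\sigma^{s}\rho\,\sigma^{s})^{\alpha}\big]$, and $\Dnew_\alpha(\rho\|\sigma)=\frac{1}{\alpha-1}\log\!\big(Q_\alpha(\rho\|\sigma)/\tr[\rho]\big)$ with $\tr[\rho]$ independent of $\sigma$. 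The first (and crucial) step is to rewrite $Q_\alpha$ using the fact that $XY$ and $YX$ have the same nonzero spectrum: with $X=\sigma^{s}$ and $Y=\rho\,\sigma^{s}$ the operator $\sigma^{s}\rho\,\sigma^{s}$ is cospectral with $\rho\,\sigma^{2s}$, which in turn is cospectral with the Hermitian $\rho^{1/2}\sigma^{2s}\rho^{1/2}$. Since the trace of an $\alpha$-th power depends only on nonzero eigenvalues (with $0^\alpha$ contributing $0$), we obtain
\begin{align*}
  Q_\alpha(\rho\|\sigma) = \tr\big[(\rho^{1/2}\sigma^{r}\rho^{1/2})^{\alpha}\big], \qquad r:=2s=\tfrac{1-\alpha}{\alpha}.
\end{align*}
This rewriting is essential: in the original form $\sigma$ enters through the sandwich $A\rho A$ with $A=\sigma^{s}$, and $A\mapsto A\rho A$ is \emph{not} operator monotone, whereas after the rewriting $\sigma$ enters only through the fixed conjugation $\rho^{1/2}(\cdot)\rho^{1/2}$, which \emph{is} order preserving.

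The second step treats $\alpha\in[\frac12,1)$, where $r\in(0,1]$. Since $t\mapsto t^{r}$ is operator monotone for $r\in[0,1]$, the hypothesis $\sigma'\geq\sigma$ gives $\sigma'^{\,r}\geq\sigma^{r}$, and conjugation by $\rho^{1/2}$ preserves order, so $\rho^{1/2}\sigma'^{\,r}\rho^{1/2}\geq\rho^{1/2}\sigma^{r}\rho^{1/2}$. As $t\mapsto t^{\alpha}$ is monotone increasing for $\alpha>0$, Weyl's monotonicity principle yields $\tr[X^{\alpha}]\geq\tr[Y^{\alpha}]$ whenever $X\geq Y\geq0$, hence $Q_\alpha(\rho\|\sigma')\geq Q_\alpha(\rho\|\sigma)$. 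Taking logarithms and multiplying by the \emph{negative} prefactor $\frac{1}{\alpha-1}$ reverses the inequality, giving $\Dnew_\alpha(\rho\|\sigma')\leq\Dnew_\alpha(\rho\|\sigma)$. No inverses occur here, so the finiteness condition $\rho\not\perp\sigma$ causes no trouble: if $\rho\perp\sigma'$ then $\supp\sigma\subseteq\supp\sigma'$ forces $\rho\perp\sigma$ and both sides are $+\infty$; if $\rho\not\perp\sigma'$ but $\rho\perp\sigma$ the right-hand side is $+\infty$ and the claim is trivial.

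The third step treats $\alpha\in(1,\infty)$, where $r\in(-1,0)$ and $t\mapsto t^{r}$ is operator \emph{antitone}. For positive definite $\sigma'\geq\sigma>0$ this gives $\sigma'^{\,r}\leq\sigma^{r}$, and the same conjugation-plus-trace-power argument yields $Q_\alpha(\rho\|\sigma')\leq Q_\alpha(\rho\|\sigma)$; now the prefactor $\frac{1}{\alpha-1}$ is \emph{positive}, so the inequality is preserved and again $\Dnew_\alpha(\rho\|\sigma')\leq\Dnew_\alpha(\rho\|\sigma)$. I expect the main obstacle to be the support issue when $\sigma$ is not of full rank, since for $r<0$ the power $\sigma^{r}$ lives only on $\supp\sigma$ and the clean antitonicity $\sigma'^{\,r}\leq\sigma^{r}$ is immediate only in the positive definite case. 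I would resolve this by regularization: replace $\sigma,\sigma'$ by $\sigma+\delta\II,\ \sigma'+\delta\II$, which are positive definite and still ordered, run the argument there, and let $\delta\searrow0$, invoking the continuity of $\Dnew_\alpha$ on the domain $\sigma\gg\rho$ from Property~(I). Here one first checks that $\sigma\gg\rho$ implies $\sigma'\gg\rho$, because $\supp\rho\subseteq\supp\sigma\subseteq\supp\sigma'$, so both sides are finite and in the continuity domain; and if $\sigma\not\gg\rho$ the right-hand side equals $+\infty$ and there is nothing to prove.
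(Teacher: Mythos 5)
Your proof is correct, and for $\alpha\in[\frac12,1)$ it coincides with the paper's: both arguments move $\sigma$ into the middle via cospectrality of $\sigma^{s}\rho\,\sigma^{s}$ and $\rho^{1/2}\sigma^{2s}\rho^{1/2}$, apply operator monotonicity of $x\mapsto x^{(1-\alpha)/\alpha}$ together with the order-preserving conjugation $\rho^{1/2}(\cdot)\rho^{1/2}$, and then use that the prefactor $\frac{1}{\alpha-1}$ is negative. For $\alpha>1$ you take a genuinely different route. Since $t\mapsto t^{\alpha}$ is not operator monotone in that range, the paper linearizes the Schatten norm through its duality formula (Lemma~\ref{lem:extended-norm-duality}, i.e.\ the auxiliary quantity of Definition~\ref{def:ueber-rel-entropy}), writing $\Dnew_{\alpha}(\rho\|\sigma)$ as a supremum over $\tau$ of terms $\tr\big[\rho^{1/2}\sigma^{(1-\alpha)/\alpha}\rho^{1/2}\tau^{(\alpha-1)/\alpha}\big]$, each of which is linear, hence order-preserving, in the middle operator; you instead invoke Weyl's monotonicity principle ($X\geq Y\geq0$ implies $\tr[X^{\alpha}]\geq\tr[Y^{\alpha}]$), which is more elementary and avoids the duality lemma entirely. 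Your explicit $\delta$-regularization is also a genuine improvement in rigor: the paper's $\alpha>1$ step applies the antitonicity $\sigma'^{\,r}\leq\sigma^{r}$, $r=\frac{1-\alpha}{\alpha}$, directly to generalized inverse powers, where it can fail as stated (e.g.\ $\sigma=\diag(1,0)\leq\sigma'=\id$ gives $\sigma^{r}=\diag(1,0)$, which does not dominate $\sigma'^{\,r}=\id$); the needed inequality survives only after sandwiching by $\rho^{1/2}$ with $\rho\ll\sigma$, which is exactly what your limit $\delta\searrow0$, justified by the continuity statement of Lemma~\ref{prop:dalphalimittoll}, establishes. What the paper's variational route buys in exchange is reuse of machinery: the same auxiliary quantity and norm-duality lemma also drive its proofs of monotonicity in $\alpha$ and of the duality theorem.
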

Note that this property is well known for the relative entropy where it follows directly from the operator monotonicity of the logarithm.

\subsubsection{Limits and Important Special Cases}

An analogue of the min-entropy can be defined as a divergence~\cite[Sec.~III]{datta08} and it can be conveniently expressed as a semi-definite optimization problem.
\begin{defi}[Quantum Relative Max-Entropy]\label{df:rel-max}
  Let $\rho, \sigma \geq 0$. The \emph{max relative entropy} of $\rho$ and $\sigma$ is defined as
\begin{align*}
  D_{\max}(\rho\|\sigma) := \inf \{ \lambda \in \mathbb{R} \,|\, \rho \leq \exp(\lambda) \sigma \} \,.
\end{align*}
\end{defi}
Note that this definition in particular implies that $D_{\max}(\rho\|\sigma) = \infty$ if $\rho \neq 0$ and $\sigma \not\gg\rho$.
Our first result shows that the relative max-entropy can be seen as the limit of the $\alpha$-order R\'enyi divergence when $\alpha \to \infty$. 
Also, the quantum relative entropy is the limit of the $\alpha$-order R\'enyi divergence when $\alpha \to 1$. 
\begin{thm} \label{thm:limits}
  Let $\rho, \sigma \geq 0$ with $\rho \neq 0$ and $\sigma \gg \rho$. Then,
  \begin{align*}
    D_{\max}(\rho\|\sigma) &= \lim_{\alpha\to\infty} \Dnew_{\alpha}(\rho\|\sigma) \,,\\
    D(\rho\|\sigma) &= \lim_{\alpha\nearrow 1}  \Dnew_{\alpha}(\rho\|\sigma)  = \lim_{\alpha \searrow 1} \Dnew_{\alpha}(\rho\|\sigma) \,.
  \end{align*}
\end{thm}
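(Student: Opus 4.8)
The plan is to prove the two limits separately, in each case reducing the statement to isolating the leading behaviour of $\tr\big[\big(\sigma^{\frac{1-\alpha}{2\alpha}}\rho\,\sigma^{\frac{1-\alpha}{2\alpha}}\big)^{\alpha}\big]$ near the relevant value of $\alpha$. Throughout I would write $M_\alpha := \sigma^{\frac{1-\alpha}{2\alpha}}\rho\,\sigma^{\frac{1-\alpha}{2\alpha}}\geq 0$. Since $\sigma \gg \rho$, the operator $\sigma^{\frac{1-\alpha}{2\alpha}}$ is invertible on a subspace containing $\supp\rho$, so $M_\alpha$ has rank equal to $\rank(\rho)$ for every finite $\alpha$, and $\alpha \mapsto M_\alpha$ is an analytic family of fixed rank. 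Note also that $M_1 = \rho$ because the exponent $\frac{1-\alpha}{2\alpha}$ vanishes at $\alpha = 1$.

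For the limit $\alpha\to\infty$ I would first rewrite the max relative entropy as a norm. On $\supp\sigma$ the condition $\rho \leq \exp(\lambda)\sigma$ is equivalent to $\sigma^{-1/2}\rho\,\sigma^{-1/2} \leq \exp(\lambda)\,\mathbbm{1}$, and since $\supp\rho\subseteq\supp\sigma$ there is no constraint off $\supp\sigma$; taking the infimum over admissible $\lambda$ gives $D_{\max}(\rho\|\sigma) = \log\norm{\sigma^{-1/2}\rho\,\sigma^{-1/2}} = \log\norm{M_\infty}$, where $M_\infty := \sigma^{-1/2}\rho\,\sigma^{-1/2}$. Because $\frac{1-\alpha}{2\alpha}\to-\tfrac12$ as $\alpha\to\infty$, continuity of $s\mapsto\sigma^{s}$ on $\supp\sigma$ and of the operator norm give $\norm{M_\alpha}\to\norm{M_\infty}$. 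I would then sandwich the trace of the $\alpha$-th power between $\norm{M_\alpha}^{\alpha}\leq \tr[M_\alpha^{\alpha}] \leq \rank(\rho)\,\norm{M_\alpha}^{\alpha}$, take logarithms, divide by $\alpha-1$, and let $\alpha\to\infty$. The normalisation contributes $-\frac{1}{\alpha-1}\log\tr[\rho]\to 0$, the term $\frac{\log\rank(\rho)}{\alpha-1}\to 0$, and $\frac{\alpha}{\alpha-1}\to1$, so the squeeze yields $\lim_{\alpha\to\infty}\Dnew_{\alpha}(\rho\|\sigma)=\log\norm{M_\infty}=D_{\max}(\rho\|\sigma)$.

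For the limit $\alpha\to1$ I would use l'Hôpital's rule. Writing $\Dnew_{\alpha}(\rho\|\sigma)=\frac{g(\alpha)}{\alpha-1}$ with $g(\alpha):=\log\!\big(\tr[M_\alpha^{\alpha}]/\tr[\rho]\big)$, one checks $g(1)=0$ since $M_1=\rho$, so the limit equals $g'(1)=\frac{1}{\tr[\rho]}\frac{d}{d\alpha}\tr[M_\alpha^{\alpha}]\big|_{\alpha=1}$. To evaluate this derivative I would separate the two ways $\alpha$ enters by setting $F(\alpha,\beta):=\tr[M_\beta^{\,\alpha}]$ and using $\frac{d}{d\alpha}F(\alpha,\alpha)=\partial_\alpha F+\partial_\beta F$ at $\alpha=\beta=1$. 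The explicit-exponent part gives $\partial_\alpha\tr[M_\beta^{\alpha}]=\tr[M_\beta^{\alpha}\log M_\beta]$, equal to $\tr[\rho\log\rho]$ at $\alpha=\beta=1$. The operator part gives $\partial_\beta\tr[M_\beta^{\alpha}]\big|_{\alpha=1}=\tr[\partial_\beta M_\beta]$, and since $\frac{d}{d\beta}\sigma^{\frac{1-\beta}{2\beta}}\big|_{\beta=1}=-\tfrac12\log\sigma$ on $\supp\sigma$, one finds $\partial_\beta M_\beta\big|_{\beta=1}=-\tfrac12\big((\log\sigma)\rho+\rho\log\sigma\big)$, whose trace is $-\tr[\rho\log\sigma]$. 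Adding the two contributions gives $\frac{d}{d\alpha}\tr[M_\alpha^{\alpha}]\big|_{\alpha=1}=\tr[\rho(\log\rho-\log\sigma)]$, hence $g'(1)=\frac{1}{\tr[\rho]}\tr[\rho(\log\rho-\log\sigma)]=D(\rho\|\sigma)$, and this computes both one-sided limits simultaneously.

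The hard part will be justifying the smoothness that l'Hôpital's rule and the trace-derivative identity $\frac{d}{dt}\tr[f(A(t))]=\tr[f'(A(t))\,\dot A(t)]$ require. The key observation is that $\sigma\gg\rho$ confines the whole analysis to $\supp\sigma$, where $\sigma^{\frac{1-\alpha}{2\alpha}}$ is analytic in $\alpha$; consequently $M_\alpha$ is an analytic family of constant rank, its nonzero spectrum stays bounded away from $0$ for $\alpha$ near $1$, and $x\mapsto x^{\alpha}$ together with its $\alpha$-derivatives is smooth there. This makes $\alpha\mapsto\tr[M_\alpha^{\alpha}]$ continuously differentiable near $\alpha=1$ and legitimises the manipulations above. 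Finally, the conventions that functions diverging at $0$ are set to $0$ and that $\log$ and $(\cdot)^{-1}$ act only on supports ensure that all the boundary terms, in particular $\tr[\rho\log\sigma]$, are the intended finite quantities.
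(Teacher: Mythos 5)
Your proposal is correct, and while it shares the paper's broad strategy---a derivative at $\alpha=1$ for the relative-entropy limit, and convergence of the $\alpha$-norm to the operator norm for the max-entropy limit---the key technical work is organized quite differently. For $\alpha\to 1$, the paper's engine is Proposition~\ref{smalldprop}: a formula for $\frac{d}{d\alpha}\tr[Z_\alpha^\alpha]$ valid at \emph{every} $\alpha$, proved over a page via integral representations of the logarithm and resolvents, stated for $X,Y>0$ and then extended to the singular case $Y\gg X$ by restriction to $\supp Y$ together with a citation. Your two-variable splitting $F(\alpha,\beta)=\tr[M_\beta^\alpha]$ buys a genuine shortcut: since only the total derivative at $(1,1)$ is needed, and there the inner power is $x\mapsto x^1$, the operator-derivative partial collapses to $\tr\bigl[\partial_\beta M_\beta|_{\beta=1}\bigr]=-\tr[\rho\ln\sigma]$ with no trace-derivative machinery at general exponents; the paper's proposition, in exchange, yields strictly more (the derivative at all orders). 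For $\alpha\to\infty$, your squeeze $\norm{M_\alpha}^\alpha\le\tr[M_\alpha^\alpha]\le\rank(\rho)\,\norm{M_\alpha}^\alpha$ and the paper's reverse-triangle-inequality splitting with a dimension factor are two packagings of the same fact that the $\alpha$-norm tends to the operator norm; yours is the more economical of the two.

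Two points should be tightened. First, the joint $C^1$-smoothness of $F$ near $(1,1)$, which the chain rule $\frac{d}{d\alpha}F(\alpha,\alpha)=\partial_\alpha F+\partial_\beta F$ genuinely requires, is not quite immediate from ``analytic family of constant rank'': the kernel of $M_\beta$ rotates with $\beta$ (it equals $\sigma^{-\frac{1-\beta}{2\beta}}(\ker\rho\cap\supp\sigma)$ on $\supp\sigma$), and $x\mapsto x^\alpha$ has unbounded derivative at $0$ when $\alpha<1$, so one cannot work on a fixed positive-definite block directly; your sketch can be completed via analytic perturbation theory for the nonzero eigenvalues, but the cleanest repair is to pass to $N_\beta:=\rho^{\demi}\sigma^{\frac{1-\beta}{\beta}}\rho^{\demi}$, which has the same nonzero spectrum as $M_\beta$, lives on the \emph{fixed} subspace $\supp\rho$, and is strictly positive there because $\supp\rho\subseteq\supp\sigma$. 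Then $F(\alpha,\beta)=\tr\bigl[\exp\bigl(\alpha\ln N_\beta\bigr)\bigr]$ (restricted to $\supp\rho$) is manifestly jointly smooth, and the two partials give $\tr[\rho\ln\rho]$ and $-\tr[\rho\ln\sigma]$ exactly as you computed. This maneuver also sidesteps the invertibility issue that the paper itself defers to the literature. Second, keep logarithm bases consistent: the derivative identities produce natural logarithms, and the resulting factors of $\ln 2$ must cancel between the derivative of $\log$ and the trace derivative---this is precisely the $\ln 2\, D(X\|Y)$ appearing in the statement of Proposition~\ref{smalldprop}.
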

The proof is presented in Section~\ref{sec:proof-limits}.

Two other special cases have been noted in the literature. For $\alpha = 2$, we recover the \emph{collision relative entropy} (see~\cite[Def.~5.3.1]{renner05}, where this quantity is defined as a conditional entropy), which has, for example, found applications in randomness extraction and min-entropy sampling~\cite{dupuis13}. It is given as
\begin{align*}
  \widetilde{D}_2(\rho\|\sigma) = \log \frac{1}{\tr[\rho]} \tr\Big[ \big( \sigma^{-\frac{1}{4}} \rho  \sigma^{-\frac{1}{4}} \big)^2 \Big] .
\end{align*}
Moreover, the specialization $\alpha = \frac12$ is related to the fidelity,
\begin{align*}
  \widetilde{D}_\frac{1}{2}(\rho\|\sigma) = -2 \log \frac{1}{\tr[\rho]} \tr \Big[ \Big( \sqrt{\sigma}\, \rho  \sqrt{\sigma}\Big)^{\frac12} \Big] = -2 \log \frac{F(\rho,\sigma)}{\tr[\rho]}\, ,
\end{align*}
where $F(\rho,\sigma) := \| \sqrt{\rho} \sqrt{\sigma} \|_1 = \tr | \sqrt{\rho}\sqrt{\sigma} |$.

The expression $\lim_{\alpha\to0} \Dold_{\alpha}(\rho\|\sigma) = - \log \Tr( \Pi_\rho \sigma )$, where $\Pi_\rho$ denotes the projector to the support of $\rho$ is introduced by Datta~\cite[Def.~2]{datta08}, where it is called the relative min-entropy. This limit is not reproduced by our definition of $\Dnew_{\alpha}(\rho || \sigma)$ in general~\cite{datta13}.

For an overview of the definitions of the entropies used in this article refer to Table~\ref{tb:overview}.

\subsubsection{Joint Convexity/Concavity and Data-Processing}
\label{sec:dp}

Consider a completely positive trace-preserving map (CPTPM) $\cE$. For such maps 
the implication $\rho \geq \sigma \implies \cE(\rho) \geq \cE(\sigma)$ holds. Thus, from the definition of the max relative entropy, we immediately find that the data-processing inequality $D_{\max}(\rho\|\sigma) \geq D_{\max}(\cE(\rho) \| \cE(\sigma) )$ holds. As mentioned before this property also holds for the quantum relative entropy and is closely related to strong sub-additivity~\cite{lieb73}. For $D_{\frac 12}(\cdot\|\cdot)$ data-processing follows directly from the contractivity of the fidelity under CPTPMs.

\begin{thm}[Data-Processing]
  \label{thm:data-proc}
  Let $\rho,\sigma \geq 0$, $\rho \neq 0$ and $\alpha \in (1,2]$. Then, for any CPTPM $\cE$, we have
  \begin{align}
     \Dnew_{\alpha}(\rho\|\sigma) \geq \Dnew_{\alpha}( \cE(\rho) \| \cE(\sigma) ) \, . \label{eq:data-proc}
  \end{align}
  Moreover, $\exp\bigl((\alpha-1)\Dnew_{\alpha}(\cdot \|\cdot)\bigr)$ is jointly convex.%
\end{thm}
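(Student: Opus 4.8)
The plan is to obtain the data-processing inequality~\eqref{eq:data-proc} from the joint-convexity (``moreover'') statement, and to concentrate the work on the latter. By Theorem~\ref{thm:axioms}, $\Dnew_\alpha$ satisfies (I)--(VI), and the mean appearing in~(VI) is $g_\alpha(x)=\exp\bigl((\alpha-1)x\bigr)$: indeed $\exp\bigl((\alpha-1)\Dnew_\alpha(\rho\|\sigma)\bigr)=\frac{1}{\tr[\rho]}\tr\bigl[(\sigma^{s}\rho\,\sigma^{s})^{\alpha}\bigr]$ with $s=\frac{1-\alpha}{2\alpha}$, and since $(\sigma\oplus\omega)^{s}=\sigma^{s}\oplus\omega^{s}$ this quantity is additive over direct sums in exactly the way required by~(VI). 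For $\alpha>1$ the function $g_\alpha$ is continuous and strictly increasing, so Proposition~\ref{pr:dp-jc} shows that~\eqref{eq:data-proc} is \emph{equivalent} to joint convexity of $g_\alpha(\Dnew_\alpha(\cdot\|\cdot))$ on normalized states. Hence it suffices to prove the latter. Writing
\[
  Q_\alpha(\rho\|\sigma):=\tr\Bigl[\bigl(\sigma^{s}\rho\,\sigma^{s}\bigr)^{\alpha}\Bigr],\qquad s=\tfrac{1-\alpha}{2\alpha}\in[-\tfrac14,0),
\]
which agrees with $\exp\bigl((\alpha-1)\Dnew_\alpha(\rho\|\sigma)\bigr)$ on normalized states, and noting that $Q_\alpha$ is homogeneous of degree one ($Q_\alpha(c\rho\|c\sigma)=c\,Q_\alpha(\rho\|\sigma)$, using $(2s+1)\alpha=1$), it is enough to establish that $Q_\alpha$ is jointly convex on all $\rho,\sigma\geq0$.

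The engine for this is the operator convexity of $t\mapsto t^{\alpha}$ on $[0,\infty)$, which holds \emph{precisely} for $\alpha\in[1,2]$; this is what pins the range of the theorem to $(1,2]$. Setting $X=\sigma^{s}\rho\,\sigma^{s}\geq0$, I would access this operator convexity through the integral representation
\[
  t^{\alpha}=\frac{\sin\bigl((\alpha-1)\pi\bigr)}{\pi}\int_{0}^{\infty}\frac{t^{2}}{t+u}\,u^{\alpha-2}\,du\qquad(1<\alpha<2),
\]
whose kernel $t\mapsto t^{2}/(t+u)=t-u+u^{2}(t+u)^{-1}$ is operator convex, the endpoint $\alpha=2$ (the collision case) being handled directly. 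Taking traces reduces matters to the joint convexity in $(\rho,\sigma)$ of the resolvent functionals $X\mapsto\tr\bigl[X^{2}(X+u)^{-1}\bigr]$. To cope with the noncommutative sandwich inside $X$, I would pass to a dilation: for $\rho=\lambda\rho_0+(1-\lambda)\rho_1$ and $\sigma=\lambda\sigma_0+(1-\lambda)\sigma_1$, realise $\rho=W^{*}\hat\rho W$ and $\sigma=W^{*}\hat\sigma W$ with $\hat\rho=\rho_0\oplus\rho_1$, $\hat\sigma=\sigma_0\oplus\sigma_1$ and $W$ an isometry, and control $(W^{*}\hat\sigma W)^{s}$ against $W^{*}\hat\sigma^{s}W$ by the operator Jensen inequality, using that $t\mapsto t^{2s}$ with $2s\in[-1,0)$ is operator convex.

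The main obstacle is the interaction of these two operator-convexity facts: the inner power $t^{2s}$ and the outer power $t^{\alpha}$ transform in opposite directions under compression, so the two Jensen steps must be combined so that the cross terms acquire a definite sign. This cannot be circumvented by an ``outer'' variational formula for $\tr[X^{\alpha}]$. For example, the Young/Legendre identity
\[
  \tr[X^{\alpha}]=\max_{\tau\geq0}\Bigl\{\alpha\,\tr\bigl[X\tau^{\alpha-1}\bigr]-(\alpha-1)\,\tr\bigl[\tau^{\alpha}\bigr]\Bigr\}
\]
only reduces matters to the map $(\rho,\sigma)\mapsto\tr\bigl[\rho\,\sigma^{s}\tau^{\alpha-1}\sigma^{s}\bigr]$, which is \emph{linear} in $\rho$ and is not jointly convex in $(\rho,\sigma)$, so a term-by-term argument fails; the same issue reappears as the non-convex, divergently-weighted term $\tr[\rho\,\sigma^{2s}]$ when one tries to split the integral representation. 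The convexity is therefore genuinely noncommutative, and I would carry out the resolvent estimate directly, tracking the cross terms through the dilation. That $t\mapsto t^{\alpha}$ ceases to be operator convex beyond $\alpha=2$ is exactly why this argument stops at $\alpha=2$; the statement itself nonetheless extends to all $\alpha\geq\frac12$ by the heavier machinery of Frank and Lieb~\cite{frank13}, as noted in the introduction. Finally,~\eqref{eq:data-proc} follows from the joint convexity just established via Proposition~\ref{pr:dp-jc}.
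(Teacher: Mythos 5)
Your reduction is sound and matches the paper: by Proposition~\ref{pr:dp-jc} (with $g_\alpha(x)=\exp((\alpha-1)x)$, which is strictly increasing for $\alpha>1$), the data-processing inequality follows once $\exp\bigl((\alpha-1)\Dnew_{\alpha}(\cdot\|\cdot)\bigr)$ is shown to be jointly convex, and your homogeneity observation correctly removes the normalization. But the joint convexity itself---which is the entire mathematical content of the theorem---is not proved in your proposal. You correctly diagnose that the term-by-term use of the integral representation fails (the piece $\tr[\rho\,\sigma^{2s}]$ is not jointly convex, as a $2\times2$ scalar Hessian check already shows), and that the Legendre/Young variational formula fails for the same reason; you then state that the inner Jensen step (for $t^{2s}$, $2s\in[-1,0)$) and the outer Jensen step (for $t^{\alpha}$, $\alpha\in[1,2]$) ``transform in opposite directions under compression'' and that you ``would carry out the resolvent estimate directly, tracking the cross terms through the dilation.'' That last sentence is exactly where a proof is required and none is given: after the inner Jensen step one is left comparing $\tr\bigl[(\hat\sigma^{s} P \hat\rho P \hat\sigma^{s})^{\alpha}\bigr]$ with $\tr\bigl[(\hat\sigma^{s}\hat\rho\,\hat\sigma^{s})^{\alpha}\bigr]$ for a projection $P=WW^*$, and since $PAP\not\leq A$ for general $A\geq 0$, no operator-order or trace-monotonicity argument closes this gap. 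There is no indication that the resolvent kernels $t\mapsto t^2/(t+u)$ behave any better here than $t^{\alpha}$ itself, so the sketch cannot be completed along the lines described.

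The idea your proposal is missing is the one the paper uses to make the cross terms disappear altogether: lift the problem to $\mathcal{H}\otimes\mathcal{H}$ and put the inverse powers of $\sigma$ into a \emph{separate tensor factor} carrying a transpose. Concretely, the paper first establishes (Lemma~\ref{lem:rotimesr}, a Lieb--Ando-type statement proved via the composition theorem in \cite[Thm.~5.14]{wolf-ln}) that $(L,R)\mapsto L^{\beta}\otimes (R^{\T})^{1-\beta}$ is jointly operator \emph{concave} for $\beta\in[0,1]$; feeding this positive concave map into the operator convex $t^{\alpha}$ via the same composition theorem yields (Lemma~\ref{lem:cancel}) joint operator \emph{convexity} of
\begin{align*}
  (L,R)\;\longmapsto\; R^{\beta/2}\bigl(R^{-\beta/2}LR^{-\beta/2}\bigr)^{\alpha}R^{\beta/2}\otimes (R^{\T})^{(1-\alpha)(1-\beta)} .
\end{align*}
Choosing $\beta=1-1/\alpha$, so that $\beta+(1-\beta)(1-\alpha)=0$, and evaluating at the unnormalized maximally entangled vector $\ket{\gamma}=\sum_i\ket{i}\otimes\ket{i}$ makes the auxiliary powers of $R$ cancel algebraically and returns exactly $\exp\bigl((\alpha-1)\Dnew_{\alpha}(L\|R)\bigr)$ (Lemma~\ref{lem:alpha12}); joint convexity of the scalar functional then follows by sandwiching the operator inequality with the fixed vector $\ket{\gamma}$. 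In other words, the two operator-convexity facts you were trying to combine under compression are instead combined through a tensor-product perspective construction, where the problematic inverse powers never need to be compared under a compression at all. Without this (or an equivalently strong idea, as in \cite{frank13} or \cite{beigi13new}), your proposal remains a plan with its central step open.
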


For the proof we only need to establish that $\exp\bigl((\alpha-1)\Dnew_{\alpha}(\cdot \|\cdot)\bigr)$ is jointly convex (see Section~\ref{sec:proof-dp}) and then employ Proposition~\ref{pr:dp-jc}.
The proof, which uses a strategy proposed in~\cite[Thm.~5.16]{wolf-ln}, is deferred to Section~\ref{sec:proof-dp}. 
Whereas our proof only works for $\alpha \in (1,2]$, the data processing inequality actually holds for all $\alpha \in [\frac12, 1) \cup (1, \infty)$, and  $\exp\big((\alpha-1)\Dnew_{\alpha}(\cdot\|\cdot)\big)$ is jointly convex for all $\alpha \in (1,\infty)$ and jointly concave for $\alpha \in [\demi, 1)$. 
This was recently shown by Frank and Lieb~\cite{frank13}, and independently by Beigi~\cite{beigi13new} (for $\alpha > 1$), resolving our earlier conjecture in~\cite{lennert13a}. 
Finally, note that we found numerical counter-examples for data-processing when $\alpha < \demi$.

\subsubsection{Monotonicity in $\alpha$}

The classical R\'enyi divergences are monotonically increasing in $\alpha$~\cite{csiszar95}. This is evident from the mean property~(VI) which ensures that the larger $\alpha$ the more preference is given to contributions with high log-likelihood ratio. Thus, for commuting $\rho, \sigma \geq 0$ and $\alpha,\beta \in (0,1) \cup (1, \infty)$ such that $\alpha \leq \beta$ we have $\Dnew_{\alpha}(\rho\|\sigma) \leq \Dnew_{\beta}(\rho\|\sigma)$. This property extends to the non-commutative setting.

\begin{thm}[Monotonicity]
 \label{thm:mono}
  Let $\rho,\sigma \geq 0$ and $\rho \neq 0$. Then, $\alpha \mapsto \Dnew_{\alpha}(\rho\|\sigma)$ is monotonically increasing.
\end{thm}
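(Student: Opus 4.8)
The plan is to reduce the statement to the convexity of a single real function of one variable, and then to obtain that convexity from a complex-interpolation (Hadamard three-lines) argument. First I would normalize: since $\Dnew_\alpha(c\rho\|\sigma) = \Dnew_\alpha(\rho\|\sigma) + \log c$ for every $c>0$, a shift independent of $\alpha$, I may assume $\tr[\rho]=1$ without affecting monotonicity. Writing the divergence through singular values one has $\tr\big[(\sigma^{\frac{1-\alpha}{2\alpha}}\rho\,\sigma^{\frac{1-\alpha}{2\alpha}})^{\alpha}\big] = \norm{\rho^{1/2}\sigma^{\frac{1-\alpha}{2\alpha}}}_{2\alpha}^{2\alpha}$. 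Introducing the variable $v := \frac{\alpha-1}{2\alpha}$, a strictly increasing bijection between $\alpha\in[\tfrac12,\infty)\setminus\{1\}$ and $v\in[-\tfrac12,\tfrac12)\setminus\{0\}$, one computes
\begin{align*}
\Dnew_\alpha(\rho\|\sigma) = \frac{1}{v}\,R(v), \qquad R(v) := \log \norm{\rho^{1/2}\sigma^{-v}}_{p(v)}, \qquad \frac{1}{p(v)} := \frac12 - v \,.
\end{align*}
The decisive bookkeeping fact is that $R(0)=\tfrac12\log\tr[\rho]=0$, so that $\Dnew_\alpha = \tfrac{R(v)-R(0)}{v-0}$ is exactly the slope of the chord of the graph of $R$ joining $0$ to $v$. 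For a convex $R$ this divided difference is nondecreasing in $v$ (on both sides of $0$, passing continuously through the limiting slope $R'(0)$). Hence the whole theorem reduces to the claim that $R$ is convex on $[-\tfrac12,\tfrac12)$.

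To prove convexity of $R$ I would use the operator version of the three-lines theorem. Consider the entire operator-valued function $F(z):=\rho^{1/2}\sigma^{-z}$ and recall that $\tfrac{1}{p(z)}=\tfrac12-\Re z$ is affine in $z$. Interpolating between two abscissae $v_0<v_1$ in $[-\tfrac12,\tfrac12)$ gives, for $v_\theta=(1-\theta)v_0+\theta v_1$,
\begin{align*}
\norm{F(v_\theta)}_{p(v_\theta)} \le M_0^{\,1-\theta}\, M_1^{\,\theta}, \qquad M_j := \sup_{t\in\mbR}\norm{F(v_j+it)}_{p(v_j)} \,.
\end{align*}
The point that makes this bound sharp is that the vertical fluctuations are invisible: $\sigma^{-(v_j+it)}=\sigma^{-v_j}\sigma^{-it}$ with $\sigma^{-it}$ unitary, so unitary invariance of the Schatten norm gives $\norm{F(v_j+it)}_{p(v_j)}=\norm{\rho^{1/2}\sigma^{-v_j}}_{p(v_j)}$ for every $t$, whence $M_j=\exp R(v_j)$. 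Taking logarithms turns the interpolation inequality into $R(v_\theta)\le (1-\theta)R(v_0)+\theta R(v_1)$, which is precisely convexity.

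The hard part will be the interpolation step itself, namely setting up the analytic family and invoking the Schatten-norm three-lines theorem rigorously. This is exactly where the restriction $p(v)\ge1$, equivalently $\alpha\ge\tfrac12$, enters: for $\alpha<\tfrac12$ the Schatten index drops below $1$ and $\norm{\cdot}_{p}$ is only a quasi-norm for which the theorem fails, so the range $\alpha\in(0,\tfrac12)$ would have to be handled separately. Two further technical points I would have to discharge are the possible non-invertibility of $\sigma$ (interpret $\sigma^{-z}$ on the support of $\sigma$, and when $\alpha<1$ with $\sigma\not\gg\rho$ reduce to the invertible case via a perturbation $\sigma+\epsilon\,\Pi$ followed by continuity), and the continuity of $\Dnew_\alpha$ at $\alpha=1$ (equivalently $v=0$), supplied by Theorem~\ref{thm:limits}, which guarantees that the two branches $\alpha<1$ and $\alpha>1$ join into a single monotone function. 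Everything else is the routine chord-slope monotonicity of a convex function, which needs no computation.
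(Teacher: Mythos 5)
Your reduction is sound as far as it goes, and it is a genuinely different route from the paper's. The paper does not use interpolation at all: it introduces the auxiliary quantity $\Dnew_{\alpha}(\rho\|\sigma;\tau)$ of Definition~\ref{def:ueber-rel-entropy}, recovers the divergence as a supremum over $\tau$ via the Schatten-norm duality of Lemma~\ref{lem:extended-norm-duality} (Eq.~\eqref{eq:ueber-to-rel}), and proves that each member of the family $\alpha\mapsto\Dnew_{\alpha}(\rho\|\sigma;\tau)$ is increasing by purifying $\rho$, writing it as $\frac{1}{\beta}\log\bra{\phi}X^{\beta}\ket{\phi}$ with $\beta=\frac{\alpha-1}{\alpha}$ and $X=\sigma^{-1}\otimes\tau^{\T}$, and showing the $\beta$-derivative is nonnegative by Jensen's inequality for $x\mapsto x\log x$; monotonicity of the supremum then follows by evaluating at the optimizer for the smaller $\alpha$. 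Your chord-slope observation ($\Dnew_{\alpha}=R(v)/v$ with $R(0)=0$ and $v=\frac{\alpha-1}{2\alpha}$ increasing in $\alpha$) is correct and clean, and the three-lines argument for convexity of $R$, with the vertical fluctuations killed by unitary invariance applied to $\sigma^{-it}$, is precisely the kind of complex-interpolation argument used by Beigi~\cite{beigi13new}. What interpolation buys is elegance and reusability (the same analytic-family machinery also yields duality and data-processing); what the paper's variational route buys is coverage, as follows.

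The genuine shortfall is the range of $\alpha$. Theorem~\ref{thm:mono} is stated for every $\alpha$ for which $\Dnew_{\alpha}$ is defined, i.e.\ $\alpha\in(0,1)\cup(1,\infty)$, and the paper's proof covers all of it: Lemma~\ref{lem:extended-norm-duality} is stated for arbitrary $p\in\mbR\setminus\{0,1\}$ (an infimum rather than a supremum when $p<1$, which the negative prefactor $\frac{\alpha}{\alpha-1}$ converts back into a supremum), and the Jensen argument of Lemma~\ref{lm:aux-mono} works for every $\beta=\frac{\alpha-1}{\alpha}\in\mbR$, hence for every $\alpha>0$. Your method stops at $\alpha=\frac12$ for the structural reason you identify yourself: for $\alpha<\frac12$ the exponent $p(v)=2\alpha$ drops below $1$, the Schatten functional is only a quasi-norm, and the three-lines theorem is unavailable. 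Saying that this range ``would have to be handled separately'' does not discharge it; as written, your argument proves a strictly weaker statement than the theorem, and closing the gap would require a separate argument on $(0,\frac12)$ --- for instance the paper's. One further small point: your identity $R(0)=\frac12\log\tr[\rho]$ tacitly uses $\sigma^{0}=\id$, which fails under the generalized-inverse convention when $\sigma\not\gg\rho$ (there $\sigma^{0}$ is the support projector and $\lim_{v\nearrow 0}R(v)<0$); your $\epsilon$-perturbation does repair this for $\alpha<1$, but it is a necessary step of the main argument rather than an optional technicality.
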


See Section~\ref{sec:proof-mono} for a proof. This result has been derived independently by Beigi~\cite{beigi13new}, resolving our earlier conjecture in~\cite{lennert13a}.

\subsection{From Divergence to Conditional Entropy}
\label{sec:overview-cond}

\begin{table}
\begin{center}
\begin{footnotesize}
\begin{tabular}{| c | l | l |}
\hline
$\alpha$-Range & Divergence & Conditional entropy \\ \hline
$[\frac{1}{2}, 1) \cap (1, \infty)$ & $\Dnew_{\alpha}(\rho || \sigma) = \frac{1}{\alpha-1} \log \Big( \frac{1}{\tr[\rho]} \tr\big[ \big( \sigma^{\frac{1-\alpha}{2\alpha}} \rho \sigma^{\frac{1-\alpha}{2\alpha}}\big)^{\alpha} \big] \Big)$ & $\Hnew_{\alpha}(A|B)_{\rho}$ \\  \hline
$(0, 1) \cap (1, 2]$ & $\Dold_{\alpha}(\rho || \sigma) = \frac{1}{\alpha-1} \log \Big( \frac{1}{\tr[\rho]} \tr \big[ \rho^{\alpha} \sigma^{1-\alpha} \big] \Big)$ & $\Hold_{\alpha}(A|B)_{\rho}$  \\ \hline
$\alpha \to 1$ & $\Dnew_1 \equiv \Dold_1 \equiv D$ & $H_1 \equiv H_1' \equiv H$ \\ \hline
$\alpha \to \infty$ & $\Dnew_{\infty} \equiv D_{\max} \neq D_{\infty}'$ as in~\cite{datta08}. & $\Hnew_{\infty} \equiv H_{\min}$ as in~\cite{renner05}. \\ \hline
$\alpha = \frac{1}{2}$ & $\Dnew_{1/2}(\rho ||\sigma) = - 2 \log \frac{F(\rho,\sigma)}{\Tr \rho} \neq \Dold_{1/2}(\rho\|\sigma)$ & $\Hnew_{1/2} \equiv H_{\max}$ as in~\cite{koenig08}.\\ \hline
$\alpha = 2$ & 
$\Dnew_2(\rho\|\sigma) = \log \big( \frac1{\tr [\rho]} \tr \big[\rho\sigma^{-\frac12}\rho\sigma^{-\frac12}\big] \big) \neq \Dold_2(\rho\|\sigma)$ 
& $\Hnew_2$ is defined in~\cite[Def.~5.3.1]{renner05}. 
 \\ \hline
$\alpha \to 0$ & $\Dold_0 \equiv D_{\min} \neq \Dnew_{0}$ as in~\cite{datta08}, see also~\cite{datta13}. & $\Hold_0$ appears in~\cite{renner05} as max-entropy. \\
\hline
\end{tabular}
\end{footnotesize}
\end{center}

\caption{This table overviews the entropic quantities discussed in this article. Here, $\rho, \sigma \geq 0$ with $\sigma \gg \rho$ and $\rho_{AB} \in \cS_{AB}$ as usual. The conditional entropies are defined as $\Hnew_{\alpha}(A|B)_{\rho} = \sup_{\sigma_B \in\cS_B} - \Dnew_{\alpha}(\rho_{AB} \| \id_A \otimes\, \sigma_B)$ and analogously for $\Hold_{\alpha}$.}
\label{tb:overview}
\end{table}

The divergences can be seen as parent quantities to the ordinary entropies, and for all positive~$\alpha$ and $\rho \in \cS$ it is easy to verify from the above definitions and properties~(V), (IV) and (iii) that
\begin{align}
   \Hold_{\alpha}(\rho) = -\Dold_{\alpha}(\rho\|\id) = \log d - \Dold_{\alpha}(\rho\|\pi) = \Hold_{\alpha}(\pi) - \Dold_{\alpha}(\rho\|\pi), \label{eq:rel}
\end{align}
where $\id$ and $\pi = \id/d$ are respectively the identity and the fully mixed state on the support of $\rho$, and $d$ is the rank of~$\rho$. 
Thus, if we view the R\'enyi divergence as a {\em distance measure} (even though it is not a metric in the mathematical sense), we can understand the R\'enyi entropy $\Hnew_{\alpha}(\rho)$ as the maximal possible entropy (of a state with the same support), which is $\log d$ and attained by the state $\pi$, minus how far away the real state $\rho$ is from $\pi$. 

We now consider bipartite quantum systems and \emph{conditional entropies}.
Let $\rho_{AB} \in \cS_{AB}$ be a bipartite state on $AB$ with $\tr[\rho_{AB}]=1$ and $\rho_B$ its marginal on $B$. The \emph{conditional von Neumann entropy} of $\rho_{AB}$ given $B$ is defined  as $H(A|B)_{\rho} := H(\rho_{AB}) - H(\rho_B)$. 
This can also be written as
\begin{align}
    \nonumber H(A|B)_{\rho} &= H(\rho_{AB}) - H(\rho_B) - \!\inf_{\sigma_B \in \cS_B}\! D_1(\rho_{B}\|\sigma_B)\\
    &= - \!\inf_{\sigma_B \in \cS_B}\! D_1(\rho_{AB}\|\id_A \otimes\, \sigma_B) \label{eq:cond-def} 
\end{align}
due to Klein's inequality~\cite{klein31}, i.e., the positive-definiteness of $D$. 

This approach of defining conditional entropies by optimizing the divergence has proven very fruitful. For example, Renner's conditional min-entropy~\cite[Sec.~3.1.1]{renner05} can be defined via the relation
\begin{align}
  H_{\min}(A|B)_{\rho} := \sup_{\sigma_B \in \cS_B} - D_{\max}(\rho_{AB} \| \id_A \otimes\, \sigma_B) \,. \label{eq:cond-min}
\end{align}
and the \emph{conditional max-entropy}~\cite[Def.~2 and Thm.~3]{koenig08} is given as
\begin{align}
  H_{\max}(A|B)_{\rho} := \sup_{\sigma_B \in \cS_B} - D_{\frac12}(\rho_{AB} \| \id_A \otimes\, \sigma_B) \,.
 \label{eq:cond-max}
\end{align}
It is natural to generalize this definition to conditional R\'enyi entropies.
\begin{defi}[Quantum Conditional R\'enyi Entropy] Let $\rho_{AB} \in \cS_{AB}$ and $\alpha \in (0, 1) \cup (1, \infty)$. The conditional R\'enyi-entropy of order $\alpha$ of $\rho_{AB}$ given $B$ is defined as
\begin{align*}
  \Hnew_{\alpha}(A|B)_{\rho} := \sup_{\sigma_B \in \cS_B} -\Dnew_{\alpha}(\rho_{AB} \| \id_A \otimes\, \sigma_B) .
\end{align*} 
\end{defi}
Note that the $\sigma_B$ we optimize over constitute a compact set and that the function $\sigma_B \mapsto \Dnew_{\alpha}(\rho_{AB}\|\id_A \otimes \sigma_B)$ is continuous except where it diverges to $+\infty$. Thus, the supremum is finite and attained 
for at least one element of the set. Let $\sigma_B^*$ be an element that achieves the supremum. It is easy to verify that $\tr[\sigma_B^*] = 1$. We also have that $\sigma_B^* \gg \rho_B$ if $\alpha > 1$ and $\sigma_B^* \ll \rho_B$ if $\alpha < 1$.

Furthermore, similar to the interpretation of the unconditional R\'enyi entropy by means of~\eqref{eq:rel}, writing $\Hnew_{\alpha}(A|B)_{\rho} = \log d_A - \inf_{\sigma_B} \Dnew_{\alpha}(\rho_{AB} \| \pi_A \otimes\, \sigma_B)$ where $\pi_A$ is the fully mixed state on the support of $\rho_A$ and $d_A$ is its rank, we can understand the conditional R\'enyi entropy as the maximal possible entropy $\log d_A$, minus how far away (in terms of R\'enyi divergence) the real state $\rho_{AB}$ is from a state that has maximal entropy, which is a state of the form $\pi_A \otimes\, \sigma_B$, as can easily be verified.

\subsubsection{Data-Processing and Chain Rule}

We briefly point out some properties of this notion of conditional R\'enyi entropy. 
The data processing inequality for the R\'enyi divergence immediately translates to the data processing inequality for the conditional R\'enyi entropy: for $\alpha \in [\frac12,1) \cup (1,\infty)$, and for any $\rho_{AB} \in \cS_{AB}$ and any CPTPM $\cE_{B \to B'}$ from $B$ to $B'$ it holds that
$$
\Hnew_{\alpha}(A|B)_{\rho} \leq \Hnew_{\alpha}(A|B')_{\tau}
$$
where $\tau_{AB'}$ is obtained by applying $\cE_{B \to B'}$ to (the $B$-part of) $\rho_{AB}$. 
This in particular implies that $\Hnew_{\alpha}(A|B)_{\rho} \geq \Hnew_{\alpha}(A|BC)_{\rho}$ for any tripartite state $\rho_{ABC} \in \cS_{ABC}$, i.e., conditioning on more can only reduce the entropy. On the other hand, the {\em chain rule} below bounds the amount by which the entropy can drop. 

\begin{prop}[Chain Rule]
For $\alpha \in (0, 1) \cup (1, \infty)$, and for any $\rho_{ABC} \in \cS_{ABC}$, it holds that
$$
\Hnew_{\alpha}(A|BC)_{\rho} \geq \Hnew_{\alpha}(AC|B) - \log d_C
$$
where $d_C$ is the rank of $\rho_C$. 
\end{prop}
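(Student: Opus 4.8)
The plan is to prove the inequality directly from the variational definition $\Hnew_{\alpha}(A|B)_{\rho} = \sup_{\sigma_B} -\Dnew_{\alpha}(\rho_{AB}\|\id_A\otimes\sigma_B)$ by exhibiting a single good trial state in the optimization defining the left-hand side. The key computational input is a scaling identity for the divergence: for any constant $c>0$,
\begin{align*}
\Dnew_{\alpha}(\rho\|c\,\sigma) = \Dnew_{\alpha}(\rho\|\sigma) - \log c .
\end{align*}
This follows by pulling the factor $c^{\frac{1-\alpha}{2\alpha}}$ out of each occurrence of $\sigma^{\frac{1-\alpha}{2\alpha}}$ in Definition~\ref{df:renyi}; the two factors combine to $c^{\frac{1-\alpha}{\alpha}}$, which becomes $c^{1-\alpha}$ after raising the bracket to the power $\alpha$, and the prefactor $\frac{1}{\alpha-1}$ converts this into the additive term $-\log c$.

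Let $\sigma_B^*$ attain the supremum in $\Hnew_{\alpha}(AC|B)_{\rho} = \sup_{\sigma_B} -\Dnew_{\alpha}(\rho_{ABC}\|\id_{AC}\otimes\sigma_B)$, whose existence is guaranteed by the compactness/continuity remark following the definition of the conditional entropy. As a trial state in the optimization for $\Hnew_{\alpha}(A|BC)_{\rho}$ I would take $\sigma_{BC} := \pi_C\otimes\sigma_B^*$, where $\pi_C = \Pi_{\rho_C}/d_C$ is the maximally mixed state on $\supp(\rho_C)$, so that $\Pi_{\rho_C}$ has rank $d_C$. Then $\id_A\otimes\sigma_{BC} = \frac{1}{d_C}\,\id_A\otimes\Pi_{\rho_C}\otimes\sigma_B^*$, and the scaling identity with $c=1/d_C$ gives
\begin{align*}
\Dnew_{\alpha}(\rho_{ABC}\|\id_A\otimes\sigma_{BC}) = \Dnew_{\alpha}\big(\rho_{ABC}\|\id_A\otimes\Pi_{\rho_C}\otimes\sigma_B^*\big) + \log d_C .
\end{align*}

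It remains to replace $\Pi_{\rho_C}$ by $\id_C$, i.e.\ to show the first divergence on the right equals $\Dnew_{\alpha}(\rho_{ABC}\|\id_{AC}\otimes\sigma_B^*)$. This rests on the support fact that $\rho_{ABC}$ is supported on $\HH_A\otimes\HH_B\otimes\supp(\rho_C)$, which follows since $\tr[(\id_{AB}\otimes(\id_C-\Pi_{\rho_C}))\rho_{ABC}] = \tr[(\id_C-\Pi_{\rho_C})\rho_C] = 0$ with both operators positive semi-definite. Writing $p=\frac{1-\alpha}{2\alpha}$ and using $\Pi_{\rho_C}^{\,p}=\Pi_{\rho_C}$, the operator $(\id_A\otimes\Pi_{\rho_C}\otimes(\sigma_B^*)^{p})\,\rho_{ABC}\,(\id_A\otimes\Pi_{\rho_C}\otimes(\sigma_B^*)^{p})$ is unchanged if the two projectors are dropped, because $(\id_{AB}\otimes\Pi_{\rho_C})\,\rho_{ABC}\,(\id_{AB}\otimes\Pi_{\rho_C})=\rho_{ABC}$. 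Hence the two divergences agree, and combining with the previous display gives $\Dnew_{\alpha}(\rho_{ABC}\|\id_A\otimes\sigma_{BC}) = \Dnew_{\alpha}(\rho_{ABC}\|\id_{AC}\otimes\sigma_B^*)+\log d_C$. Since $\sigma_{BC}$ is merely one admissible choice, the supremum defining $\Hnew_{\alpha}(A|BC)_{\rho}$ yields $\Hnew_{\alpha}(A|BC)_{\rho} \geq -\Dnew_{\alpha}(\rho_{ABC}\|\id_A\otimes\sigma_{BC}) = \Hnew_{\alpha}(AC|B)_{\rho}-\log d_C$, as claimed.

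The proof is structurally short; the only steps requiring care are bookkeeping ones. First, admissibility of $\sigma_{BC}$: for $\alpha>1$ one must verify $\id_A\otimes\sigma_{BC}\gg\rho_{ABC}$, which follows by intersecting the $C$-support fact above with $\sigma_B^*\gg\rho_B$ (noted after the definition of the conditional entropy) together with the analogous $B$-support fact, while for $\alpha<1$ only non-orthogonality is needed and is immediate. Second, it is precisely the use of the support projector $\Pi_{\rho_C}$ rather than the full identity on $C$ that produces the \emph{rank} $d_C$ of $\rho_C$ in the bound instead of the larger full dimension of $C$; this is the one place where I expect to have to argue rather than merely compute.
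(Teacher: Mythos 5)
Your proof is correct and follows essentially the same route as the paper's: choose the optimizer $\sigma_B^*$ for $\Hnew_{\alpha}(AC|B)_{\rho}$, use $\sigma_B^*\otimes\pi_C$ with $\pi_C$ maximally mixed on $\supp(\rho_C)$ as a trial state for $\Hnew_{\alpha}(A|BC)_{\rho}$, and extract $\log d_C$ via the scaling of $\Dnew_{\alpha}$ in its second argument. The only difference is that you spell out what the paper compresses into ``we immediately obtain'' --- namely the scaling identity, the support argument letting $\Pi_{\rho_C}$ be replaced by $\id_C$, and admissibility of the trial state --- which is a welcome addition rather than a deviation.
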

The proof is identical to the corresponding proof for $H_{\min}$ due to Renner~\cite{renner05}. 
\begin{proof}
Let $\sigma_B$ with $\Tr[\sigma_{B}] = 1$ be such that $\Hnew_{\alpha}(AC|B)_{\rho} = - \Dnew_{\alpha}(\rho_{ABC} \| \id_{AC} \otimes\, \sigma_B)$. Setting $\pi_C = \id_C/d_C$, we immediately obtain that 
$$
\Hnew_{\alpha}(A|BC)_{\rho} \geq - \Dnew_{\alpha}(\rho_{ABC} \| \id_{A} \otimes\, \sigma_B \otimes \pi_C) = \Hnew_{\alpha}(AC|B)_{\rho} - \log d_C \, ,
$$
which proves the claim. 
\end{proof}

\subsubsection{Conditioning on Classical Information}

We now analyze the behavior of $\Dnew_{\alpha}$ and $\Hnew_{\alpha}$ when applied to partly classical states. Formally, consider normalized states of the form $\rho_{AY} = \bigoplus_y p_y \rho_A^y$ and $\sigma_{AY} = \bigoplus_y q_y \sigma_A^y$, where $\{p_y\}$ and $\{q_y\}$ are probability distributions, and $\rho_A^y$ and $\sigma_A^y$ are normalized states in $\cS_{A}$ for all $y$. We say that $\rho_{AY}$ and $\sigma_{AY}$ have {\em classical} register $Y$. 
A straightforward calculation using Property~(VI) shows that for two such states $\rho_{AY}$ and $\sigma_{AY}$
$$
\Dnew_{\alpha}(\rho_{AY}\|\sigma_{AY}) = \frac{1}{\alpha-1} \log\sum_y p_y^\alpha q_y^{1-\alpha} \exp\Bigl((\alpha-1) \Dnew_{\alpha}(\rho_A^y\|\sigma_A^y)\Bigr) \, .
$$
In other words, the divergence $\Dnew_{\alpha}(\rho_{AY}\|\sigma_{AY})$ decomposes into the divergences $\Dnew_{\alpha}(\rho_A^y\|\sigma_A^y)$ of the ``conditional states''. This also holds for the conditional entropy, though the derivation is slightly more involved (see Section~\ref{sec:proof-class}).

\begin{prop}
  \label{pr:class}
  Let $\rho_{ABY} = \bigoplus_y p_y \rho_{AB}^y$ with $\tr[\rho_{ABY}] = \tr[\rho_{AB}^y] = 1$ for all $y$. Then,
  \begin{align*}
  \Hnew_{\alpha}(A|BY)_{\rho} 
= \frac{\alpha}{1-\alpha}\log \sum_y p_y \exp\Bigl(\textstyle\frac{1-\alpha}{\alpha} \Hnew_{\alpha}(A|B)_{\rho^y}\Bigr)  \, . 
  \end{align*}
\end{prop}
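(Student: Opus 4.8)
The plan is to reduce the supremum over $\sigma_{BY}$ to one over states that are block-diagonal in the classical register $Y$, and then to solve the resulting optimization using the already-derived decomposition of $\Dnew_\alpha$ for classical registers. First I would argue that in $\Hnew_\alpha(A|BY)_\rho = \sup_{\sigma_{BY}} -\Dnew_\alpha(\rho_{ABY}\|\id_A\otimes\sigma_{BY})$ one may restrict to $\sigma_{BY} = \bigoplus_y q_y \sigma_B^y$ with $\{q_y\}$ a probability distribution and each $\sigma_B^y$ normalized. To see this, let $\mathcal{P}_Y$ be the dephasing (pinching) map on $Y$. Since $\rho_{ABY}$ is already block-diagonal, $\mathcal{P}_Y$ fixes it, while $\mathcal{P}_Y(\id_A\otimes\sigma_{BY}) = \id_A\otimes\mathcal{P}_Y(\sigma_{BY})$ is block-diagonal; by the data-processing inequality (Theorem~\ref{thm:data-proc} and its extension to the range $\alpha\geq\frac12$) applying $\mathcal{P}_Y$ cannot decrease $-\Dnew_\alpha$, so a block-diagonal choice is always at least as good as an arbitrary $\sigma_{BY}$. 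For the achievability direction alone, plugging in a block-diagonal ansatz suffices and needs no such input.

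Second, on this restricted class I would invoke the decomposition of the divergence for classical registers displayed just above the proposition, applied with $A$ replaced by $AB$ and $\sigma_A^y := \id_A\otimes\sigma_B^y$. This turns $-\Dnew_\alpha(\rho_{ABY}\|\id_A\otimes\sigma_{BY})$ into $\frac{1}{1-\alpha}\log\sum_y p_y^\alpha q_y^{1-\alpha}\exp\!\big((\alpha-1)\Dnew_\alpha(\rho_{AB}^y\|\id_A\otimes\sigma_B^y)\big)$, reducing the problem to a purely classical optimization over $\{q_y\}$ and $\{\sigma_B^y\}$.

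Third, I would carry out this optimization in two independent stages. Over each $\sigma_B^y$ separately: because the coefficient $p_y^\alpha q_y^{1-\alpha}$ is positive and $x\mapsto\exp((\alpha-1)x)$ is monotone (increasing for $\alpha>1$, decreasing for $\alpha<1$), the per-block extremum replaces the factor by $\exp((1-\alpha)h_y)$, where $h_y := \Hnew_\alpha(A|B)_{\rho^y}$, directly from the definition of the conditional entropy and the fact that its supremum is attained. Over the distribution $\{q_y\}$: the task becomes extremizing $\sum_y c_y q_y^{1-\alpha}$ on the simplex with $c_y := p_y^\alpha\exp((1-\alpha)h_y)$, whose extremal value is $\big(\sum_y c_y^{1/\alpha}\big)^\alpha$, attained at $q_y\propto c_y^{1/\alpha}$; this is a Lagrange-multiplier computation, equivalently (reverse) H\"older's inequality with conjugate exponents $\alpha$ and $\frac{\alpha}{\alpha-1}$. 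Substituting $c_y^{1/\alpha} = p_y\exp(\frac{1-\alpha}{\alpha}h_y)$ and applying $\frac{1}{1-\alpha}\log(\cdot)$ to $\big(\sum_y c_y^{1/\alpha}\big)^\alpha$ yields the claimed identity.

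I expect the main obstacle to be the first step, the reduction to block-diagonal $\sigma_{BY}$: this is the only place where a genuinely noncommutative input (data processing, equivalently joint convexity/concavity of $\exp((\alpha-1)\Dnew_\alpha)$) is needed, and it is what pins down the admissible range of $\alpha$. The two-stage classical optimization is routine, but one must track the sign of $1-\alpha$ carefully, since the relevant extremum is a minimum for $\alpha>1$ and a maximum for $\alpha<1$, both of which must be checked to land on the same closed-form expression.
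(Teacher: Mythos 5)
Your proposal is correct and follows essentially the same route as the paper's proof: reduction to block-diagonal $\sigma_{BY}$ via data processing on the classical register, the decomposition of $\Dnew_{\alpha}$ over the blocks, per-block absorption of the $\sigma_B^y$-optimization into $\Hnew_{\alpha}(A|B)_{\rho^y}$, and a Lagrange-multiplier (equivalently H\"older-type) optimization over $\{q_y\}$ attained at $q_y \propto p_y\exp\bigl(\frac{1-\alpha}{\alpha}\Hnew_{\alpha}(A|B)_{\rho^y}\bigr)$. Your added remarks (that achievability needs no data processing, and the sign bookkeeping for $\alpha\gtrless 1$) are refinements of, not departures from, the paper's argument.
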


In the special case of an ``empty'' $B$, we obtain
$$
\Hnew_{\alpha}(A|Y)_{\rho} = \frac{\alpha}{1-\alpha}\log \sum_y p_y \exp\Bigl(\textstyle\frac{1-\alpha}{\alpha} \Hold_{\alpha}(\rho_{A}^y)\Bigr)  \, , 
$$
and when considering a state $\rho_{XY} = \bigoplus_y p_y \rho_X^y$ where also $X$ is classical, meaning that $\rho_X^y = \bigoplus_x p_{x|y}$ for every $y$, we recover the notion of classical conditional R\'enyi entropy 
$$
\Hnew_{\alpha}(X|Y)_{\rho} = \frac{\alpha}{1-\alpha}\log \sum_y p_y \biggl(\sum_x p_{x|y}^\alpha \biggr)^{1/\alpha} 
$$
originally suggested by Arimoto~\cite{arimoto77}.

\subsubsection{Duality Relation}

Conditional entropies satisfy a surprising duality relation in that, for any pure tripartite state $\rho_{ABC} \in \cS_{ABC}$ with $\tr[\rho_{ABC}] = 1$, we have
\begin{align}
  H(A|B)_{\rho} = -H(A|C)_{\rho} \qquad \textrm{and} \qquad H_{\min}(A|B)_{\rho} = -H_{\max}(A|C)_{\rho} \,.
\end{align}
For the von Neumann entropy this follows from the Schmidt-decomposition of pure states. For the min- and max-entropies, the respective property was shown by K\"onig \emph{et al.}~\cite{koenig08}. We prove that these are just the limiting cases of the following duality relation.
\begin{thm}[Duality]\label{thm:dual}
  Let $\alpha, \beta \in (\frac12,1) \cup (1,\infty)$ such that $\frac{1}{\alpha} + \frac{1}{\beta} = 2$ and let $\rho_{ABC} \in \cS_{ABC}$ be pure with $\tr[\rho_{ABC}] = 1$. Then,
    $\Hnew_{\alpha}(A|B)_{\rho} = - \Hnew_{\beta}(A|C)_{\rho}$.
\end{thm}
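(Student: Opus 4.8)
The plan is to strip away the logarithms and the two optimizations and reduce the claim to a single identity between Schatten norms, and then to prove that identity using H\"older duality together with the purity of $\rho_{ABC}$. First I would rewrite everything in terms of Schatten norms. Writing $s:=\frac{1-\alpha}{2\alpha}$ and using $\tr[\rho_{AB}]=1$, the definition gives $-\Dnew_{\alpha}(\rho_{AB}\|\id_A\otimes\sigma_B)=\frac{1}{1-\alpha}\log\big\|\sigma_B^{s}\rho_{AB}\sigma_B^{s}\big\|_{\alpha}^{\alpha}$, where I abbreviate $\id_A\otimes\sigma_B$ by $\sigma_B$. Since $\frac1\alpha+\frac1\beta=2$ forces $\alpha$ and $\beta$ onto opposite sides of $1$, I may assume $\alpha>1$ and $\beta\in(\tfrac12,1)$. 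The constraint yields the algebraic relations $\frac{1-\beta}{2\beta}=-s$, $\;2\alpha-1=\alpha/\beta$, and $\beta=\alpha'/(\alpha'+1)$, where $\alpha'$ is the H\"older conjugate of $\alpha$. Feeding the sign of $\frac{1}{1-\alpha}$ (negative) and of $\frac{1}{1-\beta}$ (positive) into the suprema defining $\Hnew$, a short computation collapses the desired equality $\Hnew_\alpha(A|B)_\rho=-\Hnew_\beta(A|C)_\rho$ to the clean norm identity
\begin{align}
\inf_{\sigma_B\in\cS_B}\big\|\sigma_B^{s}\rho_{AB}\sigma_B^{s}\big\|_{\alpha}=\sup_{\tau_C\in\cS_C}\big\|\tau_C^{-s}\rho_{AC}\tau_C^{-s}\big\|_{\beta}, \label{eq:normid}
\end{align}
where $\sup$ and $\inf$ range over normalized states, attainedness being guaranteed as already noted for $\Hnew$.

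The core is then to prove~\eqref{eq:normid}. I would combine two ingredients. First, the equal-spectrum property of the two reduced states of a rank-one operator: for fixed $\sigma_B$ the positive operator $(\id_{A}\otimes\sigma_B^{s}\otimes\id_C)\proj{\psi}(\id_{A}\otimes\sigma_B^{s}\otimes\id_C)$ has the same nonzero spectrum whether reduced onto $AB$ or onto $C$, so that $\big\|\sigma_B^{s}\rho_{AB}\sigma_B^{s}\big\|_\alpha$ equals the $\alpha$-norm of an operator living on $C$ alone. Second, Schatten--H\"older duality $\|X\|_\alpha=\max_{Y\ge0,\,\|Y\|_{\alpha'}\le1}\tr[XY]$ for $\alpha>1$, together with its reverse counterpart for the exponent $\beta<1$. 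Representing the $\alpha$-norm on the left as a maximization turns the left-hand side into a saddle-point problem $\inf_{\sigma_B}\sup_{Y}$; the relations $\beta=\alpha'/(\alpha'+1)$ and $\frac{1-\beta}{2\beta}=-s$ are precisely what make the inner problem, after transporting the optimization from $B$ to $C$ by the purity argument, reassemble into $\big\|\tau_C^{-s}\rho_{AC}\tau_C^{-s}\big\|_{\beta}$ with the exponent bookkeeping matching. Exchanging $\inf$ and $\sup$ (e.g.\ by Sion's minimax theorem, whose convexity/concavity hypotheses can be supplied by the joint convexity/concavity statements accompanying Theorem~\ref{thm:data-proc}) then yields~\eqref{eq:normid}; alternatively one establishes the two inequalities separately, each by feeding the optimizer of one side into the other through the correspondence provided by the purity argument.

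The main obstacle I anticipate is exactly~\eqref{eq:normid}: transporting the optimization between the complementary systems $B$ and $C$ while simultaneously converting the order-$\alpha$ norm into the order-$\beta$ functional. This is where purity is indispensable and where the matching of exponents must be checked carefully, and the $\inf$/$\sup$ exchange is the delicate point, since $\sigma_B\mapsto\sigma_B^{s}\rho_{AB}\sigma_B^{s}$ is not manifestly convex for $s<0$; one must either verify the hypotheses of a minimax theorem or construct an explicit dual pair $\sigma_B\leftrightarrow\tau_C$ by hand. Finally I would recover the known endpoint dualities as limits, the case $\alpha\to\infty$ with $\beta\to\tfrac12$ reproducing $H_{\min}(A|B)_\rho=-H_{\max}(A|C)_\rho$, which serves as a consistency check on the construction.
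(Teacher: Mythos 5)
Your proposal follows essentially the same route as the paper's proof of Theorem~\ref{thm:dual}: rewrite the optimized divergences as Schatten norms, dualize the inner norm via H\"older-type duality (the paper's Lemma~\ref{lem:extended-norm-duality}), use purity to transport the dual variable to the complementary system (your equal-spectrum argument and the paper's transpose/maximally-entangled-vector identity are the same mechanism), and finish by exchanging $\inf$ and $\sup$ via Sion's theorem. Your norm identity is exactly the paper's minimax representation~\eqref{eq:symmetric-bc} read in the two orders of optimization, and your exponent bookkeeping is correct.

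One correction, however, on the very point you flag as delicate. The convexity/concavity hypotheses for Sion cannot be supplied by the joint convexity/concavity statements accompanying Theorem~\ref{thm:data-proc}: those concern the functional $(\rho,\sigma)\mapsto\exp\bigl((\alpha-1)\Dnew_\alpha(\rho\|\sigma)\bigr)$, which is not the objective of your saddle-point problem, and in this paper they are only established for $\alpha\in(1,2]$, so they could not cover the full range $\alpha\in(1,\infty)$ in any case. What is actually needed --- and what the paper uses --- is that the \emph{dualized} objective
\begin{align*}
(\sigma_B,\tau_C)\;\longmapsto\;\Big\langle\phi\Big|\,\id_A\otimes\sigma_B^{\frac{1}{\alpha}-1}\otimes\tau_C^{1-\frac{1}{\alpha}}\Big|\phi\Big\rangle
\end{align*}
is convex in $\sigma_B$ and concave in $\tau_C$ when $\alpha>1$ (and the reverse when $\alpha<1$). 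This follows from operator convexity of $t\mapsto t^{\frac{1}{\alpha}-1}$ (exponent in $(-1,0)$) and operator concavity of $t\mapsto t^{1-\frac{1}{\alpha}}$ (exponent in $(0,1)$), composed with the positive linear functional $X\mapsto\langle\phi|X\otimes Y|\phi\rangle$ for fixed positive $Y$. In particular, your worry about the non-convexity of $\sigma_B\mapsto\|\sigma_B^{s}\rho_{AB}\sigma_B^{s}\|_\alpha$ is moot: Sion is applied \emph{after} the H\"older dualization, where the required structure is manifest (and that map is in fact convex anyway, being a pointwise supremum of functions convex in $\sigma_B$). With this substitution for the source of convexity, your plan coincides with the paper's proof.
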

The proof is presented in Section~\ref{sec:proof-dual}. This result has been derived independently by Beigi~\cite{beigi13new}, resolving our earlier conjecture in~\cite{lennert13a}.

\subsubsection{Uncertainty Relation}

There is a strong link between the above duality relation and entropic uncertainty relations.
In fact, Maassen and Uffink~\cite[Eq.~(11)-(12)]{maassen88} showed that, for $\frac{1}{\alpha} + \frac{1}{\beta} = 2$, the classical probability distributions, $\rho_X$ and $\rho_Y$, corresponding to two rank-$1$ projective measurements on an arbitrary state satisfy
\begin{align*}
  \Hold_{\alpha}(\rho_X) + \Hold_{\beta}(\rho_Y) \geq \log \frac{1}{c}, \quad \textrm{where} \quad c = \max_{x,y} |\langle e_x | f_y \rangle|^2 \,.
\end{align*}
Here, $\{\ket{e_x}\}_x$ and $\{\ket{f_y}\}_y$ denote the eigenvectors of the two measurements.
This result was then extended to a tripartite setting with quantum side information in~\cite{berta10} for von Neumann entropies and in~\cite{tomamichel11} for min- and max-entropies.
The latter proof was then generalized to arbitrary conditional entropies satisfying a duality relation (and certain other properties) by Coles~\emph{et al.}~\cite{colbeck11}.
Since our generalized R\'enyi entropies satisfy these properties, it thus immediately follows that the above are just special cases of the following general uncertainty relation.
\begin{thm}[Uncertainty Relation for Conditional R\'enyi Entropies]
  Let $\rho_{ABC} \in \cS_{ABC}$ with $\tr[\rho_{ABC}] = 1$ and let $\alpha, \beta \in (\demi,1) \cup (1, \infty)$ such that $\frac{1}{\alpha} + \frac{1}{\beta} = 2$. Then, for any two positive operator-valued measures $\{ M_x \}_x$ and $\{ N_y \}_y$, we have
  \begin{align*}
     \Hnew_{\alpha}(X|B)_{\rho} + \Hnew_{\beta}(Y|C)_{\rho} \geq \log \frac{1}{c}, \quad \textrm{where} \quad c := \max_{x,y} \Big\| \sqrt{M_x} \sqrt{N_y} \Big\| \,,
  \end{align*}
  where the post-measurement states are respectively given by
  \begin{align*}
    \rho_{XB} := \bigoplus_x \tr_{AC} [ M_x \rho_{ABC} ] \quad \textrm{and} \quad \rho_{YC} := \bigoplus_y \tr_{AB} [ N_y \rho_{ABC} ] \,.
  \end{align*}
\end{thm}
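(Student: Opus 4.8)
The plan is to verify the hypotheses of the general framework of Coles \emph{et al.}~\cite{colbeck11} and invoke it, rather than re-derive the entire chain from scratch. Their argument shows that the uncertainty relation holds for \emph{any} family of conditional entropies that (a) satisfies a duality relation under the pairing $\frac1\alpha+\frac1\beta=2$, (b) obeys a data-processing inequality on the conditioning system, and (c) reproduces the fidelity/$H_{\max}$ value at the endpoint $\alpha=\frac12$. For our entropies these are exactly Theorem~\ref{thm:dual}, the data-processing inequality for $\Hnew_\alpha(A|B)$ established above, and the $\alpha=\frac12$ identity $\Dnew_{1/2}(\rho\|\sigma)=-2\log F(\rho,\sigma)/\tr[\rho]$ recorded in Table~\ref{tb:overview}. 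So the real work is to check these three structural features and then apply the meta-theorem, with the special cases of Maassen--Uffink~\cite{maassen88}, Berta \emph{et al.}~\cite{berta10} and Tomamichel--Renner~\cite{tomamichel11} recovered in the limits $\alpha\to1$ and $(\alpha,\beta)\to(\infty,\tfrac12)$.

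Concretely, I would first dilate the first measurement to the isometry $V_M=\sum_x \ket{x}_X\otimes \sqrt{M_x}\colon \HH_A\to\HH_X\otimes\HH_A$ (an isometry since $\sum_x M_x=\id_A$), so that applying $V_M$ to the $A$-part of the pure state $\rho_{ABC}$ yields a pure state $\psi_{XABC}$ whose reduction to $XB$, after dephasing the classical register $X$, equals $\rho_{XB}$. After arguing that this dephasing does not change $\Hnew_\alpha(X|B)$ (it only erases coherences in the already-classical register $X$), I would apply Theorem~\ref{thm:dual} to the pure state $\psi$ to write $\Hnew_\alpha(X|B)_\rho=-\Hnew_\beta(X|AC)_\psi$, trading the order $\alpha$ at the memory $B$ for the dual order $\beta$ at the complementary system $AC$.

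Next I would use data processing on the conditioning system $AC$: performing the second measurement $\{N_y\}$ on $A$ turns it into a classical register $Y$, and since processing the conditioning register can only increase the conditional entropy, $\Hnew_\beta(X|AC)_\psi\le\Hnew_\beta(X|YC)_{\psi'}$ for the resulting state $\psi'$, whence $\Hnew_\alpha(X|B)_\rho\ge-\Hnew_\beta(X|YC)_{\psi'}$. It then remains to establish $-\Hnew_\beta(X|YC)_{\psi'}+\Hnew_\beta(Y|C)_\rho\ge\log\frac1c$. This last inequality is the heart of the matter: composing the two measurement isometries produces the operators $\sqrt{N_y}\sqrt{M_x}$, and the overlap $c=\max_{x,y}\|\sqrt{M_x}\sqrt{N_y}\|$ is precisely the constant that controls the passage between the doubly-measured state $\psi'_{YC}$ and the singly-measured state $\rho_{YC}$ appearing in the theorem. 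Via the fidelity characterization of $\Dnew_{1/2}$ together with the contractivity of the fidelity, one bounds this mismatch by $c$ and closes the estimate.

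The main obstacle is exactly this final overlap step, together with the justification that dephasing $X$ leaves $\Hnew_\alpha(X|B)$ unchanged; both are the ``certain other properties'' that Coles \emph{et al.} isolate, and it is their careful composition of duality and data-processing—tracking how the Kraus operators $\sqrt{M_x}\sqrt{N_y}$ enter—that discharges them for any entropy enjoying the three structural features above. Since $\Hnew_\alpha$ possesses all three, the uncertainty relation follows.
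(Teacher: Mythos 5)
Your proposal matches the paper's own treatment: the paper proves this theorem precisely by invoking the meta-theorem of Coles \emph{et al.}~\cite{colbeck11}, which extends the Tomamichel--Renner proof to any family of conditional entropies satisfying the duality relation (Theorem~\ref{thm:dual}) and data-processing, and then observing that $\Hnew_\alpha$ has these properties. Your additional sketch of the internal mechanics (measurement dilation, duality, data-processing on the conditioning system, and the overlap constant $c$) goes beyond what the paper writes down, and apart from a small imprecision --- the standard argument avoids the ``dephasing leaves $\Hnew_\alpha(X|B)$ unchanged'' claim by including a copy register in the measurement isometry so that the marginal on $XB$ is already classical --- it is consistent with the cited argument.
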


\section{Proofs}
\label{sec:proofs}

\subsection{Preliminaries}

We assume finite dimensions in the following.
For any positive semi-definite operator $X$, we use the following generalization of the Schatten $p$-norm. Let $p \in (0, \infty)$, then
\begin{align*}
  \| X \|_p := \big( \tr ( X^p ) \big)^{\frac{1}{p}} .
\end{align*}
Moreover, $\|X\|_\infty$ denotes the operator norm. Note that when $p<1$, this is no longer a norm, but we nonetheless extend the definition to this case as we will find it convenient.

\begin{lem}
  \label{lem:extended-norm-duality}
    Let $p,q \in \mathbb{R}\setminus \{0,1\}$ be such that $\frac{1}{p}+\frac{1}{q} = 1$. Then, for any $X \geq 0$,
    \begin{align*} 
        \| X \|_p &= \sup_{Z \geq 0 \atop \tr[Z] \leq 1} \tr[XZ^{\frac{1}{q}}] \quad \textrm{if } p>1, \qquad \textrm{and} \qquad
        \| X \|_p &= \inf_{Z \geq 0 \atop {Z \gg X \atop  \tr[Z] \leq 1}} \tr[XZ^{\frac{1}{q}}] \quad \textrm{if } p<1\, .
    \end{align*}
\end{lem}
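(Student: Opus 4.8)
The plan is to reduce both identities to the Hölder duality of the Schatten $p$-norm via the substitution $Y = Z^{1/q}$, and in each case to exhibit the explicit optimizer. First I would record the candidate $Z_* := X^p / \tr[X^p]$: it is positive, satisfies $\tr[Z_*] = 1$ and $Z_* \gg X$ (since $X^p$ has the same support as $X$), hence is feasible for both the supremum and the infimum. A short computation using $(p-1)q = p$ and $1 - \tfrac{1}{q} = \tfrac{1}{p}$ gives $Z_*^{1/q} = X^{p-1}/(\tr[X^p])^{1/q}$ and therefore $\tr[X Z_*^{1/q}] = (\tr[X^p])^{1 - 1/q} = \|X\|_p$. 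Thus in both cases the value $\|X\|_p$ is attained at $Z_*$, and it remains only to prove the corresponding inequality for every feasible $Z$.

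For $p > 1$ (so that $q > 1$ as well) this is the easy direction. By the matrix Hölder inequality, $\tr[X Z^{1/q}] \le \|X\|_p \, \|Z^{1/q}\|_q = \|X\|_p (\tr[Z])^{1/q} \le \|X\|_p$, where the middle equality uses $\|Z^{1/q}\|_q = (\tr[Z])^{1/q}$ and the last step uses $\tr[Z] \le 1$ together with $1/q > 0$. Combined with achievability at $Z_*$, this proves $\sup_Z \tr[X Z^{1/q}] = \|X\|_p$.

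For $p < 1$ the conjugate exponent satisfies $q < 0$, and the required inequality $\tr[X Z^{1/q}] \ge \|X\|_p$ is the reverse of Hölder; this is the crux. I would establish it in the relevant range $p \in [\tfrac{1}{2}, 1)$, where $\tfrac{1}{q} = 1 - \tfrac{1}{p} \in [-1, 0)$, by exploiting operator convexity of $t \mapsto t^{1/q}$ on $(0,\infty)$. Since $X \ge 0$, operator convexity makes $Z \mapsto \tr[X Z^{1/q}]$ a convex function on the convex feasible set $\{Z \ge 0 : Z \gg X,\ \tr[Z] \le 1\}$, so it suffices to verify that $Z_*$ is a stationary point, which then forces it to be the global minimizer. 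Because $Z_*$ is a function of $X$ (hence commutes with it), the directional derivative collapses to $\tfrac{1}{q}\tr[X Z_*^{1/q - 1} H]$, and $X Z_*^{1/q-1} \propto \Pi_X$ on $\supp X$, so the first-order optimality condition for the active constraint $\tr[Z] = 1$ is immediate. Alternatively, and covering all $p \in (0,1)$, one may invoke the reverse Hölder trace inequality $\tr[XY] \ge \|X\|_p (\tr[Y^q])^{1/q}$ (for $X,Y \ge 0$, $Y \gg X$, $q<0$) applied with $Y = Z^{1/q}$, for which $\tr[Y^q] = \tr[Z]$; then $\tr[X Z^{1/q}] \ge \|X\|_p (\tr[Z])^{1/q} \ge \|X\|_p$ follows since $\tr[Z] \le 1$ and $1/q < 0$.

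The main obstacle is precisely this reverse inequality for $p < 1$: unlike the forward Hölder step it cannot be obtained by a rearrangement, since $X$ and $Z$ do not commute, and it genuinely requires operator convexity of $t \mapsto t^{1/q}$ (equivalently, a noncommutative reverse Hölder inequality). Two technical points deserve care throughout: the hypothesis $Z \gg X$ is what makes $Z^{1/q}$ well defined on $\supp X$ and keeps $\tr[X Z^{1/q}]$ finite for negative powers, and the convention $f(0)=0$ must be applied consistently when restricting all operators to $\supp X$. Verifying stationarity of $Z_*$ and the operator-convexity estimate are routine once set up, so the content is concentrated in this single reverse-direction bound.
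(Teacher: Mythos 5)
Your proposal is correct, and for the harder direction ($p<1$) it takes a genuinely different route from the paper. For $p>1$ the two arguments are essentially the same: the paper invokes the duality of Schatten norms and substitutes $Z=Y^q$, while you run H\"older plus the explicit maximizer $Z_* = X^p/\tr[X^p]$; these are two phrasings of one fact. For $p<1$, however, the paper first asserts that one may restrict to $Z$ \emph{commuting} with $X$ and then solves what is effectively a scalar problem by Lagrange multipliers, whereas you stay noncommutative: your Route A uses operator convexity of $t\mapsto t^{1/q}$ (valid for $1/q\in[-1,0)$, i.e.\ $p\in[\tfrac12,1)$) to make $Z\mapsto \tr[XZ^{1/q}]$ convex and then checks first-order optimality of the explicit candidate $Z_*$ — legitimate, since $Z_*$ commutes with $X$, so the Fr\'echet derivative of $\tr[Xf(Z)]$ at $Z_*$ does collapse to $\tr[Xf'(Z_*)H]$; your Route B instead outsources everything to the reverse H\"older trace inequality, which covers all $p\in(0,1)$ and is a true (if less standard) result, provable via Araki--Lieb--Thirring plus ordinary H\"older. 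What each approach buys: the paper's computation is short but leans on the unproved commuting reduction, which for negative powers itself requires a pinching/operator-convexity argument — so the ingredient you use explicitly is hidden there anyway; your version makes it visible and also exhibits the optimizer in closed form, which the paper never writes down. Route A's restriction to $p\ge\tfrac12$ is harmless for the paper's purposes, since the lemma is only applied with $p=\alpha\ge\tfrac12$.

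One loose end in Route A deserves attention: restricting to $Z$ supported on $\supp X$ is not free. A feasible $Z$ may have strictly larger support, and to reduce to the compressed block you need the operator Jensen (Choi) inequality $(\Pi Z \Pi)^{1/q} \le \Pi Z^{1/q} \Pi$ for the compression onto $\supp X$, which shows that compressing $Z$ can only decrease the objective while preserving feasibility; you flag the support issue but do not supply this step. (Note that convexity of the functional under the $f(0)=0$ convention genuinely fails when supports vary, so this reduction is where that worry is discharged.) Route B avoids the issue entirely, since $Y=Z^{1/q}$ satisfies $\tr[Y^q]=\tr[Z]$ and $Y\gg X$ directly, so your proposal as a whole has a complete path.
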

\begin{proof}
    If $p>1$, we can use the duality of $p$-norms (see e.g.~\cite[Ex.~IV.2.12]{bhatia97}). This yields
    \begin{align*}
        \| X \|_{p} &= \sup_{\| Y \|_q \leq 1} \left| \tr[XY] \right|.
    \end{align*}
    First, note since $X \geq 0$, one can always choose $Y$ to be positive semidefinite and diagonal in the same basis as $X$ (see e.g.~\cite[Prob.~III.6.14]{bhatia97}). Then, let $Z := Y^q$, so that $\| Y \|_q = \tr[Y^q]^{1/q} = \tr[Z]^{1/q}$. The first part of the claim then follows.

    When $p<1$, $\| \cdot \|_p$ is not a norm and $q < 0$, so we derive the statement ourselves. We will solve the optimization problem $\inf_{\tr[Z]\leq 1} \tr[XZ^{1/q}]$ using Lagrange multipliers, and show that it is equal to the left-hand side. We can write the Lagrangian as
    \[ \mathcal{L} = \tr[XZ^{\frac{1}{q}}] - \mu(\tr[Z] - 1). \]
    Note that again, we can always choose $Z$ to commute with $X$. Since this expression is convex in (every diagonal element of) $Z$, the optimal $(Z,\mu)$ must satisfy
    \begin{equation} \label{eq:diff-XZq}
    \frac{1}{q}Z^{\frac{1}{q}-1}X - \mu\ident = 0.
\end{equation}
Rearranging, we get that $Z^{1/q}X = q\mu Z$. Thus, the optimal value is given by $q\mu \tr[Z] = q\mu$. We therefore only need to find the optimal $\mu$. For this, we isolate $Z$ in \eqref{eq:diff-XZq} and use the condition that $\tr[Z] = 1$ to get that $\mu = \frac{1}{q} \| X \|_p$. This yields the second part of the claim.
\end{proof}

We also define the following auxiliary quantity:
\begin{defi}\label{def:ueber-rel-entropy}
Let $\rho \geq 0$ with $\tr[\rho] = 1$ and let $\alpha \in (0, 1) \cup (1,\infty)$. Then, for any $\sigma, \tau \geq 0$, let
\begin{align*}
    \Dnew_{\alpha}(\rho\|\sigma; \tau) := \begin{cases} \infty & \textrm{if } \alpha > 1 \land \sigma \not\gg \rho^{\demi} \tau^{\frac{\alpha-1}{\alpha}} \rho^{\demi} \\
        -\infty & \textrm{if } \alpha < 1 \land \tau \not\gg \rho^{\demi} \sigma^{\frac{1-\alpha}{\alpha}} \rho^{\demi} \\
  \frac{\alpha}{\alpha-1} \log \tr\big( \rho^{\demi} \sigma^{\frac{1-\alpha}{\alpha}} \rho^{\demi} \tau^{\frac{\alpha-1}{\alpha}} \big) & \textrm{else}
  \end{cases}\,.
\end{align*}
\end{defi}
Note that, using Lemma~\ref{lem:extended-norm-duality}, the R\'enyi divergence can be recovered as 
\begin{align}
  \Dnew_{\alpha}(\rho\|\sigma) = \frac{\alpha}{\alpha-1} \log \Big\| \rho^{\demi} \sigma^{\frac{1-\alpha}{\alpha}} \rho^{\demi} \Big\|_{\alpha} = \sup_{\tau \geq 0 \atop \tr[\tau] \leq 1} \Dnew_{\alpha}(\rho\|\sigma; \tau). \label{eq:ueber-to-rel}
\end{align}

\subsection{Continuity and Proof of Claims in Section~\ref{sec:overview-prop}}
\label{sec:proofs-axioms}

The following property justifies the use of the generalized inverse and ensures that the quantity is continuous when the rank of $Y$ changes.

\begin{lem}
  \label{prop:dalphalimittoll}
  Let $X, Y \geq 0$ with $X \neq 0$ and $\alpha \in
  (0,1) \cup (1,\infty)$.  We have
  \begin{equation}
    \label{eq:consis}
    \Dnew_{\alpha}(X\|Y) = \lim_{\xi \searrow 0}
    \frac{\alpha}{\alpha - 1} \log \left(\Tr\left[ (
      (Y+\xi)^{\frac{1}{2\alpha} - \frac{1}{2}} X (Y+\xi)^{\frac{1}{2\alpha} -
        \frac{1}{2}} )^\alpha \right]/\Tr X\right)^{\frac{1}{\alpha}},
  \end{equation}
  where $Y + \xi$ is short for $Y + \xi \ident$, and the limit exists in
  the weaker sense in which a real valued sequence which is bounded
  from below and not bounded from above and which does not have an
  accumulation point is considered as being convergent to $+\infty$.
\end{lem}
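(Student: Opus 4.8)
The plan is to recognize the right-hand side as $\lim_{\xi \searrow 0} \Dnew_{\alpha}(X \| Y + \xi\ident)$. Indeed, the prefactor $\frac{\alpha}{\alpha-1}\log(\,\cdot\,)^{1/\alpha}$ collapses to $\frac{1}{\alpha-1}\log(\,\cdot\,)$, the exponent $\frac{1}{2\alpha}-\frac12$ equals $\frac{1-\alpha}{2\alpha}$, and for every $\xi>0$ the operator $Y+\xi\ident$ is strictly positive definite, so the side conditions in Definition~\ref{df:renyi} are automatically satisfied and the explicit formula applies. It thus suffices to compute this limit and match it against $\Dnew_{\alpha}(X\|Y)$. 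Writing $s := \frac{1-\alpha}{2\alpha}$ and $M(\xi):=(Y+\xi)^{s}X(Y+\xi)^{s}\geq 0$, I would decompose the space as $\supp Y \oplus \ker Y$, so that $(Y+\xi)^{s}$ is block-diagonal with blocks $(Y_0+\xi)^{s}$ and $\xi^{s}\ident$, where $Y_0:=Y|_{\supp Y}>0$, and record the four-block form of $M(\xi)$ relative to $X=\begin{pmatrix} X_{00} & X_{01} \\ X_{10} & X_{11}\end{pmatrix}$; its diagonal blocks are $(Y_0+\xi)^{s}X_{00}(Y_0+\xi)^{s}$ and $\xi^{2s}X_{11}$, and its off-diagonal block is $\xi^{s}(Y_0+\xi)^{s}X_{01}$.

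A key elementary fact to establish first is that, for $X\geq 0$, a vanishing diagonal block forces the adjacent off-diagonal block to vanish, since $\langle w, Xw\rangle = 0$ implies $Xw=0$. Consequently $Y\gg X \iff X_{11}=0$ (equivalently $X$ is supported on $\supp Y$) and $X\perp Y \iff X_{00}=0$. This lets me read off the finiteness alternatives of Definition~\ref{df:renyi} directly from the blocks of $X$, and it is the structural observation on which the whole case analysis hinges.

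For $\alpha>1$ we have $s<0$. If $Y\gg X$, then $X_{01}=0$ and $X_{11}=0$, so $M(\xi)=\diag\big((Y_0+\xi)^{s}X_{00}(Y_0+\xi)^{s},\,0\big)\to \diag(Y_0^{s}X_{00}Y_0^{s},0)$ by continuity of $\xi\mapsto(Y_0+\xi)^{s}$ on the invertible $Y_0$; since $t\mapsto t^{\alpha}$ and the trace are continuous on positive operators, $\Tr[M(\xi)^{\alpha}]\to\Tr[(Y^{s}XY^{s})^{\alpha}]$, which is finite and positive and reproduces $\Dnew_{\alpha}(X\|Y)$. If instead $Y\not\gg X$, then $X_{11}\neq 0$; since the $(1,1)$-block $\xi^{2s}X_{11}$ is a principal submatrix of the positive operator $M(\xi)$, the top eigenvalue satisfies $\|M(\xi)\|\geq \xi^{2s}\|X_{11}\|\to\infty$, whence $\Tr[M(\xi)^{\alpha}]\geq\|M(\xi)\|^{\alpha}\to\infty$ and $\Dnew_{\alpha}(X\|Y+\xi)\to+\infty$, matching the value assigned by the definition.

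For $\alpha<1$ we have $s>0$, so $\xi^{s},\xi^{2s}\to 0$ and the off-diagonal and $(1,1)$-blocks of $M(\xi)$ vanish in the limit, giving $M(\xi)\to\diag(Y_0^{s}X_{00}Y_0^{s},0)$ unconditionally. If $X\not\perp Y$, then $X_{00}\neq 0$, so $\Tr[M(\xi)^{\alpha}]\to\Tr[(Y^{s}XY^{s})^{\alpha}]$ is finite and positive, again equal to $\Dnew_{\alpha}(X\|Y)$. If $X\perp Y$, then $X=X_{11}$ and $\Tr[M(\xi)^{\alpha}]=\xi^{\,1-\alpha}\Tr[X_{11}^{\alpha}]\to 0$ (using $2s\alpha=1-\alpha$), so $\frac{1}{\alpha-1}\log(\,\cdot\,)\to+\infty$, again matching the definition. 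In each divergent case the limit is read in the extended-real sense of the statement. I expect the main obstacle to be the two singular cases: establishing the blow-up $\Tr[M(\xi)^{\alpha}]\to\infty$ for $\alpha>1$ cleanly via the principal-submatrix/operator-norm bound, and making sure the convergent cases genuinely invoke continuity of $A\mapsto\Tr[A^{\alpha}]$ on positive operators together with the block fact for positive semidefinite matrices.
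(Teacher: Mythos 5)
Your proposal is correct and follows essentially the same route as the paper's proof: the same block decomposition of $X$ and $Y+\xi$ relative to $\supp Y \oplus \ker Y$, the same case analysis over $\alpha \lessgtr 1$ and the support conditions, and the same continuity-of-eigenvalues argument (continuity of $Z \mapsto \Tr[Z^\alpha]$ on positive operators) in the convergent cases. The one genuine difference is the divergent case $\alpha > 1$, $Y \not\gg X$: the paper rescales by factoring out $\xi^{1-\alpha}$ and then runs a second continuity argument to show the rescaled trace converges to $\Tr[X_1^{\alpha}] > 0$, so that the prefactor $\xi^{1-\alpha}$ forces divergence; you instead lower-bound
\begin{equation*}
\Tr\big[M(\xi)^{\alpha}\big] \;\geq\; \big\| M(\xi) \big\|^{\alpha} \;\geq\; \xi^{2s\alpha}\,\|X_{11}\|^{\alpha} \longrightarrow \infty
\end{equation*}
via the principal-submatrix bound for positive semi-definite operators. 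Your argument is more elementary (it avoids the rescaling trick and the second continuity argument), at the cost of yielding only divergence rather than the precise rate $\xi^{1-\alpha}\Tr[X_1^{\alpha}]$ — which is all the lemma requires. Your explicit isolation of the PSD block fact ($X_{00}=0$ forces $X_{01}=0$), used to characterize $Y \gg X \iff X_{11}=0$ and $X \perp Y \iff X_{00}=0$, is a clarification of a step the paper uses implicitly when it asserts that $X_1$ and $Z$ vanish under $\supp Y \supseteq \supp X$.
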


\begin{proof}
  With respect to the decomposition $\HH = \supp Y \oplus \ker Y$, we
  write $X = \footnotesize\Big(\begin{array}{cc} X_0 & Z \\
    Z^* & X_1\end{array}\Big)$ and $Y + \xi =
  \footnotesize\Big(\begin{array}{cc}
    Y_0 + \xi & 0 \\ 0 & \xi
  \end{array}\Big)
$.  Thus,
  \begin{eqnarray}
    \label{eq:expression}
    &&\frac{1}{\alpha - 1} \log \Tr\left[ (
      (Y+\xi)^{\frac{1}{2\alpha} - \frac{1}{2}} X (Y+\xi)^{\frac{1}{2\alpha} -
        \frac{1}{2}} )^\alpha \right] = \\\nonumber
    &&\frac{1}{\alpha -1} \log  \Tr \left[ 
      \begin{pmatrix}
        (Y_0 + \xi)^{\frac{1-\alpha}{2\alpha}}X_0(Y_0+\xi)^{\frac{1-\alpha}{2\alpha}} &
        \xi^{\frac{1-\alpha}{2\alpha}}(Y_0+\xi)^{\frac{1-\alpha}{2\alpha}}Z\\
        \xi^{\frac{1-\alpha}{2\alpha}}Z^*(Y_0+\xi)^{\frac{1-\alpha}{2\alpha}} & \xi^{\frac{1}{\alpha}-1}X_1
      \end{pmatrix}^\alpha
      \right] .
  \end{eqnarray}
  Consider first the case $\alpha \in (0,1)$. Notice that
  $\xi^{\frac{1-\alpha}{2\alpha}}$ goes to zero as $\xi \searrow 0$. By
  picking a basis in which $Y_0$ is diagonal, we see that $(Y_0 +
  \xi)^{\frac{1-\alpha}{2\alpha}}$ converges to $Y_0^{1/2\alpha -1/2}$ as $\xi
  \searrow 0$. Hence,
  \begin{eqnarray*}
    &&
    \begin{pmatrix}
      (Y_0 + \xi)^{\frac{1-\alpha}{2\alpha}}X_0(Y_0+\xi)^{\frac{1-\alpha}{2\alpha}} &
      \xi^{\frac{1-\alpha}{2\alpha}}(Y_0+\xi)^{\frac{1-\alpha}{2\alpha}}Z\\
      \xi^{\frac{1-\alpha}{2\alpha}}Z^*(Y_0+\xi)^{\frac{1-\alpha}{2\alpha}} &
      \xi^{\frac{1-\alpha}{\alpha}}X_1
    \end{pmatrix} \longrightarrow 
    \begin{pmatrix}
      Y_0^{\frac{1-\alpha}{2\alpha}}X_0 Y_0^{\frac{1-\alpha}{2\alpha}} &
      0\\
      0 & 0
    \end{pmatrix} .
  \end{eqnarray*}
  as $\xi \searrow 0$.
  Moreover, the eigenvalues of a Hermitian operator depend continuously
  on the operator, and hence  $\Tr Z^\alpha = \sum_j \lambda_j(Z)^\alpha$ is a continuous function of $Z$. 
  Thus, if $X_0 = 0$ the expression~\eqref{eq:expression} goes to
  $+\infty$ and if $X_0 \neq 0$ it goes to $\Dnew_{\alpha}(X_0\|Y_0)$.

  Now suppose $\alpha >1$. If $\supp Y \supseteq \supp X$, then, $X_1$ and $Z$ vanish, and hence~\eqref{eq:expression} becomes $\Dnew_{\alpha}(X_0\|Y_0)$ in the
  limit $\xi \searrow 0$. 
  If, however, $\supp Y \nsupseteq \supp X$,
  then $X_1 \neq 0$. We observe that
  \begin{eqnarray*}
    &&\Tr \left[ 
      \begin{pmatrix}
        (Y_0 + \xi)^{\frac{1-\alpha}{2\alpha}}X_0(Y_0+\xi)^{\frac{1-\alpha}{2\alpha}} &
        \xi^{\frac{1-\alpha}{2\alpha}}(Y_0+\xi)^{\frac{1-\alpha}{2\alpha}}Z\\
        \xi^{\frac{1-\alpha}{2\alpha}}Z^*(Y_0+\xi)^{\frac{1-\alpha}{2\alpha}} & \xi^{1/\alpha-1}X_1
      \end{pmatrix}^\alpha
      \right]\\
		&&=
      \xi^{1-\alpha}  \Tr \left[ 
      \begin{pmatrix}
        (\xi(Y_0 + \xi)^{-1})^{-\frac{1}{2\alpha} + \frac{1}{2}}\, X_0\, (\xi(Y_0 +
        \xi)^{-1})^{-\frac{1}{2\alpha} + \frac{1}{2}} &
        (\xi(Y_0 + \xi)^{-1})^{-\frac{1}{2\alpha} + \frac{1}{2}}\,Z\\
        Z^*\, (\xi(Y_0 + \xi)^{-1})^{-\frac{1}{2\alpha} + \frac{1}{2}}& X_1
      \end{pmatrix}^\alpha
      \right]
  \end{eqnarray*}
  diverges to $+\infty$ in the weak sense as $\xi \searrow
  0$. Indeed, since $\xi(Y_0 + \xi)^{-1} \longrightarrow 0$ and
  $\frac{\alpha-1}{2\alpha} \in [0,1/2)$ a similar continuity argument as in
  the case $\alpha \in (0,1)$ implies that
  \begin{eqnarray*}
    \begin{pmatrix}
      (\xi(Y_0 + \xi)^{-1})^{\frac{\alpha-1}{2\alpha}}\, X_0\, (\xi(Y_0 +
      \xi)^{-1})^{\frac{\alpha-1}{2\alpha}} &
      (\xi(Y_0 + \xi)^{-1})^{\frac{\alpha-1}{2\alpha}}\,Z\\
      Z^*\, (\xi(Y_0 + \xi)^{-1})^{\frac{\alpha-1}{2\alpha}}& X_1
    \end{pmatrix} \longrightarrow 
      \begin{pmatrix}
    0&0\\
    0&X_1
  \end{pmatrix}
  \end{eqnarray*}
  as $\xi \searrow 0$. Hence, the term involving the trace converges to $\Tr
  X_1^\alpha$. Since $X_1 \neq 0$, we conclude that this converging term
  is positive and bounded away from zero and infinity for small
  $\xi$. The statement follows since the prefactor $\xi^{1-\alpha}$
  diverges.
\end{proof}

We are now ready to prove Theorem~\ref{thm:axioms}.

\begin{proof}[Proof of Theorem~\ref{thm:axioms}]
  Continuity~(I) in $\rho$ and $\sigma$ is trivial except for the use of the generalized inverse in our definition. However, Lemma~\ref{prop:dalphalimittoll} shows that our definition is just a continuous extension of the definition restricted to $Y > 0$, which is evidently continuous.  Unitary Invariance~(II) follows from definition and (III) is obviously satisfied. 
  
  The Order relation~(IV) is shown as follows. First, note that due to the operator monotonicity of the function $t \mapsto t^{\beta}$ for $\beta \in (0, 1]$~(see Bhatia~\cite[Thm.~V.1.9]{bhatia97}), we have the following: $\rho \geq \sigma$ implies $\rho^{\frac12}\sigma^{\frac{1-\alpha}{\alpha}} \rho^{\frac12} \geq \rho^{\frac{1}{\alpha}}$ if $\alpha > 1$ and $\rho^{\frac12}\sigma^{\frac{1-\alpha}{\alpha}} \rho^{\frac12} \leq \rho^{\frac{1}{\alpha}}$ if $\alpha \in [\frac12,1)$. Thus, employing the Schatten norm of order $\alpha > 1$,
\begin{align}
  \tr\Big[ \big( \sigma^{\frac{1-\alpha}{2\alpha}} \rho \sigma^{\frac{1-\alpha}{2\alpha}}\big)^{\alpha} \Big] = \Big\| \sigma^{\frac{1-\alpha}{2\alpha}} \rho \sigma^{\frac{1-\alpha}{2\alpha}}\Big\|_{\alpha}^{\alpha}
  = \Big\| \rho^{\frac12}\sigma^{\frac{1-\alpha}{\alpha}} \rho^{\frac12} \Big\|_{\alpha}^{\alpha} \geq \big\| \rho^{\frac{1}{\alpha}} \big\|_{\alpha}^{\alpha} = \tr[\rho] . \label{eq:bb1}
\end{align}
Hence, $\Dnew_{\alpha}(\rho\|\sigma) \geq 0$. (We used that $X^{\dagger}X$ and $XX^{\dagger}$ have the same nonzero eigenvalues, where $X = \sigma^{\frac{1-\alpha}{2\alpha}} \rho^{\demi}$.)
If $\alpha < 1$, we directly have
$(\rho^{\frac12} \sigma^{\frac{1-\alpha}{\alpha}} \rho^{\frac12} )^{\alpha} \leq \rho$ again from operator monotonicity of $t \mapsto t^{\alpha}$. The desired statement then follows by considering that the prefactor $\frac{1}{\alpha-1}$ is negative in this case.
An analogous argument applies when $\sigma \leq \rho$. 

Additivity~(V) follows since $f(\rho \otimes \tau) = f(\rho) \otimes f(\tau)$ for all functions $f$ with the property $f(ab) = f(a) f(b)$. 
More precisely, the above property allows us to write
  \begin{eqnarray*}
    \Dnew_{\alpha}(\rho \otimes \tau \| \sigma \otimes \omega) &=& \frac{1}{\alpha-1} \log \frac{ \Tr \Big[\big(\sigma^{\aaa} \rho \sigma^{\aaa} \big)^\alpha \otimes \big( \omega^{\aaa} \tau \omega^{\aaa} \big)^\alpha \Big] }{\Tr[ \rho \otimes \tau ]} \\
    &=& \Dnew_{\alpha}(\rho\|\sigma) + \Dnew_{\alpha}(\tau\|\omega) \,.
  \end{eqnarray*}

Finally, (VI) follows since $f(\rho \oplus \tau) = f(\rho) \oplus f(\tau)$ and the trace term is thus additive. The property for $\Dnew_{\alpha}$ then follows by inspection and the choice $g_{\alpha}: t \mapsto \exp((\alpha-1)t)$.
\end{proof}

We need the following result in order to prove Theorem~\ref{thm:positive}. Let $\mathcal{E}_{\sigma}$ be a pinching in the eigenbasis of $\sigma$, i.e.\ the map $\rho \mapsto \sum_k |\psi_k\rangle\!\langle \psi_k| \rho |\psi_k\rangle\!\langle \psi_k|$ where $\{ \ket{\psi_k} \}_k$ are the eigenvectors of $\sigma$. Clearly, $\mathcal{E}_{\sigma}$ is a CPTPM.

\begin{prop} \label{prop:pinching}
  Let $\rho, \sigma \geq 0$ with $\rho \neq 0$ and $\alpha \in (0, 1) \cup (1, \infty)$. Then, we have
  \begin{align*}
    \Dnew_{\alpha}(\rho\|\sigma) \geq \Dnew_{\alpha}(\mathcal{E}_{\sigma}(\rho) \| \sigma) \,.
  \end{align*}
\end{prop}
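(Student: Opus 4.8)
The plan is to reduce the inequality to the elementary fact that the functional $A \mapsto \tr[A^{\alpha}]$ is convex on positive semi-definite operators for $\alpha>1$ and concave for $\alpha\in(0,1)$, exploiting that the pinching $\mathcal{E}_\sigma$ commutes with every power of $\sigma$. First I would record two simple but essential facts. Since $\mathcal{E}_\sigma$ is trace preserving we have $\tr[\mathcal{E}_\sigma(\rho)]=\tr[\rho]$, so both divergences carry the same normalising denominator. Moreover $\mathcal{E}_\sigma(\rho)$ is block-diagonal in the eigenbasis of $\sigma$ and satisfies $\tr[\mathcal{E}_\sigma(\rho)\,\sigma]=\tr[\rho\,\sigma]$; hence $\mathcal{E}_\sigma$ preserves the support domination $\sigma \gg \rho$ and the non-orthogonality $\rho\not\perp\sigma$, so both sides fall into the same branch of Definition~\ref{df:renyi}. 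In particular, the cases where the left-hand side equals $+\infty$ are immediate, and I may assume both quantities finite.

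The crux is the commutation identity
\begin{align*}
  \sigma^{\aaa}\,\mathcal{E}_\sigma(\rho)\,\sigma^{\aaa} = \mathcal{E}_\sigma\!\big(\sigma^{\aaa}\rho\,\sigma^{\aaa}\big),
\end{align*}
which holds because $\sigma^{\aaa}$ is diagonal in the eigenbasis $\{\ket{\psi_k}\}_k$ of $\sigma$ and thus commutes with each projector $\proj{\psi_k}$. Writing $A:=\sigma^{\aaa}\rho\,\sigma^{\aaa}\geq 0$, the problem reduces to comparing $\tr[A^{\alpha}]$ with $\tr[\mathcal{E}_\sigma(A)^{\alpha}]$.

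To control the latter I would represent the pinching as an average over unitaries diagonal in the eigenbasis of $\sigma$: setting $U_\theta=\sum_k e^{i\theta_k}\proj{\psi_k}$ with independent uniform phases $\theta_k$, one checks directly that $\mathbb{E}_\theta[U_\theta A U_\theta^{\dagger}]=\mathcal{E}_\sigma(A)$. Jensen's inequality applied to the convex functional $A\mapsto\tr[A^{\alpha}]$ (for $\alpha>1$) then gives $\tr[\mathcal{E}_\sigma(A)^{\alpha}]\leq \mathbb{E}_\theta\,\tr[(U_\theta A U_\theta^{\dagger})^{\alpha}]=\tr[A^{\alpha}]$, using unitary invariance of the trace; for $\alpha\in(0,1)$ the functional is concave and the inequality reverses. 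Finally I would combine this with the sign of the prefactor $\tfrac{1}{\alpha-1}$: for $\alpha>1$ the prefactor is positive and $\tr[A^{\alpha}]\geq\tr[\mathcal{E}_\sigma(A)^{\alpha}]$ yields $\Dnew_{\alpha}(\rho\|\sigma)\geq\Dnew_{\alpha}(\mathcal{E}_\sigma(\rho)\|\sigma)$, while for $\alpha\in(0,1)$ the negative prefactor exactly compensates the reversed trace inequality, giving the same conclusion.

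I expect the main obstacle to be the convexity/concavity of $A\mapsto\tr[A^{\alpha}]$, i.e.\ the step $\tr[\mathcal{E}_\sigma(A)^{\alpha}]\leq\tr[A^{\alpha}]$ for $\alpha>1$ and its reverse for $\alpha<1$. The averaging-over-unitaries device reduces this to a one-line Jensen argument, once one invokes the standard statement that $A\mapsto\tr[f(A)]$ is convex whenever $f$ is convex; the alternative majorization route (namely $\mathcal{E}_\sigma(A)\prec A$ together with Schur-convexity of $\lambda\mapsto\sum_j\lambda_j^{\alpha}$) is equally viable. Everything else---trace preservation, the commutation identity, and the bookkeeping of the prefactor sign---is routine.
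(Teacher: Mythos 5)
Your proof is correct, and its skeleton matches the paper's: trace preservation handles the normalization and case bookkeeping, the commutation identity $\sigma^{\aaa}\mathcal{E}_\sigma(\rho)\sigma^{\aaa}=\mathcal{E}_\sigma\big(\sigma^{\aaa}\rho\,\sigma^{\aaa}\big)$ is the pivot in both arguments, and everything reduces to comparing $\tr[A^{\alpha}]$ with $\tr[\mathcal{E}_\sigma(A)^{\alpha}]$ for $A=\sigma^{\aaa}\rho\,\sigma^{\aaa}$. Where you genuinely differ is in the tool used for that comparison. The paper treats the two regimes with two distinct black-box results from \cite{bhatia97}: for $\alpha>1$ it invokes the pinching inequality for the unitarily invariant Schatten $\alpha$-norm, and for $\alpha<1$ (where $\|\cdot\|_{\alpha}$ is not a norm) it invokes the inequality $f_{\alpha}(\mathcal{E}_\sigma(A))\geq\mathcal{E}_\sigma(f_{\alpha}(A))$ for positive unital maps, which requires \emph{operator} concavity of $t\mapsto t^{\alpha}$. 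You give a single unified argument: the averaging representation $\mathcal{E}_\sigma(A)=\mathbb{E}_\theta[U_\theta A U_\theta^{\dagger}]$ plus Jensen's inequality for the functional $A\mapsto\tr[A^{\alpha}]$, whose convexity ($\alpha>1$) resp.\ concavity ($\alpha\in(0,1)$) needs only \emph{scalar} convexity of $t^{\alpha}$ (via Peierls's inequality, or the majorization alternative you mention). This buys elementarity and uniformity: one Jensen step covers both regimes, with no operator convexity and no norm structure needed, at the modest cost of verifying the averaging representation and the trace-functional convexity yourself. It is worth noting that the two routes are close relatives for $\alpha>1$: the pinching inequality for unitarily invariant norms is itself proved by writing the pinching as a convex combination of unitary conjugations and applying the triangle inequality, so your argument essentially inlines that proof while replacing the triangle inequality by Jensen, which is exactly what lets it survive the regime $\alpha<1$ where the triangle inequality is unavailable.
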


\begin{proof}
  It suffices to show the claim for normalized $\rho$ and $\sigma$ since the pinching is a CPTPM and thus does not effect the trace.
  We have $\sigma^{\aaa} \mathcal{E}_{\sigma}(\rho) \sigma^{\aaa} = \mathcal{E}_{\sigma}\big(\sigma^{\aaa} \rho \sigma^{\aaa} \big)$ since the projectors $\ket{\psi_k}\!\bra{\psi_k}$ commute with $\sigma^{\aaa}$. For $\alpha > 1$, the case $\sigma \not\gg \rho$ is trivial, and otherwise we may write
  \begin{align*}
    \Dnew_{\alpha}(\mathcal{E}_{\sigma}(\rho) \| \sigma) = \frac{\alpha}{\alpha -1} \log \Big\| \mathcal{E}_{\sigma}\big(\sigma^{\aaa} \rho \sigma^{\aaa} \big) \Big\|_{\alpha} \leq 
    \frac{\alpha}{\alpha -1} \log \Big\| \sigma^{\aaa} \rho \sigma^{\aaa} \Big\|_{\alpha} = \Dnew_{\alpha}(\rho\|\sigma)
  \end{align*}
  where the inequality follows from the pinching inequality~\cite[Eq.~(IV.52)]{bhatia97} for the unitarily invariant Schatten $\alpha$ norm. For $\alpha < 1$, we use~\cite[Thm.~V.2.1]{bhatia97} which implies that $f_{\alpha}\big(\mathcal{E}_{\sigma}\big(\sigma^{\aaa} \rho \sigma^{\aaa} \big) \big) \geq \mathcal{E}_{\sigma} \big( f_{\alpha}\big(\sigma^{\aaa} \rho \sigma^{\aaa} \big) \big)$ for the operator concave function $f_{\alpha}: t \mapsto t^{\alpha}$~\cite[Thm.~V.19]{bhatia97}. Thus,
  \begin{align*}
    \Dnew_{\alpha}(\mathcal{E}_{\sigma}(\rho) \| \sigma) &= \frac{1}{\alpha -1} \log \tr \Big[ f_{\alpha}\Big( \mathcal{E}_{\sigma} \big( \sigma^{\aaa} \rho \sigma^{\aaa} \big) \Big)\Big]  \\
    &\leq \frac{1}{\alpha -1} \log \tr \Big[ \mathcal{E}_{\sigma} \Big( f_{\alpha}\big(\sigma^{\aaa} \rho \sigma^{\aaa} \big) \Big)\Big] = \Dnew_{\alpha}(\rho\|\sigma) \,. \qedhere
  \end{align*}
\end{proof}

\begin{proof}[Proof of Theorem~\ref{thm:positive}]
  Again, we separate the contributions due to the trace which leads to an additive term $\log \frac{\tr[\rho]}{\tr[\sigma]}$ which is positive when $\tr[\rho] \geq \tr[\sigma]$. Thus, it remains to show positivity for normalized $\rho$ and $\sigma$.
  By Proposition~\ref{prop:pinching} we can establish that
  $\Dnew_{\alpha}(\rho\|\sigma) \geq \Dnew_{\alpha}(\mathcal{E}_{\sigma}(\rho) \| \sigma) \geq 0$, inheriting this property from the commutative R\'enyi divergence~\cite{csiszar95}.
  
  If $\rho = \sigma$, we further have that $\Dnew_{\alpha}(\rho\|\sigma) = 0$ from Property~(IV).
\end{proof}

\begin{proof}[Proof of Proposition~\ref{thm:bigger-sigma}]
    First, assume that $\alpha > 1$ and that $\sigma \gg \rho$. Then, we express $\Dnew_{\alpha}$ via Definition \ref{def:ueber-rel-entropy} and write:
    \begin{align*}
        \Dnew_{\alpha}(\rho \| \sigma) &= \sup_{\tr[\tau]\leq 1} \Dnew_{\alpha}(\rho \| \sigma ; \tau)\\
        &= \frac{\alpha}{\alpha-1} \log \sup_{\tr[\tau] \leq 1 \atop \tau \gg \rho^{\demi} \sigma^{\frac{1-\alpha}{\alpha}} \rho^{\demi}} \tr\left[ \rho^{\onehalf} \sigma^{\frac{1-\alpha}{\alpha}} \rho^{\onehalf} \tau^{\frac{\alpha-1}{\alpha}}\right].
    \end{align*}
    Now, the function $x \mapsto x^{\frac{1-\alpha}{\alpha}}$ is operator monotone decreasing, since $\frac{1-\alpha}{\alpha} \in (-1,0)$. Hence, for any fixed $\tau$, we have that
    \[ \tr\left[ \rho^{\onehalf} \sigma^{\frac{1-\alpha}{\alpha}} \rho^{\onehalf} \tau^{\frac{\alpha-1}{\alpha}}\right] \geq \tr\left[ \rho^{\onehalf} {\sigma'}^{\frac{1-\alpha}{\alpha}} \rho^{\onehalf} \tau^{\frac{\alpha-1}{\alpha}}\right], \]
    and the result follows for this case by choosing the $\tau$ that achieves the maximum for $\sigma'$. If $\sigma \not\gg \rho$, then $\Dnew_{\alpha}(\rho \| \sigma) = \infty$ and the statement is trivially true.

    If $\alpha \in [\onehalf, 1)$, we simply write:
    \begin{align*}
        \Dnew_{\alpha}(\rho \| \sigma) = \frac{1}{\alpha-1} \log \tr\left[ (\rho^{\onehalf} \sigma^{\frac{1-\alpha}{\alpha}} \rho^{\onehalf})^{\alpha}\right].
    \end{align*}
    Now, since $x \mapsto x^{\frac{1-\alpha}{\alpha}}$ and $x \mapsto x^{\alpha}$ are both operator monotone, we have that
    \begin{align*}
        (\rho^{\onehalf} \sigma^{\frac{1-\alpha}{\alpha}} \rho^{\onehalf})^{\alpha} &\leq (\rho^{\onehalf} {\sigma'}^{\frac{1-\alpha}{\alpha}} \rho^{\onehalf})^{\alpha},
    \end{align*}
    and the result follows, noting that the prefactor $\frac{1}{\alpha-1}$ is negative.
\end{proof}

\subsection{Limits for $\alpha \to 1$ and $\alpha \to \infty$}
\label{sec:proof-limits}

In order to prove convergence to the von Neumann entropy, we first need to evaluate the following derivative.
The technique is taken from~\cite[Thm.~2.7]{boguslaw}.

\begin{prop}
  \label{smalldprop}
  Let $X, Y > 0$. Define $Z_{\alpha} = \yb X \yb$. Then, $(0,\infty) \ni \alpha \longmapsto \Tr[Z_{\alpha}^{\alpha}]$ is differentiable at $\alpha = 1$ and
  \begin{equation} \label{eq:derivativeatone}
    \frac{d}{d\alpha} \tr[Z_{\alpha}^{\alpha}] = \tr[Z_{\alpha}^{\alpha} \ln Z_{\alpha}] - \frac{1}{\alpha} \tr[Z_{\alpha}^{\alpha} \log Y] , \qquad
    \frac{d}{d\alpha}\bigg|_{\alpha=1} \tr[Z_{\alpha}^{\alpha}] = \ln 2 \; D(X\|Y).
  \end{equation}
\end{prop}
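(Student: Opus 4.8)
The plan is to disentangle the two roles played by $\alpha$ in $\tr[Z_\alpha^\alpha]$: it appears both inside $Z_\alpha$, through the exponent $\frac{1-\alpha}{2\alpha}$ on $Y$, and as the outer power. I would therefore introduce the two-variable function $F(s,\beta) := \tr[Z_s^\beta]$ with $Z_s := Y^{\frac{1-s}{2s}} X Y^{\frac{1-s}{2s}}$, so that the quantity of interest is $F(\alpha,\alpha)$, and write $\frac{d}{d\alpha}\tr[Z_\alpha^\alpha] = \partial_s F(s,\beta)\big|_{s=\beta=\alpha} + \partial_\beta F(s,\beta)\big|_{s=\beta=\alpha}$. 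Since $X,Y>0$, the operator $Z_s$ is positive definite with eigenvalues bounded away from $0$ on a neighbourhood of any $\alpha\in(0,\infty)$, so the matrix power and logarithm are analytic there and both partial derivatives (and the chain rule) are justified; in particular differentiability at $\alpha=1$ holds.

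The outer-power contribution $\partial_\beta F$ is immediate: writing $Z_s^\beta = \exp(\beta\ln Z_s)$ and differentiating in $\beta$ with $Z_s$ held fixed gives $\partial_\beta\tr[Z_s^\beta] = \tr[Z_s^\beta\ln Z_s]$, so this contributes the first term $\tr[Z_\alpha^\alpha\ln Z_\alpha]$.

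The inner contribution $\partial_s F$ is where the non-commutativity must be handled, and here I would use the standard trace-derivative identity $\frac{d}{ds}\tr[f(Z_s)] = \tr\big[f'(Z_s)\dot Z_s\big]$, valid under the trace even when $Z_s$ and $\dot Z_s$ do not commute. With $f(x)=x^\beta$ this gives $\partial_s\tr[Z_s^\beta] = \beta\,\tr[Z_s^{\beta-1}\dot Z_s]$. Setting $c(s):=\frac{1-s}{2s}$, so that $c'(s)=-\frac{1}{2s^2}$, the product rule together with $\frac{d}{ds}Y^{c(s)} = c'(s)(\ln Y)Y^{c(s)}$ and the fact that $\ln Y$ commutes with $Y^{c(s)}$ yields
\[
  \dot Z_s = c'(s)\big((\ln Y)Z_s + Z_s(\ln Y)\big).
\]
Inserting this and using cyclicity of the trace, both resulting terms collapse to $\tr[Z_s^\beta\ln Y]$, so that $\partial_s\tr[Z_s^\beta] = 2\beta c'(s)\,\tr[Z_s^\beta\ln Y] = -\frac{\beta}{s^2}\tr[Z_s^\beta\ln Y]$; evaluating at $s=\beta=\alpha$ gives the second term $-\frac{1}{\alpha}\tr[Z_\alpha^\alpha\ln Y]$. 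Summing the two contributions establishes the first displayed identity. To obtain the second, I would specialise to $\alpha=1$, where $c(1)=0$ forces $Z_1=X$, so the derivative becomes $\tr[X\ln X]-\tr[X\ln Y] = \tr\big[X(\ln X-\ln Y)\big]$; converting the natural logarithm to the base-$2$ logarithm used in the definition of $D$ produces the factor $\ln 2$ and identifies this with $\ln 2\,D(X\|Y)$.

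The main obstacle is the rigorous proof of the trace-derivative identity $\frac{d}{ds}\tr[f(Z_s)] = \tr[f'(Z_s)\dot Z_s]$ for the non-commuting family $Z_s$. I would justify it either by passing to the instantaneous eigenbasis of $Z_s$, in which the off-diagonal entries of $\dot Z_s$ do not contribute to the trace, or via Duhamel's formula for $\frac{d}{ds}\exp(\beta\ln Z_s)$ followed by cyclicity; this is precisely the technique of~\cite[Thm.~2.7]{boguslaw}. The remaining steps are routine, the only subtlety being to keep track of the logarithm base so that the factor $\ln 2$ emerges correctly.
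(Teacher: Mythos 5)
Your proof is correct, and it arrives at the paper's formula through the same underlying decomposition: a contribution $\tr[Z_\alpha^\alpha \ln Z_\alpha]$ from the outer exponent and a contribution $\alpha\tr[Z_\alpha^{\alpha-1}\dot Z_\alpha]$ from the $\alpha$-dependence inside $Z_\alpha$, with the same $\dot Z_\alpha = -\frac{1}{2\alpha^2}\big(\ln Y\, Z_\alpha + Z_\alpha \ln Y\big)$ and the same use of cyclicity to collapse everything to $-\frac{1}{\alpha}\tr[Z_\alpha^\alpha \ln Y]$. Where you differ is in how that decomposition is obtained and justified. The paper computes the difference quotient $\frac{1}{h}\big(\Tr Z_{\alpha+h}^{\alpha+h}-\Tr Z_\alpha^\alpha\big)$ by hand, using the interpolation identity $Z_{\alpha+h}^{\alpha+h}-Z_\alpha^\alpha = \int_0^1 \frac{d}{ds}\big(Z_{\alpha+h}^{s(\alpha+h)}Z_\alpha^{(1-s)\alpha}\big)\,ds$, and then the resolvent representation of the logarithm together with $\int_0^\infty (x+s)^{-2}\,ds = x^{-1}$ to convert $\Tr\big[Z_\alpha^\alpha \tfrac{d}{d\alpha}\ln Z_\alpha\big]$ into $\alpha\Tr[Z_\alpha^{\alpha-1}\dot Z_\alpha]$. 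You instead separate the two roles of $\alpha$ cleanly via $F(s,\beta)=\tr[Z_s^\beta]$ and the chain rule, and delegate the inner derivative to the standard perturbation formula $\frac{d}{ds}\tr[f(Z_s)]=\tr[f'(Z_s)\dot Z_s]$. That shortcut is legitimate here: since $Y>0$, the family $s\mapsto Z_s$ is analytic and uniformly positive definite near any $\alpha$, so $F$ is jointly smooth and the identity holds (e.g.\ by polynomial approximation of $f'$, or by your Duhamel route, which, once written out, is precisely the paper's resolvent computation). Your packaging is more modular and sidesteps the limit--integral exchanges that the paper performs without comment; the paper's version is more self-contained, re-deriving the perturbation formula inline. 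Two minor points your argument makes visible: the ``$\log Y$'' in the paper's displayed derivative should be ``$\ln Y$'' (as your $\alpha=1$ evaluation and the factor $\ln 2$ require), and the final identification with $\ln 2\, D(X\|Y)$ implicitly uses $\tr[X]=1$, an assumption both you and the paper leave tacit.
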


\begin{proof}
  For $\alpha \in (0,\infty)$ recall $Z_\alpha := \yb X \yb > 0$. We write
  \begin{eqnarray*}
    Z_{\alpha + h}^{\alpha + h} - Z_\alpha^\alpha & = & \int_0^1 ds \frac{d}{ds} Z_{\alpha + h}^{s(\alpha +h)} Z_\alpha^{(1-s)\alpha}\\
    & = & \int_0^1 ds \left( Z_{\alpha + h}^{s(\alpha +h)} \left( \ln{Z_{\alpha + h}^{\alpha + h}} - \ln Z^{\alpha}_\alpha \right) Z_\alpha^{(1-s)\alpha} \right).
  \end{eqnarray*}
  Taking the trace, we obtain
  \begin{eqnarray*}
    \Tr Z^{\alpha + h}_{\alpha + h} - \Tr Z^\alpha_\alpha & = & (\alpha + h) \int_0^1 ds \Tr[ Z_\alpha^{(1-s)\alpha} Z_{\alpha + h}^{s(\alpha + h)} (\ln Z_{\alpha + h} - \ln Z_\alpha)] \\
    && + h \int_0^1 ds \Tr[Z_\alpha^{(1-s)\alpha} Z_{\alpha + h}^{s(\alpha + h)} \ln Z_{\alpha}].
  \end{eqnarray*}
  We take the limit
  \begin{multline}\label{eq:plugderivative} 
    \lim_{h \rightarrow 0} \frac{1}{h} \left( \Tr Z^{\alpha + h}_{\alpha + h} - \Tr Z^\alpha_\alpha \right)\\
    \begin{split}
    &= \alpha \int_0^1 ds \Tr\left[ Z_\alpha^{(1-s)\alpha} Z_{\alpha}^{s\alpha} \lim_{h \rightarrow 0}\frac{1}{h} (\ln Z_{\alpha + h} - \ln Z_\alpha)\right] + \int_0^1 ds \Tr\left[Z_\alpha^{(1-s)\alpha} Z_{\alpha}^{s\alpha} \ln Z_{\alpha}\right]\\
    &= \alpha \Tr\left[Z_\alpha^\alpha \frac{d}{d\beta}\bigg|_{\alpha} \ln Z_\beta\right] + \Tr[Z_\alpha^\alpha \ln Z_\alpha].
    \end{split}
  \end{multline}
  Here, we have used the fact that $Z_\alpha$ is invertible (otherwise the product $Z_\alpha \ln Z_{\alpha + h}$ would not be well-defined).  The formula $\ln x = \int_0^\infty ds (\frac{1}{1+s}-\frac{1}{x+s})$ yields the integral representation
  \begin{equation*}
    \ln Z_\alpha = \int_0^\infty ds \left( \frac{1}{\ident + s} - \frac{1}{Z_\alpha + s} \right).
  \end{equation*}
  We use it to compute
  \begin{equation*}
    \frac{d}{d\alpha} \ln Z_\alpha = \int_0^\infty ds \left(\frac{1}{Z_\alpha + s}\right) \frac{dZ_\alpha}{d\alpha} \left(\frac{1}{Z_\alpha + s}\right).
  \end{equation*}
  Plugging this into Equation \eqref{eq:plugderivative} and using the cyclicity of the trace, we obtain
  \begin{equation*}
    \lim_{h \rightarrow 0} \frac{1}{h} \left( \Tr Z^{\alpha + h}_{\alpha + h} - \Tr Z^\alpha_\alpha \right) = \alpha \Tr\left[Z_\alpha^\alpha \int_0^\infty ds \left(\frac{1}{Z_\alpha + s}\right)^2 \frac{dZ_\alpha}{d\alpha}\right] + \Tr[Z_\alpha^\alpha \ln Z_\alpha].
  \end{equation*}
  Since $x^{-1} = \int_0^\infty ds \left(\frac{1}{s+x}\right)^2$, we obtain,
  \begin{equation} \label{eq:plugderivativeagain}
      \lim_{h \rightarrow 0} \frac{1}{h} \left( \Tr Z^{\alpha + h}_{\alpha + h} - \Tr Z^\alpha_\alpha \right) = \alpha \Tr\left[Z_\alpha^{\alpha-1} \frac{dZ_\alpha}{d\alpha}\right] + \Tr[Z_\alpha^\alpha \ln Z_\alpha].
  \end{equation}
  Next, we compute 
  \begin{eqnarray*}
    \frac{dZ_\alpha}{d\alpha} &=& \frac{d}{d\alpha}\; \yb X \yb \\
    & = & -\frac{1}{2\alpha^2} (\ln Y\; Z_\alpha + Z_\alpha \ln Y).
  \end{eqnarray*}
  One can verify this calculation using a spectral decomposition of $Y$. We plug this into Equation \eqref{eq:plugderivativeagain} and once again use the cyclicity of the trace to obtain
  \begin{equation*}
    \frac{d}{d\alpha} \tr[Z_{\alpha}^{\alpha}] = \lim_{h \rightarrow 0} \frac{1}{h} \left( \Tr Z^{\alpha + h}_{\alpha + h} - \Tr Z^\alpha_\alpha \right) = \Tr[Z_\alpha^\alpha \ln Z_\alpha] - \frac{1}{\alpha} \Tr[Z_\alpha^\alpha \ln Y ].
  \end{equation*}
  Taking the limit $\alpha \longrightarrow 1$ proves the lemma.
\end{proof}

We now turn to the proof of Theorem~\ref{thm:limits}.

\begin{proof}[Proof of Theorem~\ref{thm:limits}]
Note first that due to additivity and the normalization condition, we can always write $\Dnew_{\alpha}( \rho \| \sigma ) = \Dnew_{\alpha}( \tr[\rho] X \| \tr[\sigma] Y ) = \Dnew_{\alpha}(X\|Y) + \log \frac{\tr[\rho]}{\tr[\sigma]}$. Thus, it suffices to show convergence for normalized $X$ and $Y$. Recall that 
$$\Dnew_{\alpha}(X\|Y) = \frac{1}{\alpha -1} \log \Tr\left[ \Big(\yb X \yb\Big)^\alpha \right]$$ 
for all $\alpha \in (0,\infty)\setminus \{1\}$.

We first show convergence to the von Neumann entropy. First, assume that $X, Y > 0$. Then, by L'Hôpital's rule:
  \begin{eqnarray*}
    \lim_{\alpha \rightarrow 1} \Dnew_{\alpha}(X\|Y) &=& \lim_{\alpha \rightarrow 1} \frac{1}{\alpha -1} \log \Tr\left[ \left(\yb X \yb \right)^\alpha \right] \\
    &=& \frac{d}{d\alpha}\bigg|_{\alpha = 1} \log \Tr\left[ \left(\yb X \yb\right)^\alpha \right] \\
    &=& \frac{1}{\ln 2 \cdot \Tr[X]} \frac{d}{d\alpha}\bigg|_{\alpha = 1} \Tr\left[ \left(\yb X \yb\right)^\alpha \right] .
  \end{eqnarray*}
  We now apply Proposition~\ref{smalldprop}. This proposition also holds for $X,Y$ noninvertible with $Y \gg X$ by restricting the Hilbert space to $\supp Y$ (see also~\cite[p.~256]{boguslaw}). If $Y \not\gg X$, then $\Dnew_1(X\|Y) = \infty$ and likewise $\Dnew_{\alpha}(X\|Y) = \infty$ for $\alpha > 1$. It is also possible to show that $\Dnew_{\alpha}(X\|Y)$ then diverges as $\alpha \nearrow 1$. In this case, we note that $\lim_{\alpha \nearrow 1} \tr[(\yb X \yb)^{\alpha}] = \tr[PXP] < 1$ where $P$ is the projector onto the support of $Y$. Hence, $\lim_{\alpha \nearrow 1} \frac{1}{\alpha-1} \log \tr[(\yb X \yb)^{\alpha}] = \infty$.

To show convergence to the max relative entropy, we first write
  \begin{equation}\label{eq:lemmasplit} 
      \left\| \yb X \yb \right\|_{\alpha} = \left\| Y^{-\onehalf} X Y^{-\onehalf} \right\|_{\alpha} + \left\| \yb X \yb \right\|_{\alpha} - \left\| Y^{-\onehalf} X Y^{-\onehalf} \right\|_{\alpha}.
  \end{equation}
  By the reverse triangle inequality for the $\alpha$ norm on $\C^n$, we have that
  \begin{equation*}
      \left| \| \yb X \yb \|_{\alpha} - \| Y^{-\onehalf} X Y^{-\onehalf} \|_{\alpha} \right| \leq \left\| \yb X \yb - Y^{-\onehalf} X Y^{-\onehalf} \right\|_{\alpha}.
  \end{equation*}
  So we have that
  \begin{align*}
      \lim_{\alpha \rightarrow \infty} \Dnew_{\alpha}(X\|Y) &= \lim_{\alpha \rightarrow \infty} \frac{\alpha}{\alpha-1} \log \left\| \yb X \yb \right\|_{\alpha}\\
      &\leq  \log \left( \lim_{\alpha \rightarrow \infty}\left\| Y^{-\onehalf} X Y^{-\onehalf} \right\|_{\alpha} + \lim_{\alpha \rightarrow \infty}\left\| \yb X \yb - Y^{-\onehalf} X Y^{-\onehalf} \right\|_{\alpha} \right)\\
      &\leq \log \left( \lim_{\alpha \rightarrow \infty} \left\| Y^{-\onehalf} X Y^{-\onehalf} \right\|_{\alpha} + (\dim \mathcal{H})\lim_{\alpha \rightarrow \infty}  \left\| \yb X \yb - Y^{-\onehalf} X Y^{-\onehalf} \right\|_{\infty} \right)\\
      &= \log \left\| Y^{-\onehalf} X Y^{-\onehalf} \right\|_{\infty}\\
      &= D_{\max}(X\| Y).
  \end{align*}
  Likewise,
  \begin{align*}
      \lim_{\alpha \rightarrow \infty} \Dnew_{\alpha}(X\|Y) &= \lim_{\alpha \rightarrow \infty} \frac{\alpha}{\alpha-1} \log \left\| \yb X \yb \right\|_{\alpha}\\
      &\geq  \log \left( \lim_{\alpha \rightarrow \infty}\left\| Y^{-\onehalf} X Y^{-\onehalf} \right\|_{\alpha} - \lim_{\alpha \rightarrow \infty}\left\| \yb X \yb - Y^{-\onehalf} X Y^{-\onehalf} \right\|_{\alpha} \right)\\
      &\geq \log \left( \lim_{\alpha \rightarrow \infty} \left\| Y^{-\onehalf} X Y^{-\onehalf} \right\|_{\alpha} - (\dim \mathcal{H})\lim_{\alpha \rightarrow \infty}  \left\| \yb X \yb - Y^{-\onehalf} X Y^{-\onehalf} \right\|_{\infty} \right)\\
      &= \log \left\| Y^{-\onehalf} X Y^{-\onehalf} \right\|_{\infty}\\
      &= D_{\max}(X\| Y). \qedhere
  \end{align*} 
\end{proof}

\subsection{Joint Convexity and Data-Processing}
\label{sec:proof-dp}

In order to prove Theorem~\ref{thm:data-proc}, it is sufficient to prove that  {$\exp((\alpha-1)\Dnew_{\alpha}(\cdot\|\cdot))$} is jointly convex. We break the proof up into three small lemmas, where $\mathcal{P}$ denotes the set of positive semi-definite operators and $\mathcal{P}_+$ the set of strictly positive operators.
\begin{lem}
  \label{lem:rotimesr}
  For $\beta \in [0,1]$, the map $F: \mathcal{P} \oplus \mathcal{P} \ni (L,R) \longmapsto L^\beta \otimes (R^{\T})^{1-\beta}$ is jointly operator concave.
\end{lem}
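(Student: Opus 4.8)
The plan is to recognise $F$ as a weighted operator geometric mean evaluated on the tensor product space, and then to invoke the (known) joint operator concavity of such means together with the fact that concavity is preserved under composition with linear maps. For positive operators $X,Y$ on a common space and $s\in[0,1]$, recall the weighted operator geometric mean $X \#_s Y := X^{\onehalf}\bigl(X^{-\onehalf} Y X^{-\onehalf}\bigr)^{s} X^{\onehalf}$, extended to the positive semi-definite cone by continuity. The key starting identity is
\begin{align*}
  F(L,R) = L^{\beta}\otimes (R^{\T})^{1-\beta} = (L\otimes\ident)\,\#_{1-\beta}\,(\ident\otimes R^{\T}),
\end{align*}
valid for all $(L,R)\in\mathcal{P}\oplus\mathcal{P}$ and $\beta\in[0,1]$.

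First I would establish this identity. Setting $X := L\otimes\ident$ and $Y := \ident\otimes R^{\T}$, these two operators are positive semi-definite and commute, so the geometric mean collapses to a product of commuting powers, $X\#_{1-\beta}Y = X^{\beta}Y^{1-\beta}$. Substituting $X^{\beta} = L^{\beta}\otimes\ident$ and $Y^{1-\beta} = \ident\otimes (R^{\T})^{1-\beta}$ and multiplying gives exactly $F(L,R)$; note that $R\geq 0$ implies $R^{\T}\geq 0$, so $Y$ is well defined. Next I would recall the standard fact (due to Ando; see also~\cite{bhatia97}) that for each fixed $s\in[0,1]$ the map $(X,Y)\mapsto X\#_s Y$ is jointly operator concave on pairs of positive operators. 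Finally, since $L\mapsto L\otimes\ident$ and $R\mapsto\ident\otimes R^{\T}$ are linear and positivity-preserving — the transpose being linear and mapping the positive cone into itself — the map $(L,R)\mapsto\bigl(L\otimes\ident,\ \ident\otimes R^{\T}\bigr)$ is linear in $(L,R)$, and composing a jointly operator concave map with linear maps in each argument preserves joint operator concavity. Combining the three steps yields the claim.

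The main obstacle is the joint operator concavity of the weighted geometric mean itself; this is the only genuinely nontrivial ingredient, and I would simply cite it rather than reprove it. Should a self-contained argument be preferred, one can instead start from the integral representation $t^{\beta} = \frac{\sin(\beta\pi)}{\pi}\int_0^{\infty} \frac{t}{t+u}\,u^{\beta-1}\,du$ for $\beta\in(0,1)$ and reduce the concavity to that of the parallel-sum building block $(X,Y)\mapsto X(X+Y)^{-1}Y$, along the lines of Ando's original route. Two minor points remain to be checked: the boundary cases $\beta\in\{0,1\}$, where $F$ is linear and hence trivially concave; and the passage from strictly positive operators in $\mathcal{P}_+$ (where the mean is manifestly defined) to all of $\mathcal{P}$, which follows since $F$ is continuous on the closed cone for $\beta\in[0,1]$ and operator concavity is preserved under taking limits.
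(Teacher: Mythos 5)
Your proof is correct, and it reaches the lemma by a different packaging than the paper's. The paper never mentions the geometric mean: it invokes Theorem~5.14 of Wolf's lecture notes~\cite{wolf-ln} --- a composed operator-perspective theorem stating that $(L,R)\mapsto g(R)^{\onehalf} f\bigl(g(R)^{-\onehalf}h(L)\,g(R)^{-\onehalf}\bigr)g(R)^{\onehalf}$ is jointly operator convex whenever $f$ is operator convex, $h$ is positive affine, and $g$ is positive and operator concave --- applied with $f(x)=-x^{\beta}$, $h(L)=L\otimes\ident$ and $g(R)=\ident\otimes R^{\T}$; since that theorem needs $g(R)$ invertible, the paper then extends from $\mathcal{P}\oplus\mathcal{P}_+$ to $\mathcal{P}\oplus\mathcal{P}$ by approximating $R_1,R_2$ with strictly positive operators and passing to the limit, using continuity of $F$ in the second argument. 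Your identity $F(L,R)=(L\otimes\ident)\,\#_{1-\beta}\,(\ident\otimes R^{\T})$ is precisely what that application computes implicitly, because the operator perspective of $t\mapsto t^{\beta}$ \emph{is} the weighted geometric mean; so both proofs outsource the same hard kernel, Lieb--Ando joint concavity, and neither is more elementary than the other. What your route buys: it exposes the lemma as (up to the harmless transpose) literally Ando's theorem that $(A,B)\mapsto A^{\beta}\otimes B^{1-\beta}$ is jointly concave, makes the composition-with-linear-positivity-preserving-maps step explicit, and absorbs the singular-$R$ case into the standard continuity extension of the mean rather than a separate density argument. What the paper's route buys: the very same Theorem~5.14 is reused immediately in Lemma~\ref{lem:cancel}, with the present lemma supplying the concave map $R\mapsto R^{\beta}\otimes(R^{\T})^{1-\beta}$ that is fed into it, so a single tool drives both steps of the data-processing proof. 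Two cosmetic points: at $\beta\in\{0,1\}$ your claim that $F$ is linear presumes the convention $X^{0}=\ident$ (with the support-projector convention it is still jointly concave, just not linear), and your closing limit argument should be read as: concavity on the dense set $\mathcal{P}_+\oplus\mathcal{P}_+$ together with continuity of $F$ on the closed cone yields concavity everywhere --- which is exactly the paper's final step as well.
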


\begin{proof}
    Applying Theorem 5.14 from \cite{wolf-ln} to $h: L \longmapsto L \otimes \ident, g: R\longmapsto \ident \otimes R^{\T}$, and $f(x) = -x^\beta$ shows that the map $F$ restricted to $\mathcal{P} \oplus \mathcal{P}_+$ is jointly operator concave.
  
  Now, let $\lambda \in (0,1)$ and $ L_1,L_2,R_1,R_2 \in \mathcal{P}$. Since invertible matrices are dense in the space of matrices, there exist families $\{R_1^{(n)}\}_n$ and $\{R_2^{(n)}\}_n$ of operators in $\mathcal{P}_+$ that converge to $R_1$ and $R_2$, respectively. Moreover, $\lambda R_1^{(n)} + (1-\lambda) R_2^{(n)}$ is strictly positive for all $n$.  Hence,
  \begin{equation*}
    F\left(\lambda L_1 + (1-\lambda) L_2, \lambda R_1^{(n)} + (1-\lambda) R_2^{(n)}\right) \leq \lambda F\left(L_1, R_1^{(n)}\right) + (1-\lambda) F\left(L_2, R_2^{(n)}\right).
  \end{equation*}
  The claim follows from continuity of $F$ in the second argument in the limit $n\longrightarrow \infty$.
\end{proof}

\begin{lem}
  \label{lem:cancel}
  For $\alpha \in [1,2]$ and $\beta \in [0,1]$, the map 
  \[ F: \mathcal{P} \oplus \mathcal{P}_+ \ni (L,R) \longmapsto R^{\beta/2}\left( R^{-\beta/2}LR^{-\beta/2}\right)^\alpha R^{\beta/2} \otimes (R^{\T})^{(1-\alpha)(1-\beta)} \]
  is jointly operator convex.
\end{lem}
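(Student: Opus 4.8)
The plan is to recognize $F$ as a non-commutative perspective of the operator convex function $f(x)=x^{\alpha}$ (operator convex precisely because $\alpha\in[1,2]$) and to deduce joint convexity from the perspective machinery already used in Lemma~\ref{lem:rotimesr} (Wolf's Theorem~5.14). Working on the doubled space $\HH\otimes\HH$, I set
\[
\mathbf{A} := R^{\beta}\otimes (R^{\T})^{(1-\alpha)(1-\beta)}, \qquad \mathbf{B} := L\otimes (R^{\T})^{(1-\alpha)(1-\beta)},
\]
both positive on $\mathcal{P}\oplus\mathcal{P}_+$. The decisive algebraic observation — the cancellation that gives the lemma its name — is that the transpose factors in $\mathbf{A}^{-1/2}$ and $\mathbf{B}$ annihilate each other inside the argument,
\[
\mathbf{A}^{-1/2}\mathbf{B}\,\mathbf{A}^{-1/2} = \big(R^{-\beta/2}LR^{-\beta/2}\big)\otimes\ident ,
\]
so that $F(L,R)=\mathbf{A}^{1/2}\big(\mathbf{A}^{-1/2}\mathbf{B}\,\mathbf{A}^{-1/2}\big)^{\alpha}\mathbf{A}^{1/2}$ is exactly the perspective of $f$ evaluated at $(\mathbf{B},\mathbf{A})$. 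This also explains why the exponent must be $(1-\alpha)(1-\beta)$: it is the unique choice making the right transpose factor cancel against the $\beta$-dependent conjugation.

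Next I would prove joint convexity by first disposing of the two endpoints and then interpolating. For $\beta=1$ we have $\mathbf{A}=R,\ \mathbf{B}=L$ and $F$ is the bare perspective $R^{1/2}(R^{-1/2}LR^{-1/2})^{\alpha}R^{1/2}$, jointly convex directly by the perspective theorem. For $\beta=0$ we get $F=L^{\alpha}\otimes (R^{\T})^{1-\alpha}$; this is the exact convex-range analogue of Lemma~\ref{lem:rotimesr}, obtained from Wolf's Theorem~5.14 applied to the linear maps $h:L\mapsto L\otimes\ident$ and $g:R\mapsto\ident\otimes R^{\T}$ with $f(x)=x^{\alpha}$ (here the transpose factor carries exponent $1-\alpha\in[-1,0]$, on which $x\mapsto x^{1-\alpha}$ is operator convex, so all signs are consistent). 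For general $\beta\in(0,1)$ one combines the joint convexity of the perspective with the operator concavity of $R\mapsto R^{\beta}$ (valid since $\beta\in[0,1]$, exactly as exploited in Lemma~\ref{lem:rotimesr}) and the operator convexity of the transpose power, using that the perspective is monotone nonincreasing in its second slot so that substituting the concave $R^{\beta}$ preserves convexity.

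The main obstacle is precisely this last step: unlike in Lemma~\ref{lem:rotimesr}, the dependence on $R$ is genuinely nonlinear — $R$ enters both through the conjugating powers $R^{\pm\beta/2}$ and through the transpose tensor factor — so joint convexity of the abstract perspective in $(\mathbf{B},\mathbf{A})$ does not descend to convexity in $(L,R)$ by mere precomposition with linear maps. The crux is therefore to establish the correct antitonicity of the perspective in its weight argument and to check that, once the cancellation above has decoupled the nonlinearity into the inner factor $(R^{-\beta/2}LR^{-\beta/2})\otimes\ident$, the operator-concave substitution $R\mapsto R^{\beta}$ and the operator-convex transpose factor compose with the perspective in the right direction. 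Careful bookkeeping of the exponent ranges — $\alpha\in[1,2]$ for operator convexity of $f$, $\beta\in[0,1]$ for concavity of $R^{\beta}$, and $(1-\alpha)(1-\beta)\in[-1,0]$ for convexity of the transpose factor — is what makes all the monotonicity and convexity directions line up.
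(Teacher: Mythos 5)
Your algebraic identities are correct, and the endpoint cases $\beta=1$ (the bare perspective) and $\beta=0$ (the map $L^{\alpha}\otimes(R^{\T})^{1-\alpha}$) are fine, but the argument for general $\beta\in(0,1)$ has a genuine gap, and it is caused by your choice of decomposition. By insisting that the transpose factor cancel \emph{inside} the argument, you are forced to take $\mathbf{A}=R^{\beta}\otimes(R^{\T})^{(1-\alpha)(1-\beta)}$ and $\mathbf{B}=L\otimes(R^{\T})^{(1-\alpha)(1-\beta)}$, and then neither slot of the perspective has usable structure: $\mathbf{B}$ is not affine in $(L,R)$, and $\mathbf{A}$ is the tensor product of the operator concave factor $R^{\beta}$ with the operator \emph{convex} factor $(R^{\T})^{(1-\alpha)(1-\beta)}$ (the exponent lies in $[-1,0]$), so $\mathbf{A}$ is in general neither concave nor convex in $R$. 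The substitution principle you invoke (jointly convex, antitone in the weight slot, \emph{concave} substitution implies convexity is preserved) works only in that one direction; for a convex factor sitting in the antitone slot the inequality runs the wrong way, and no bookkeeping of exponent ranges repairs this. Likewise, ``interpolating'' between $\beta=0$ and $\beta=1$ is not a valid step: joint convexity at two parameter values implies nothing at intermediate ones. Your final paragraph correctly identifies all of this as the crux, but the proof stops exactly where the work begins.

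The fix is to choose the decomposition so that the transpose factor does \emph{not} cancel inside: take $h(L)=L\otimes\ident$ (positive and affine, with no $R$-dependence) and $g(R)=R^{\beta}\otimes(R^{\T})^{1-\beta}$, whose two exponents both lie in $[0,1]$ and sum to $1$, so that $g$ is positive and operator concave by Lemma~\ref{lem:rotimesr} --- this is exactly what that lemma was designed to supply. Then $g(R)^{-1/2}\,h(L)\,g(R)^{-1/2}=\bigl(R^{-\beta/2}LR^{-\beta/2}\bigr)\otimes(R^{\T})^{-(1-\beta)}$, and after applying $x\mapsto x^{\alpha}$ and sandwiching with $g(R)^{1/2}$ the transpose exponent becomes $(1-\beta)-\alpha(1-\beta)=(1-\alpha)(1-\beta)$, recovering $F$; the exponent in the statement comes from the sandwich, not from a cancellation. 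Now \emph{all} of the nonlinear $R$-dependence sits, as a single positive operator concave map, in the weight slot, while the first slot is affine, and Theorem~5.14 of \cite{wolf-ln} (operator convex $f$, positive affine $h$, positive operator concave $g$ give a jointly operator convex composition) --- whose content is precisely the antitonicity-plus-concave-substitution argument you sketched --- yields the claim in one step. This is the paper's proof.
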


\begin{proof}
    By Lemma \ref{lem:rotimesr}, the map $g: \mathcal{P} \ni R \longmapsto R^\beta \otimes \left(R^{\T}\right)^{1-\beta}$ is operator concave. It is also positive. Moreover, $h: \mathcal{P} \ni L \longmapsto L \otimes \ident$ is positive and affine. Since $f(x) = x^\alpha$ is operator convex, Theorem 5.14 from \cite{wolf-ln} proves the claim.
\end{proof}

\begin{lem}
  \label{lem:alpha12}
  Let $\alpha \in [1,2]$. The functional $\exp((\alpha-1)\Dnew_{\alpha}(\cdot\|\cdot)$ is jointly convex.
\end{lem}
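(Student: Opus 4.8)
The plan is to exhibit the scalar functional $\exp\bigl((\alpha-1)\Dnew_{\alpha}(\rho\|\sigma)\bigr)$ as a fixed matrix element of the jointly operator convex map $F$ supplied by Lemma~\ref{lem:cancel}, and then to conclude by positivity of that matrix element. The case $\alpha=1$ is trivial, since then the functional is the constant $1$, so assume $\alpha\in(1,2]$ and set $\beta:=\frac{\alpha-1}{\alpha}$, so that $\beta\in(0,\tfrac12]\subset[0,1]$ and $\aaa=-\frac{\beta}{2}$. Since by Proposition~\ref{pr:dp-jc} we only need joint convexity on normalized states, we may assume $\tr[\rho]=1$, for which
\[
  \exp\bigl((\alpha-1)\Dnew_{\alpha}(\rho\|\sigma)\bigr) = \tr\bigl[(\sigma^{-\beta/2}\rho\,\sigma^{-\beta/2})^{\alpha}\bigr].
\]
I would first assume $\sigma>0$, so that the negative powers are well defined and $(L,R)=(\rho,\sigma)$ lies in the domain $\mathcal{P}\oplus\mathcal{P}_+$ of Lemma~\ref{lem:cancel}, and recover the general case $\sigma\geq 0$ at the end by continuity (Lemma~\ref{prop:dalphalimittoll}); the subcase $\sigma\not\gg\rho$ is vacuous because then the right-hand side of the convexity inequality is $+\infty$.

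The key step is to recognize the displayed trace as a matrix element of $F$. Let $\ket{\Omega}=\sum_i\ket{ii}$ be the unnormalized maximally entangled vector on $\HH\otimes\HH$, for which $\bra{\Omega}\,A\otimes B^{\T}\,\ket{\Omega}=\tr[AB]$. With $\beta$ as above one checks $(1-\alpha)(1-\beta)=-\beta$, so that the map of Lemma~\ref{lem:cancel} evaluated at $(L,R)=(\rho,\sigma)$ reads
\[
  F(\rho,\sigma) = \sigma^{\beta/2}\bigl(\sigma^{-\beta/2}\rho\,\sigma^{-\beta/2}\bigr)^{\alpha}\sigma^{\beta/2}\otimes(\sigma^{\T})^{-\beta}.
\]
Contracting with $\ket{\Omega}$ and using cyclicity of the trace, the outer $\sigma^{\pm\beta/2}$ factors cancel against $\sigma^{-\beta}$, giving
\[
  \bra{\Omega}\,F(\rho,\sigma)\,\ket{\Omega} = \tr\bigl[\sigma^{\beta/2}(\sigma^{-\beta/2}\rho\,\sigma^{-\beta/2})^{\alpha}\sigma^{\beta/2}\sigma^{-\beta}\bigr] = \tr\bigl[(\sigma^{-\beta/2}\rho\,\sigma^{-\beta/2})^{\alpha}\bigr],
\]
which is exactly $\exp\bigl((\alpha-1)\Dnew_{\alpha}(\rho\|\sigma)\bigr)$.

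Joint convexity then follows immediately. Lemma~\ref{lem:cancel} asserts that $F$ is jointly operator convex, i.e.\ for $\lambda\in[0,1]$ and strictly positive $\sigma_1,\sigma_2$,
\[
  F\bigl(\lambda(\rho_1,\sigma_1)+(1-\lambda)(\rho_2,\sigma_2)\bigr) \leq \lambda F(\rho_1,\sigma_1)+(1-\lambda)F(\rho_2,\sigma_2)
\]
in the positive-semidefinite order. Since $M\mapsto\bra{\Omega}M\ket{\Omega}$ is a positive linear functional, it preserves this inequality, yielding joint convexity of $\exp\bigl((\alpha-1)\Dnew_{\alpha}(\cdot\|\cdot)\bigr)$ on normalized states with $\sigma>0$. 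A final approximation $\sigma_i\mapsto\sigma_i+\epsilon\ident$ together with continuity of $\Dnew_{\alpha}$ extends the conclusion to all $\sigma\geq 0$.

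The only genuinely non-routine ingredient is the identification in the second paragraph. The tensor factor $(\sigma^{\T})^{(1-\alpha)(1-\beta)}$ appearing in Lemma~\ref{lem:cancel} looks artificial, but the choice $\beta=\frac{\alpha-1}{\alpha}$ makes its exponent equal to $-\beta$, which is precisely what is needed for the maximally entangled contraction to collapse $F$ back onto the trace defining $\Dnew_{\alpha}$. I therefore expect the verification that $(1-\alpha)(1-\beta)=-\beta$, together with the cyclicity cancellation of the $\sigma^{\pm\beta/2}$ factors, to be the conceptual crux; everything else—the reduction to normalized strictly positive operators, and the passage from operator convexity to scalar convexity via a positive functional—is mechanical.
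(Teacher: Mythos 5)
Your proposal is correct and follows essentially the same route as the paper: the same choice $\beta=\frac{\alpha-1}{\alpha}$, the same key identity $(1-\alpha)(1-\beta)=-\beta$, and the same contraction of the jointly operator convex map of Lemma~\ref{lem:cancel} with the unnormalized maximally entangled vector to recover $\exp\bigl((\alpha-1)\Dnew_{\alpha}(\cdot\|\cdot)\bigr)$ as a positive matrix element. The only difference is that you spell out the routine edge cases ($\alpha=1$, normalization, and the approximation $\sigma+\epsilon\ident$ for singular $\sigma$) which the paper leaves implicit.
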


\begin{proof}
Let $\gamma$ be defined by $\ket{\gamma} = \sum_i |i\rangle \otimes |i \rangle$, where $\{|i\rangle\}$ is some orthonormal basis. Define $\beta := 1-1/\alpha \in [0,1/2]$ and note that $\beta + (1-\beta)(1-\alpha) = 0$. We use this to express 
  \begin{align*}
    &\Big\langle \gamma \Big| R^{\beta/2}\left( R^{-\beta/2}LR^{-\beta/2}\right)^\alpha R^{\beta/2} \otimes (R^{\T})^{(1-\alpha)(1-\beta)} \Big| \gamma \Big\rangle \\
    \begin{split}
    &\qquad = \Tr[R^{\beta/2}\left( R^{-\beta/2}LR^{-\beta/2}\right)^\alpha R^{\beta/2}R^{(1-\alpha)(1-\beta)}] \\
    &\qquad= \Tr[(R^{-\beta/2}LR^{-\beta/2})^\alpha R^{\beta+(1-\alpha)(1-\beta)}] \\
    &\qquad= \Tr[(R^{\frac{1}{2\alpha}-\frac{1}{2}} L R^{\frac{1}{2\alpha}-\frac{1}{2}})^\alpha]\\
    &\qquad= \exp((\alpha-1)\Dnew_{\alpha}(L\|R)).
    \end{split}
  \end{align*}
Now we apply Lemma \ref{lem:cancel} to this quantity to conclude the proof of the lemma.
\end{proof}

\subsection{Monotonicity in $\alpha$}
\label{sec:proof-mono}

We now show that $\Dnew_{\alpha}(\rho\|\sigma)$ is monotonous in $\alpha$ by first proving it for the auxiliary quantity $\Dnew_{\alpha}(\rho \| \sigma; \tau)$, introduced in Definition \ref{def:ueber-rel-entropy}.
\begin{lem} \label{lm:aux-mono}
  Let $\rho \geq 0$ with $\tr[\rho] = 1$, and let $\sigma, \tau \geq 0$. Then, $\alpha \mapsto \Dnew_{\alpha}(\rho\|\sigma; \tau)$ is monotonically increasing in $\alpha$. 
\end{lem}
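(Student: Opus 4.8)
The plan is to reparametrize so that monotonicity in $\alpha$ becomes the log-convexity of a single trace function. I introduce $u := \frac{\alpha-1}{\alpha} = 1-\frac1\alpha$, which is a strictly increasing bijection of $(0,1)\cup(1,\infty)$ onto $(-\infty,0)\cup(0,1)$, so that monotonicity in $\alpha$ is equivalent to monotonicity in $u$. Since $\frac{1-\alpha}{\alpha}=-u$ and $\frac{\alpha}{\alpha-1}=\frac1u$, Definition~\ref{def:ueber-rel-entropy} reads, on its finite branch,
\[ \Dnew_{\alpha}(\rho\|\sigma;\tau) = \frac{g(u)}{u}, \qquad g(u) := \log\tr\big(\rho^{\demi}\sigma^{-u}\rho^{\demi}\tau^{u}\big). \]
Assuming first that $\sigma,\tau>0$, one has $g(0)=\log\tr[\rho]=0$, so $g(u)/u$ is the slope of the chord of $g$ joining $0$ to $u$. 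For a convex function this chord slope is non-decreasing in the free endpoint, so it suffices to prove that $g$ is convex, i.e. that $h(u):=\tr(\rho^{\demi}\sigma^{-u}\rho^{\demi}\tau^{u})$ is log-convex in $u$ (note $h(u)\ge 0$, being the trace of a product of two positive operators).

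I would establish log-convexity of $h$ by complex interpolation via the Hadamard three-line theorem. Fix $u_0,u_1$, set $u(z)=(1-z)u_0+zu_1$, and define the entire, strip-bounded function
\[ F(z) := \tr\big(\rho^{\demi}\sigma^{-u(z)}\rho^{\demi}\tau^{u(z)}\big), \qquad 0\le\Re z\le 1, \]
so that $F(\theta)=h\big((1-\theta)u_0+\theta u_1\big)$ for real $\theta\in[0,1]$. On the left boundary $z=iy$ one has $\sigma^{-u(iy)}=\sigma^{-u_0}U_y$ and $\tau^{u(iy)}=\tau^{u_0}V_y$ with $U_y=\sigma^{-iy(u_1-u_0)}$ and $V_y=\tau^{iy(u_1-u_0)}$ unitary. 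The crucial observation is that $U_y$ commutes with $\sigma$ and $V_y$ with $\tau$. Writing $F(iy)=\tr(P_yQ_y)$ with $P_y=\tau^{u_0/2}V_y\rho^{\demi}\sigma^{-u_0/2}$ and $Q_y=\sigma^{-u_0/2}U_y\rho^{\demi}\tau^{u_0/2}$, applying Cauchy--Schwarz for the Hilbert--Schmidt inner product, and using $U_y^{\dagger}\sigma^{-u_0}U_y=\sigma^{-u_0}$ (resp.\ $V_y^{\dagger}\tau^{u_0}V_y=\tau^{u_0}$), the unitaries cancel and one gets $\|P_y\|_2^2=\|Q_y\|_2^2=h(u_0)$, hence $|F(iy)|\le h(u_0)$; symmetrically $|F(1+iy)|\le h(u_1)$. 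The three-line theorem then gives $h\big((1-\theta)u_0+\theta u_1\big)=|F(\theta)|\le h(u_0)^{1-\theta}h(u_1)^{\theta}$, which is exactly log-convexity of $h$, and therefore convexity of $g$.

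The main obstacle is controlling the boundary values of $F$ in the non-commutative setting, and this is precisely what the above commutation argument resolves: the purely imaginary part of each exponent contributes a unitary commuting with the operator it is attached to, so that after Cauchy--Schwarz it drops out and both boundary lines are bounded by the genuine real values $h(u_0),h(u_1)$. It then remains to remove the temporary assumption $\sigma,\tau>0$ and to treat the $\pm\infty$ branches of Definition~\ref{def:ueber-rel-entropy}. For this I would restrict all operators to the relevant supports, where the generalized inverses are true inverses, and argue by continuity in the spirit of Lemma~\ref{prop:dalphalimittoll}; the behaviour near $u=0$ (i.e.\ $\alpha\to1$) is governed by $g'(0)$, which is the von Neumann relative entropy and matches the continuous extension. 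Finally, the $+\infty$ branch (for $\alpha>1$) and the $-\infty$ branch (for $\alpha<1$) lie at the top and bottom of the range respectively and are thus consistent with monotonicity, so the statement extends to all of $(0,1)\cup(1,\infty)$.
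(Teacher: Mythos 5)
Your proof is correct, and it takes a genuinely different route from the paper's at the decisive step. Both arguments begin the same way in substance: the paper also substitutes $\beta=\frac{\alpha-1}{\alpha}$ (your $u$) and reduces the claim to monotonicity of $\beta\mapsto\frac{1}{\beta}\log h(\beta)$ for $h(\beta)=\tr\big(\rho^{\demi}\sigma^{-\beta}\rho^{\demi}\tau^{\beta}\big)$. But where you prove log-convexity of $h$ and then invoke monotonicity of the chord slope of the convex function $g=\log h$ through $g(0)=0$, the paper purifies $\rho$, writes $h(\beta)=\bra{\phi}X^{\beta}\ket{\phi}$ with $X=\sigma^{-1}\otimes\tau^{T}$, differentiates $\frac{1}{\beta}\log h(\beta)$ directly, and shows the derivative is non-negative by Jensen's inequality for the convex function $x\mapsto x\log x$. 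Your three-line argument is sound---I checked the boundary estimate: the imaginary parts of the exponents produce unitaries $U_y,V_y$ commuting with $\sigma$ resp.\ $\tau$, so after Hilbert--Schmidt Cauchy--Schwarz one indeed gets $\|P_y\|_2^2=\|Q_y\|_2^2=h(u_0)$, hence $|F(iy)|\le h(u_0)$ and $|F(1+iy)|\le h(u_1)$---and it is in the spirit of the complex-interpolation proofs in Beigi's concurrent work~\cite{beigi13new}. What each buys: the paper's derivative-plus-Jensen computation is shorter once the purification is introduced and needs no complex analysis; your method works directly at the trace level and avoids differentiability considerations altogether. Note, however, that the purification (or simply expanding in eigenbases of $\sigma$ and $\tau$) would also trivialize your key lemma: $h(u)=\sum_{j,k}|\bra{s_j}\rho^{\demi}\ket{t_k}|^2\,\mu_j^{-u}\nu_k^{u}$ is a positive combination of exponentials in $u$, hence log-convex by H\"older, so the three-line theorem is heavier machinery than strictly needed. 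Finally, your treatment of the degenerate cases is at the same level of rigor as the paper's: your observation that the support conditions in Definition~\ref{def:ueber-rel-entropy} do not depend on $\alpha$ within each regime---so the $+\infty$ (resp.\ $-\infty$) branch holds for all $\alpha>1$ (resp.\ all $\alpha<1$) simultaneously and sits at the correct extreme---is correct, and the remaining support-restriction/continuity step you defer to the spirit of Lemma~\ref{prop:dalphalimittoll} is essentially the same gap the paper leaves when it first assumes $\sigma,\tau\gg\rho$ and then dismisses the remaining cases in one line.
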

\begin{proof}
  First assume $\sigma, \tau \gg \rho$.
  Let $\ket{\phi}$ be a purification of $\rho$. Furthermore, set $\beta = \frac{\alpha-1}{\alpha}$ and $X = \sigma^{-1} \otimes \tau^{T}$, where the transpose is taken with regards to the Schmidt bases of $\ket{\phi}$. Then,
  \begin{align*}
    \Dnew_{\alpha}(\rho\|\sigma;\tau) = \frac{1}{\beta} \log \tr\big( \rho^{\demi} \sigma^{-\beta} \rho^{\demi} \tau^{\beta} \big) =
    \frac{1}{\beta} \log \bra{\phi} X^{\beta} \ket{\phi} \,.
  \end{align*}
  Since $\frac{\mathrm{d}\beta}{\mathrm{d}\alpha} = \frac{1}{\alpha^2}$, we have
  \begin{align*}
    \frac{\mathrm{d}}{\mathrm{d}\alpha} \Dnew_{\alpha}(\rho\|\sigma;\tau) = \frac{1}{\alpha^2} \frac{\mathrm{d}}{\mathrm{d}\beta} \underbrace{\bigg( \frac{1}{\beta} \log \bra{\phi} X^{\beta} \ket{\phi} \bigg)}_{f(\beta)}
  \end{align*}
  Now we proceed similarly to~\cite[Lm.~3]{tomamichel08}, where monotonicity for $\Dold_{\alpha}$ is shown. We find
  \begin{align*}
    \frac{\mathrm{d}}{\mathrm{d}\beta} f(\beta) &= -\frac{1}{\beta^2} \log \bra{\phi}X^{\beta}\ket{\phi} + \frac{1}{\beta \bra{\phi}X^{\beta}\ket{\phi}} \bra{\phi}X^{\beta} \log X\ket{\phi} \\
    &= \frac{\bra{\phi}g(X^{\beta})\ket{\phi} - g(\bra{\phi}X^{\beta}\ket{\phi})}{\beta^2 \bra{\phi}X^{\beta}\ket{\phi}}
  \end{align*}
  Here, $g: x \mapsto x \log x$ is a convex function and Jensen's inequality thus implies that the derivative is non-negative. If $\sigma \not\gg \rho$, then $\Dnew_{\alpha}(\rho\|\sigma,\tau) = \infty$ for $\alpha > 1$. Similarly, if $\tau \not\gg \rho$, then $\Dnew_{\alpha}(\rho\|\sigma,\tau) = -\infty$ for $\alpha < 1$. The proposition thus holds trivially in these cases, concluding the proof.
\end{proof}

\begin{proof}[Proof of Theorem~\ref{thm:mono}]
  It suffices to show this property for normalized $\rho$ and $\sigma$. Then, for any $\alpha' \geq \alpha$ and for all $\tau \geq 0$, Lemma~\ref{lm:aux-mono} implies that $D_{\alpha'}(\rho\|\sigma; \tau) \geq \Dnew_{\alpha}(\rho\|\sigma; \tau)$. In particular, this holds true for the $\tau$ maximizing $\Dnew_{\alpha}(\rho\|\sigma)$ in~\eqref{eq:ueber-to-rel}, which concludes the proof.
\end{proof}

\subsection{Duality of the Conditional R\'enyi Entropy}
\label{sec:proof-dual}

\begin{proof}[Proof of Theorem \ref{thm:dual}]
    We write $\rho_{ABC} = \ket{\phi}\!\bra{\phi}$. Let $\ket{\phi} = \sum_i r_i \ket{i}_{AB} \otimes \ket{i}_C$ be a Schmidt decomposition for $\ket{\phi}$ and define $\ket{\psi} := \sum_i \ket{i}_{AB} \otimes \ket{i}_C$ as the (unnormalized) maximally entangled state in these bases.  It is easy to verify that
    \begin{align*}
      \tr\big( \rho_{AB}^{\demi} C_{AB} \rho_{AB}^{\demi} D_{AB} \big) = \bra{\psi} \rho_{AB}^{\demi} C_{AB} \rho_{AB}^{\demi} \otimes D_{C} \ket{\psi} = \bra{\phi} C_{AB} \otimes D_C \ket{\phi} ,
    \end{align*}
    where $D_C$ is the transpose of $D_{AB}$ with regards to the bases $\{\ket{i}_{AB}\}$ and $\{\ket{i}_C\}$, i.e. $D_{AB} \otimes \id_C \ket{\psi} = \id_{AB} \otimes D_C \ket{\psi}$.

 In the following, the suprema and infima are taken over operators $\sigma_B, \tau_C \geq 0$ with $\tr[\sigma_B] \leq 1$ and $\tr[\tau_C] \leq 1$.     Thus, using Lemma~\ref{lem:extended-norm-duality}, we then derive the following convenient representation of the conditional R\'enyi entropy as a minimax problem.
     \begin{align}
        \Hnew_{\alpha}(A|B)_{\rho} &= 
        \begin{cases}
        \frac{\alpha}{1-\alpha} \log \sup_{\sigma_B}  \Big\| \rho_{AB}^{\demi} \id_A \otimes\, \sigma_B^{\frac{1}{\alpha}-1} \rho_{AB}^{\demi} \Big\|_{\alpha} & \text{if $\alpha<1$}\\
        \frac{\alpha}{1-\alpha} \log \inf_{\sigma_B}  \Big\| \rho_{AB}^{\demi} \id_A \otimes\, \sigma_B^{\frac{1}{\alpha}-1} \rho_{AB}^{\demi} \Big\|_{\alpha} & \text{if $\alpha>1$} 
    \end{cases}\  \nonumber\\
        &= 
        \begin{cases}
        \frac{\alpha}{1-\alpha} \log \sup_{\sigma_B} \inf_{\tau_C} \Big\langle \phi \Big| \id_A \otimes\, \sigma_B^{\frac{1}{\alpha}-1} \otimes \tau_C^{1-\frac{1}{\alpha}} \Big|\phi\Big\rangle & \text{if $\alpha<1$}\\
        \frac{\alpha}{1-\alpha} \log \inf_{\sigma_B} \sup_{\tau_C} \Big\langle \phi \Big| \id_A \otimes\, \sigma_B^{\frac{1}{\alpha}-1} \otimes \tau_C^{1-\frac{1}{\alpha}} \Big|\phi\Big\rangle & \text{if $\alpha>1$} 
    \end{cases}\  \label{eq:symmetric-bc} .
    \end{align}
 Furthermore, note that $\frac{\alpha}{1-\alpha} = - \frac{\beta}{1-\beta}$ so that \eqref{eq:symmetric-bc} is concave in $\sigma_B$ and convex in $\tau_C$ if $\alpha < 1$ and the other way around if $\alpha > 1$. This means that the infimum and supremum can be interchanged in both cases~\cite{sion58}. Hence, written in this form, the expressions for $\Hnew_{\alpha}(A|B)_{\rho}$ and $-\Hnew_{\beta}(A|C)_{\rho}$ coincide, which establishes the claim.
\end{proof}

\subsection{Conditioning on Classical Information}
\label{sec:proof-class}

\begin{proof}[Proof of Proposition~\ref{pr:class}]
We consider a normalized tripartite state $\rho_{ABY}$ with a classical $Y$, i.e., $\rho_{ABY} = \bigoplus_y p_y \rho_{AB}^y$. 
Recall that by definition, 
$$
\Hnew_{\alpha}(A|BY)_{\rho} = -\inf_{\sigma_{BY}} \Dnew_{\alpha}(\rho_{ABY}\|\id_A \otimes\, \sigma_{BY})
$$
where the infimum is over all (normalized) states $\sigma_{BY}$, but due to data processing (we can measure the $Y$-register, which does not affect $\rho_{ABY}$), we can restrict to states $\sigma_{BY}$ with classical~$Y$. Using the decomposition of $\Dnew_{\alpha}$ into the divergences of the conditional states, we then obtain
\begin{align*}
\Hnew_{\alpha}(A|BY)_{\rho} &= -\inf_{\{q_y\},\{\sigma_{B}^y\}} \frac{1}{\alpha-1} \log \sum_y p_y^\alpha q_y^{1-\alpha} \exp\Bigl((\alpha-1) \Dnew_{\alpha}(\rho_{AB}^y\|\id_A \otimes \,\sigma_B^y)\Bigr) \\
&= \inf_{\{q_y\}} \frac{1}{1-\alpha}\log \sum_y p_y^\alpha q_y^{1-\alpha} \exp\Bigl((1-\alpha) \Hnew_{\alpha}(A|B)_{\rho^y}\Bigr)  \, .
\end{align*}
Writing $r_y = p_y \exp\bigl(\frac{1-\alpha}{\alpha} \Hnew_{\alpha}(A|B)_{\rho^y}\bigr)$, and using straightforward Lagrange multiplier technique, one can show that the infimum is attained by $\hat q_y = r_y/\sum_z r_z$. Thus, 
\begin{align*}
\Hnew_{\alpha}(A|BY)_{\rho} &= \frac{1}{1-\alpha}\log \sum_y r_y^\alpha \hat q_y^{1-\alpha} 
= \frac{1}{1-\alpha}\log\sum_y r_y^\alpha \biggl(\frac{\sum_z r_z}{r_y}\biggr)^{\alpha-1}  \\
&= \frac{\alpha}{1-\alpha}\log \sum_y r_y = \frac{\alpha}{1-\alpha}\log \sum_y p_y \exp\Bigl(\textstyle\frac{1-\alpha}{\alpha} \Hnew_{\alpha}(A|B)_{\rho^y}\Bigr)  \, . \qedhere
\end{align*}
\end{proof}

\paragraph*{Acknowledgments.}

We thank M.~Wilde for comments on an early draft.
FD acknowledges support from the Danish National Research Foundation and The National Science Foundation of China (under the grant 61061130540) for the Sino-Danish Center for the Theory of Interactive Computation, within which part of this work was performed; and also from the CFEM research center (supported by the Danish Strategic Research Council) within which part of this work was performed. OS acknowledges financial support from the Elite Network of Bavaria, project QCCC. MT is funded by the Ministry of Education (MOE) and National Research Foundation Singapore, as well as MOE Tier 3 Grant ``Random numbers from quantum processes'' (MOE2012-T3-1-009).


\begin{thebibliography}{10}

\bibitem{ariki90}
H.~Araki.
\newblock {On an inequality of Lieb and Thirring.}
\newblock {\em Lett.~Math.~Phys.}, 19(2), pp. 167-170, 1990.

\bibitem{arimoto77}
S.~Arimoto.
\newblock {Information measures and capacity of order alpha for discrete memoryless channels}.
\newblock {\em Topics in Inf.\ Theory}, 17, pp. 41–52, 1977. 

\bibitem{audenaert12}
K.~M.~R. Audenaert, M.~Mosonyi, and F.~Verstraete.
\newblock {Quantum state discrimination bounds for finite sample size}.
\newblock {\em J. Math. Phys.}, 53(12):122205, 2012.

\bibitem{beigi13new}
S.~Beigi.
\newblock {Quantum R\'{e}nyi divergence satisfies data processing inequality}.
\newblock June 2013.
\newblock \texttt{arXiv:1306.5920}.

\bibitem{berta10}
M.~Berta, M.~Christandl, R.~Colbeck, J.~M. Renes, and R.~Renner.
\newblock {The uncertainty principle in the presence of quantum memory}.
\newblock {\em Nat. Phys.}, 6(9):659--662, 2010.

\bibitem{bhatia97}
R.~Bhatia.
\newblock {\em {Matrix Analysis}}.
\newblock Graduate Texts in Mathematics. Springer, 1997.

\bibitem{colbeck11}
P.~J. Coles, R.~Colbeck, L.~Yu, and M.~Zwolak.
\newblock {Uncertainty relations from simple entropic properties}.
\newblock {\em Phys. Rev. Lett.}, 108(21):210405, 2012.

\bibitem{csiszar95}
I.~Csiszar.
\newblock {Generalized cutoff rates and Renyi's information measures}.
\newblock {\em IEEE Trans. on Inf. Theory}, 41(1):26--34, 1995.

\bibitem{datta08}
N.~Datta.
\newblock {Min- and max-relative entropies and a new entanglement monotone}.
\newblock {\em IEEE Trans. on Inf. Theory}, 55(6):2816--2826, 2009.

\bibitem{datta13}
N.~Datta, F.~Leditzky.
\newblock {A limit of the quantum R\'enyi divergence}.
\newblock August 2013.
\newblock \texttt{arXiv:1308.5961}.

\bibitem{dupuis13}
F.~Dupuis, O.~Fawzi, and S.~Wehner.
\newblock {Entanglement sampling and applications}.
\newblock May 2013.
\newblock \texttt{arXiv:1305.1316}.


\bibitem{frank13}
R.~L. Frank and E.~H. Lieb.
\newblock {Monotonicity of a relative R\'{e}nyi entropy}.
\newblock June 2013. 
\newblock \texttt{arXiv:1306.5358v2}.




\bibitem{HMO08}
F.~Hiai, M.~Mosonyi and T.~Ogawa.
\newblock {Error exponents in hypothesis testing for correlated states on a spin chain}.
\newblock {\em J. Math. Phys.}, 49:032112, 2008.
\newblock \texttt{arXiv:0707.2020}

\bibitem{klein31}
O.~Klein.
\newblock {Zur Quantenmechanischen Begr\"{u}ndung des zweiten Hauptsatzes der
  W\"{a}rmelehre}.
\newblock {\em Z. Phys.}, 72(11-12):767--775, 1931.

\bibitem{koenig08}
R.~K\"{o}nig, R.~Renner, and C.~Schaffner.
\newblock {The Operational Meaning of Min- and Max-Entropy}.
\newblock {\em IEEE Trans. on Inf. Theory}, 55(9):4337--4347, 2009.

\bibitem{koenig09b}
R.~K\"{o}nig and S.~Wehner.
\newblock {A strong converse for classical channel coding using entangled
  inputs}.
\newblock {\em Phys. Rev. Lett.}, 103(7):070504, 2009.


\bibitem{liebthirring}
E.~Lieb and E.~Thirring.
\newblock Studies in mathematical physics.
\newblock {\em Princeton University Press}, pages 269--297, 1976.

\bibitem{lieb73}
E.~H. Lieb and M.~B. Ruskai.
\newblock {Proof of the strong subadditivity of quantum-mechanical entropy}.
\newblock {\em J. Math. Phys.}, 14(12):1938, 1973.

\bibitem{LinCom2}
G.~Lindblad.
\newblock Expectations and entropy inequalities for finite quantum systems.
\newblock {\em Commun. math. Phys}, 39:111--119, 1974.

\bibitem{LinCom}
G.~Lindblad.
\newblock Completely positive maps and entropy inequalities.
\newblock {\em Commun. math. Phys}, 40:147--151, 1975.

\bibitem{maassen88}
H.~Maassen and J.~Uffink.
\newblock {Generalized entropic uncertainty relations}.
\newblock {\em Phys. Rev. Lett.}, 60(12):1103--1106, 1988.


\bibitem{MD09}
M.~Mosonyi and N.~Datta.
\newblock {Generalized relative entropies and the capacity of classical-quantum channels}.
\newblock {\em J. Math. Phys}, 50:072104, 2009.

\bibitem{MH11}
M.~Mosonyi and F.~Hiai.
\newblock {On the quantum Renyi relative entropies and related capacity formulas}.
\newblock {\em IEEE Trans. on Inf. Theory}, 57:2474–2487, 2011.

\bibitem{MO13}
M.~Mosonyi and T.~Ogawa.
\newblock {Quantum hypothesis testing and the operational interpretation of the quantum Renyi relative entropies}.
\newblock 2013.
\newblock \texttt{arXiv:1309.3228}.

\bibitem{martinthesis}
M.~M\"{u}ller-Lennert.
\newblock {\em {Quantum relative R\'{e}nyi entropies}}.
\newblock Master thesis, ETH Zurich, 2013.

\bibitem{lennert13a}
M.~M\"{u}ller-Lennert, F.~Dupuis, O.~Szehr, S.~Fehr, and M.~Tomamichel.
\newblock {On quantum R\'{e}nyi entropies: a new definition, some properties
  and several conjectures}.
\newblock June 2013.
\newblock \texttt{arXiv:1306.3142v1}

\bibitem{ogawa00}
T.~Ogawa and H.~Nagaoka.
\newblock {Strong converse and Stein's lemma in quantum hypothesis testing}.
\newblock {\em IEEE Trans. on Inf. Theory}, 46(7):2428--2433, 2000.

\bibitem{ogawa04}
T.~Ogawa and M.~Hayashi.
\newblock {On error exponents in quantum hypothesis testing}.
\newblock {\em IEEE Trans. on Inf. Theory}, 50(6):1368-1372, 2004.

\bibitem{boguslaw}
R.~Olkiewicz and B.~Zegarlinski.
\newblock {Hypercontractivity in noncommutative Lp spaces}.
\newblock {\em Journal of Functional Analysis}, 161(1):246--285, 1999.

\bibitem{petz86}
D.~Petz.
\newblock {Quasi-Entropies for finite quantum Systems}.
\newblock {\em Rep. Math. Phys.}, 23:57--65, 1984.

\bibitem{renner05}
R.~Renner.
\newblock {\em {Security of quantum Key distribution}}.
\newblock PhD thesis, ETH Zurich, 2005.

\bibitem{renyi61}
A.~R\'{e}nyi.
\newblock {On measures of information and entropy}.
\newblock In {\em Proc. Symp. on Math., Stat. and Probability}, pages 547--561,
  Berkeley, 1961. University of California Press.

\bibitem{MBRRev}
M.-B. Ruskai.
\newblock Inequalities for quantum entropy: A review with conditions for
  equality.
\newblock {\em Jour. Math. Phys.}, 43:9:58--76, 2002.

\bibitem{shannon48}
C.~Shannon.
\newblock {A mathematical theory of communication}.
\newblock {\em Bell Syst. Tech. J.}, 27:379--423, 1948.

\bibitem{sion58}
M.~Sion.
\newblock {On general minimax theorems}.
\newblock {\em Pacific J. Math.}, 8:171--176, 1958.

\bibitem{Stine}
W.F. Stinespring.
\newblock Positive functions on $C^*$-algebras.
\newblock {\em Proc. Amer. Math. Soc.}, 6:211--216, 1955.

\bibitem{mythesis}
M.~Tomamichel.
\newblock {\em {A Framework for non-asymptotic quantum information theory}}.
\newblock PhD thesis, ETH Zurich, 2012.

\bibitem{mytutorial12}
M.~Tomamichel.
\newblock {\em Focus Tutorial: Smooth min/max-entropies}, at QCrypt 2012. 
\newblock Slides available online at \texttt{http://2012.qcrypt.net/program.html}.

\bibitem{tomamichel08}
M.~Tomamichel, R.~Colbeck, and R.~Renner.
\newblock {A fully quantum asymptotic equipartition property}.
\newblock {\em IEEE Trans. on Inf. Theory}, 55(12):5840--5847, 2009.

\bibitem{tomamichel11}
M.~Tomamichel and R.~Renner.
\newblock {Uncertainty relation for smooth entropies}.
\newblock {\em Phys. Rev. Lett.}, 106(11), 2011.

\bibitem{UhlEndl}
A.~Uhlmann.
\newblock Endlich Dimensionale Dichtematrizen, II.
\newblock {\em Wiss. Z. Karl-Marx-
University Leipzig, Math-Naturwiss.}, R. 22, Jg. H. 2, 139, 1973.

\bibitem{UhlRel}
A.~Uhlmann.
\newblock Relative entropy and the Wigner-Yanase-Dyson-Lieb concavity in an
  interpolation theory.
\newblock {\em Commun. math. Phys}, 54:21--32, 1977.

\bibitem{wilde13}
M.~M. Wilde, A.~Winter, and D.~Yang.
\newblock {Strong converse for the classical capacity of entanglement-breaking
  channels}.
\newblock June 2013.
\newblock \texttt{arXiv:1306.1586}.

\bibitem{wolf-ln}
M.~M. Wolf.
\newblock {Quantum channels \& operations: Guided tour}, 2012.


\makeatletter
\providecommand \@ifxundefined [1]{%
 \@ifx{#1\undefined}
}%
\providecommand \@ifnum [1]{%
 \ifnum #1\expandafter \@firstoftwo
 \else \expandafter \@secondoftwo
 \fi
}%
\providecommand \@ifx [1]{%
 \ifx #1\expandafter \@firstoftwo
 \else \expandafter \@secondoftwo
 \fi
}%
\providecommand \natexlab [1]{#1}%
\providecommand \enquote  [1]{``#1''}%
\providecommand \bibnamefont  [1]{#1}%
\providecommand \bibfnamefont [1]{#1}%
\providecommand \citenamefont [1]{#1}%
\providecommand \href@noop [0]{\@secondoftwo}%
\providecommand \href [0]{\begingroup \@sanitize@url \@href}%
\providecommand \@href[1]{\@@startlink{#1}\@@href}%
\providecommand \@@href[1]{\endgroup#1\@@endlink}%
\providecommand \@sanitize@url [0]{\catcode `\\12\catcode `\$12\catcode
  `\&12\catcode `\#12\catcode `\^12\catcode `\_12\catcode `\%12\relax}%
\providecommand \@@startlink[1]{}%
\providecommand \@@endlink[0]{}%
\providecommand \url  [0]{\begingroup\@sanitize@url \@url }%
\providecommand \@url [1]{\endgroup\@href {#1}{\urlprefix }}%
\providecommand \urlprefix  [0]{URL }%
\providecommand \Eprint [0]{\href }%
\providecommand \doibase [0]{http://dx.doi.org/}%
\providecommand \selectlanguage [0]{\@gobble}%
\providecommand \bibinfo  [0]{\@secondoftwo}%
\providecommand \bibfield  [0]{\@secondoftwo}%
\providecommand \translation [1]{[#1]}%
\providecommand \BibitemOpen [0]{}%
\providecommand \bibitemStop [0]{}%
\providecommand \bibitemNoStop [0]{.\EOS\space}%
\providecommand \EOS [0]{\spacefactor3000\relax}%
\providecommand \BibitemShut  [1]{\csname bibitem#1\endcsname}%
\let\auto@bib@innerbib\@empty
\bibitem [{Note2()}]{Note2}%
  \BibitemOpen
  \bibinfo {note} {In contrast, the mean $H(\rho \oplus \sigma ) = \protect
  \qopname \relax m{max}\protect \{ H(\rho ), H(\sigma ) \protect \}$ would
  lead to a quantity that is not continuous.}\BibitemShut {Stop}%
\bibitem [{Note3()}]{Note3}%
  \BibitemOpen
  \bibinfo {note} {A function $f(X,Y)$ is jointly convex if, for any $\lambda
  \in [0,1]$ and normalized $X_1, X_2, Y_1, Y2 \geq 0$, we have $f(\lambda X_1
  + (1-\lambda ) X_2, \lambda Y_1 + (1-\lambda )Y_2) \leq \lambda f(X_1,Y_1) +
  (1-\lambda ) f(X_2,Y_2)$.}\BibitemShut {Stop}%
\end{thebibliography}
\end{document}